\newif\ifIEEE
\newif\ifPAGELIMIT
\IEEEtrue
\PAGELIMITfalse
\ifPAGELIMIT
\IEEEtrue
\fi
\ifIEEE
    \ifPAGELIMIT
        \documentclass[conference,final,letterpaper]{IEEEtran}
    \else
        \documentclass[journal,final,letterpaper]{IEEEtran}
    \fi
    \newcommand{\bibauthor}[1]{#1}
    \newcommand{\bibpaper}[1]{``#1''}
    \newcommand{\Footnotetext}[2]
    {
        \begin{figure}[!b]
        \footnotesize\vspace{-3ex}\hrulefill\hfill
        \makebox[0em]{}\hfill\makebox[0em]{}%
                                          \par${}^{#1}$ #2\vspace{-.6ex}
        \end{figure}
        \addtocounter{figure}{0}
     }
\else
    \documentclass[10pt]{article}
    \sloppy
    \textheight8.5in
    \textwidth6.5in
    \hoffset -0.85in 
    \voffset -0.6in
    \parskip 8pt
    \newcommand{\bibauthor}[1]{\textsc{#1}}
    \newcommand{\bibpaper}[1]{\textsl{#1}}
    \newenvironment{IEEEkeywords}{\begin{small}%
                                  \textbf{Index Terms} ---}{\end{small}}
\fi
\usepackage{amsfonts,bm}
\usepackage{amsthm,amsmath,amssymb}
\usepackage{pict2e}
\usepackage{color}
\newcommand{\bibbook}[1]{\textit{#1}}

\newcommand{\bibperiodical}[1]{\textit{#1}}

\ifPAGELIMIT
    \setlength{\abovedisplayskip}{1.00ex plus 3pt minus 1pt}
    \setlength{\belowdisplayskip}{\abovedisplayskip}
    \setlength{\abovedisplayshortskip}{0pt plus 3pt}
    \setlength{\belowdisplayshortskip}{1.00ex plus 3pt minus 1pt}
    \setlength{\textfloatsep}{10pt plus 1pt minus 2pt}
\fi

\newtheorem{theorem}{\indent Theorem}

\newtheorem{lemma}[theorem]{\indent Lemma}
\newtheorem{corollary}[theorem]{\indent Corollary}

\theoremstyle{remark}
\newtheorem{remark}{\indent Remark}
\theoremstyle{definition}
\newtheorem{example}{\indent Example}
\renewcommand{\mathbf}[1]{{\bm{#1}}}     
\newcommand{\bldzero}{{\mathbf{0}}}
\newcommand{\bldone}{{\mathbf{1}}}
\newcommand{\blda}{{\mathbf{a}}}
\newcommand{\bldb}{{\mathbf{b}}}
\newcommand{\bldc}{{\mathbf{c}}}
\newcommand{\blde}{{\mathbf{e}}}
\newcommand{\bldh}{{\mathbf{h}}}
\newcommand{\bldg}{{\mathbf{g}}}
\newcommand{\bldm}{{\mathbf{m}}}
\newcommand{\bldr}{{\mathbf{r}}}
\newcommand{\blds}{{\mathbf{s}}}
\newcommand{\bldu}{{\mathbf{u}}}
\newcommand{\bldx}{{\mathbf{x}}}
\newcommand{\bldy}{{\mathbf{y}}}

\newcommand{\bldalpha}{{\mathbf{\alpha}}}
\newcommand{\bldrho}{{\mathbf{\rho}}}
\newcommand{\randomH}{{\mathbf{\mathit{H}}}}
\newcommand{\randomP}{{\mathbf{\mathit{P}}}}
\newcommand{\code}{{\mathcal{C}}}
\newcommand{\Ball}{{\mathcal{B}}}
\newcommand{\rank}{{\mathrm{rank}}}
\newcommand{\GF}{{\mathrm{GF}}}
\newcommand{\GRS}{{\mathrm{GRS}}}
\newcommand{\MDS}{{\mathrm{MDS}}}
\newcommand{\light}{{\mathrm{light}}}
\newcommand{\Event}{{\mathcal{A}}}
\newcommand{\DD}{{\mathcal{D}}}
\newcommand{\LL}{{\mathcal{L}}}

\newcommand{\Support}{{\mathsf{Supp}}}
\newcommand{\Prob}{{\mathsf{Prob}}}
\newcommand{\PP}{{\mathsf{P}}}
\newcommand{\Integers}{{\mathbb{Z}}}

\newcommand{\Realfield}{{\mathbb{R}}}

\newcommand{\weight}{{\mathsf{w}}}
\newcommand{\entropy}{{\mathsf{h}}}
\newcommand{\Entropy}{{\mathsf{H}}}

\newcommand{\bigcupdot}%
                 {{\textstyle{\bigcup\!\!\!\!\!\hspace{0.25ex}\cdot\;}}}

\newcommand{\Title}{Higher-Order MDS Codes}
\newcommand{\Namea}{Ron M. Roth}
\newcommand{\Addressa}{Computer Science Department}
\newcommand{\Addressatwo}{Technion, Haifa 3200003, Israel}

\newcommand{\Emaila}{ronny@cs.technion.ac.il}
\newcommand{\Grant}{This work was supported by 
                    Grant~1713/20 from
                    the Israel Science Foundation.}
\newcommand{\Thnxa}{\Namea\ is with the \Addressa, \Addressatwo.
                    Email: \Emaila}

\begin{document}
\ifIEEE
    \title{\Title}
       \ifPAGELIMIT
           \author{\IEEEauthorblockN{\Namea\vspace{-1ex}}\\
                   \IEEEauthorblockA{\Addressa,
                                     \Addressatwo\\
                                     \Emaila\vspace{-2ex}}
           }
       \else
           \author{\Namea
                   \thanks{\Grant}
                   \thanks{\Thnxa}
           }
       \fi
\else
    \title{\textbf{\Title}\thanks{\Grant}}
    \author{\textsc{\Namea}\thanks{\Thnxa}}
    \date{}
\fi
\maketitle

\begin{abstract}
An improved Singleton-type upper bound is presented for
the list decoding radius of linear codes, in terms of
the code parameters $[n,k,d]$ and the list size~$L$.
$L$-MDS codes are then defined as codes that attain this bound
(under a slightly stronger notion of list decodability),
with $1$-MDS codes corresponding to ordinary linear MDS codes.
Several properties of such codes are
\ifPAGELIMIT
    presented, one of which is
\else
presented; in particular, it is shown
\fi
that the $2$-MDS property is preserved under duality.
Finally, explicit constructions for $2$-MDS codes are presented
through generalized Reed--Solomon (GRS) codes.
\end{abstract}

\ifPAGELIMIT
\else
\begin{IEEEkeywords}
List decoding,
MDS codes,
Reed--Solomon codes,
Singleton bound.
\end{IEEEkeywords}
\fi

\ifPAGELIMIT
    \Footnotetext{\quad}{\Grant}
\fi

\section{Introduction}
\label{sec:introduction}

Hereafter, we let~$F$ be the finite field $\GF(q)$.
Let~$\code$ be a code in $F^n$ and let $L \in \Integers^+$
and $\tau \in \Integers_{\ge 0}$ be given.
We say that~$\code$ is \emph{$(\tau,L)$-list decodable} if
for every $\bldy \in F^n$ there are no $L+1$ distinct codewords
of~$\code$ at Hamming distance${} \le \tau$ from~$\bldy$, i.e.,
\[
\bigl| \code \cap \left( \bldy + \Ball(n,\tau) \right) \bigr| \le L ,
\]
where $\Ball(n,\tau)$ is the set of vectors in $F^n$
with Hamming weight${} \le \tau$.
When~$\code$ is a linear $[n,k]$ code over~$F$,
it is $(\tau,L)$-list decodable if
there are no $L+1$ distinct vectors
\ifPAGELIMIT
    $\blde_0, \blde_1, \ldots, \blde_L \in \Ball(n,\tau)$
    that are in the same coset of~$\code$ within $F^n$.
\else
\[
\blde_0, \blde_1, \ldots, \blde_L \in \Ball(n,\tau)
\]
that are in the same coset of~$\code$ within $F^n$.
\fi
Equivalently, if~$H$ denotes a parity-check matrix of~$\code$,
the syndromes $H \blde_m^\top$ for $m \in \{ 0, 1, \ldots, L \}$
are not all equal.

The notion of list decoding was introduced by Elias~\cite{Elias1}
and has been an active area of research
in the last 25 years, where the focus has been primarily 
on constructing list decodable codes with efficient
decoding algorithms; among the most notable contributions
one can mention~\cite{GR}, \cite{GS}, \cite{PV},
and~\cite{Sudan}.
In addition, several papers presented bounds on
the parameters of list decodable codes, mostly
in an asymptotic setting~\cite{Blinovskii},
\cite{Guruswami}, \cite{GN}, \cite{GV}.

This work will be focusing on non-asymptotic bounds for list
decoding and on characterizations of (linear) codes that attain them.
\ifPAGELIMIT
\else
We recall next the well-known sphere-packing bound
for list decoding, which was proved in~\cite{Elias2}.
We use $V_q(n,\tau)$ to denote the size (volume) of $\Ball(n,\tau)$:
\[
V_q(n,\tau) = |\Ball(n,\tau)|
= \sum_{i=0}^\tau \binom{n}{i} (q-1)^i .
\]

\begin{theorem}[List decoding sphere-packing bound]
\label{thm:SpherePacking}
If~$\code \subseteq F^n$ is $(\tau,L)$-list decodable, then
\[
|\code| \le L \cdot \frac{q^n}{V_q(n,\tau)} .
\]
\end{theorem}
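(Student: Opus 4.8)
The plan is to prove this by a double-counting (averaging) argument over incidences between codewords and the vectors lying in their Hamming balls of radius~$\tau$. Concretely, I would form the set of ordered pairs
\[
\mathcal{S} = \bigl\{ (\bldc, \bldy) \in \code \times F^n \; : \; \bldy - \bldc \in \Ball(n,\tau) \bigr\}
\]
and estimate $|\mathcal{S}|$ in two different ways, once by summing over~$\bldc$ and once by summing over~$\bldy$.

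First, fixing a codeword $\bldc \in \code$, the number of $\bldy$ with $(\bldc,\bldy) \in \mathcal{S}$ equals the number of vectors at Hamming distance at most~$\tau$ from~$\bldc$, namely $|\bldc + \Ball(n,\tau)| = |\Ball(n,\tau)| = V_q(n,\tau)$, which does not depend on~$\bldc$. Summing over all codewords gives $|\mathcal{S}| = |\code| \cdot V_q(n,\tau)$. Second, fixing $\bldy \in F^n$, the number of $\bldc$ with $(\bldc,\bldy) \in \mathcal{S}$ is $\bigl| \code \cap (\bldy + \Ball(n,\tau)) \bigr|$ --- here one uses that $\Ball(n,\tau)$ is closed under negation, so that $\bldy - \bldc \in \Ball(n,\tau)$ exactly when $\bldc \in \bldy + \Ball(n,\tau)$ --- and this quantity is at most~$L$ by the hypothesis that~$\code$ is $(\tau,L)$-list decodable. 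Summing over the $q^n$ choices of~$\bldy$ then yields $|\mathcal{S}| \le L \cdot q^n$.

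Combining the two expressions gives $|\code| \cdot V_q(n,\tau) \le L \cdot q^n$; since $V_q(n,\tau) \ge 1$ (it counts at least the all-zero vector), dividing through yields $|\code| \le L \cdot q^n / V_q(n,\tau)$, as claimed. I do not expect any genuine obstacle here: the argument is purely combinatorial, and the only point needing a word of justification is the invariance of the Hamming ball under $\bldy \mapsto -\bldy$, which is what allows the ``distance-from-codeword'' count and the ``distance-from-received-word'' count to refer to the same incidence set~$\mathcal{S}$.
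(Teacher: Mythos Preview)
Your argument is correct and is the standard averaging proof of this bound. The paper itself does not give a proof of this theorem; it simply cites Elias~\cite{Elias2}, so there is nothing to compare against beyond noting that your double-counting is exactly the classical argument.
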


In particular, if~$\code$ is a linear $[n,k]$ code over~$F$, we have:
\begin{equation}
\label{eq:SpherePacking}
q^{n-k} \ge \frac{1}{L} \cdot V_q(n,\tau)
\ge
\frac{1}{L} \cdot \left( \frac{n(q-1)}{\tau} \right)^\tau .
\end{equation}
\fi

The following theorem was proved in~\cite[Theorem~1.2]{ST}
(see also~\cite[Theorem~2.6]{RV}).

\begin{theorem}[List decoding Singleton bound]
\label{thm:Singleton}
If~$\code \subseteq F^n$ is $(\tau,L)$-list decodable, then
\begin{equation}
\label{eq:Singleton1}
|\code| \le L \cdot q^{n - \tau - \lfloor \tau/L \rfloor} .
\end{equation}
In particular, if~$\code$ is a linear $[n,k]$ code over~$F$
and $L < q$, then
\[
\tau + \left\lfloor \frac{\tau}{L} \right\rfloor
\le n-k .
\]
\end{theorem}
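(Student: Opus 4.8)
The plan is to prove the bound \eqref{eq:Singleton1} by a packing/averaging argument on a well-chosen projection, and then derive the linear consequence as a corollary. First I would pick any set $S \subseteq \{1, 2, \ldots, n\}$ of $\tau$ coordinates and let $\pi_S \colon F^n \to F^{n-\tau}$ be the projection that deletes the coordinates in $S$. The key observation is that if two codewords $\bldc, \bldc' \in \code$ agree outside $S$ (i.e.\ $\pi_S(\bldc) = \pi_S(\bldc')$), then $\bldc$ and $\bldc'$ differ only within $S$, so $\distance(\bldc, \bldc') \le \tau$; more generally, the fibers of $\pi_S$ restricted to $\code$ are exactly the sets of codewords that pairwise lie within distance $\tau$ of each other on $S$. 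The refinement that pushes past the naive Singleton bound is to not delete all of $S$ at once, but to handle the last $\lfloor \tau/L \rfloor$ coordinates more carefully, exploiting that a list-decoding ball of radius $\tau$ can host at most $L$ codewords.

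The main step is a counting argument. Let $r = \lfloor \tau/L \rfloor$ and split $S = S_0 \cupdot S_1$ with $|S_0| = \tau$ and $|S_1| = r$ chosen from the remaining coordinates (assuming $n \ge \tau + r$; otherwise \eqref{eq:Singleton1} is trivial since $|\code| \le q^n$). Consider the projection $\pi$ deleting $S_0$; its image lies in $F^{n-\tau}$, so there are at most $q^{n-\tau}$ fibers. Within each fiber, all codewords agree off $S_0$, hence pairwise are within distance $\tau$; I then want to argue that on top of this, passing to the further projection that also deletes $S_1$ cannot collapse too many of them, and conversely that a single fiber cannot be too large. The cleanest route is: for a fixed value $\bldy$ on the complement of $S_0 \cup S_1$, the codewords of $\code$ projecting to $\bldy$ there all lie in a common Hamming ball of radius $\tau$ in $F^n$ (the ball centred at any one of them, since they differ only on $S_0 \cup S_1$, a set of size $\tau + r \le \tau + \tau/L$ — wait, this needs $r$ to be absorbed, so instead center cleverly), whence $(\tau,L)$-list decodability bounds their number by $L$. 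Summing over the $q^{n-\tau-r}$ choices of $\bldy$ would give $|\code| \le L \cdot q^{n-\tau-r} = L \cdot q^{n-\tau-\lfloor\tau/L\rfloor}$, which is exactly \eqref{eq:Singleton1}. The delicate point is that a set of $\tau + r$ coordinates is generally too large to fit in a radius-$\tau$ ball, so the centering argument must be sharpened: one should average over which $r$ of the $\tau + r$ coordinates are "extra", or use that among any $L+1$ codewords agreeing off a size-$(\tau+r)$ set, some pair agrees on an additional coordinate by a pigeonhole on the $r$ slack positions versus the $L$ codewords — this is where the floor $\lfloor \tau/L \rfloor$ enters.

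For the linear consequence, suppose $\code$ is a linear $[n,k]$ code with $L < q$. Then $|\code| = q^k$, so \eqref{eq:Singleton1} gives $q^k \le L \cdot q^{n-\tau-\lfloor\tau/L\rfloor} < q \cdot q^{n-\tau-\lfloor\tau/L\rfloor} = q^{n-\tau-\lfloor\tau/L\rfloor+1}$, hence $k < n - \tau - \lfloor\tau/L\rfloor + 1$, i.e.\ $k \le n - \tau - \lfloor\tau/L\rfloor$, which rearranges to $\tau + \lfloor\tau/L\rfloor \le n - k$ as claimed. I expect the main obstacle to be making the pigeonhole/averaging in the middle paragraph rigorous: one must carefully choose the coordinate splitting (possibly randomizing it and taking an expectation, or iterating the argument $r$ times, shaving one coordinate per round while keeping a family of $L+1$ codewords that still pairwise collide) so that the slack of $r = \lfloor \tau/L \rfloor$ positions is exactly what a list of size $L$ inside a radius-$\tau$ ball can afford. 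Everything else — the projection setup and the final arithmetic — is routine.
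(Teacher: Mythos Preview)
The paper does not prove Theorem~\ref{thm:Singleton}; it is quoted from~\cite[Theorem~1.2]{ST} (with a pointer also to~\cite[Theorem~2.6]{RV}). So there is no ``paper's own proof'' to compare against, and your proposal must stand on its own.

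Your overall architecture is the right one, and your derivation of the linear consequence from~\eqref{eq:Singleton1} is correct. The gap is exactly where you flag it: the centering step. Your two suggested fixes---averaging over which $r$ coordinates are ``extra'', or a pigeonhole forcing two of the $L+1$ codewords to agree on an extra coordinate---do not close the argument. The pairwise-agreement pigeonhole in particular leads nowhere: you need \emph{all} $L+1$ codewords inside a single radius-$\tau$ ball, not a statement about some pair.

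The missing idea is a direct construction of the center. Write $r = \lfloor \tau/L \rfloor$, fix a set $T$ of $\tau + r$ coordinates, and suppose $L+1$ codewords $\bldc_0,\ldots,\bldc_L$ agree on $[n]\setminus T$. Since $(L+1)r \le \tau + r$, one can partition $T$ into $L+1$ blocks $T_0,\ldots,T_L$ with $|T_m| \ge r$ for every $m$. Define $\bldy \in F^n$ by $(\bldy)_{T_m} = (\bldc_m)_{T_m}$ for each $m$ and $(\bldy)_{[n]\setminus T} = (\bldc_0)_{[n]\setminus T}$. Then $\bldy$ and $\bldc_m$ can differ only on $T \setminus T_m$, so
\[
\distance(\bldy,\bldc_m) \le |T| - |T_m| \le (\tau + r) - r = \tau ,
\]
contradicting $(\tau,L)$-list decodability. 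Hence each fiber of the projection onto $[n]\setminus T$ contains at most $L$ codewords of~$\code$, giving $|\code| \le L \cdot q^{n-\tau-r}$, which is~\eqref{eq:Singleton1}. This is essentially the argument in~\cite{ST}; once you insert this partition-and-patch construction of $\bldy$, your proof is complete.
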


   From
$\left\lfloor\tau/L\right\rfloor = \left\lceil(\tau-L+1)/L\right\rceil$
we get that the latter inequality is equivalent to
\begin{equation}
\label{eq:Singleton2}
\tau \le \frac{L(n-k) + L-1}{L+1} .
\end{equation}
As part of this work, we present
conditions under which~(\ref{eq:Singleton2}) can be improved to
\begin{equation}
\label{eq:Singleton3}
\tau \le \frac{L(n-k)}{L+1} ,
\end{equation}
and obtain constructions that attain
either~(\ref{eq:Singleton2}) or~(\ref{eq:Singleton3}).

\ifPAGELIMIT
\else
\begin{remark}
\label{rem:comparison}
Looking at the floor values of
the right-hand sides of~(\ref{eq:Singleton2}) 
and~(\ref{eq:Singleton3}), it is fairly easy to see that the latter
is smaller by~$1$ than the former, except when
$n-k \equiv 0 \; \textrm{or} \; L \pmod{(L+1)}$.\qed
\end{remark}

\begin{remark}
\label{rem:SpherePacking}
Theorem~\ref{thm:SpherePacking}
is sometimes stronger than~(\ref{eq:Singleton3}).
E.g., from~(\ref{eq:SpherePacking}) it follows
that the bound~(\ref{eq:Singleton3})
can be attained for a given~$L$ only if
\[
q = \Omega \left( \left( \frac{n}{\tau} \right)^L \right)
= \Omega \left( \left( \frac{n}{n-k} \right)^L \right)
\]
(where the hidden constants in the $\Omega(\cdot)$ terms
depend on~$L$).\qed
\end{remark}
\fi

Improving the bounds~(\ref{eq:Singleton1}) and~(\ref{eq:Singleton2})
is also the subject of the very recent work~\cite{GST}.
In particular, it is shown there that
when $L \le q$ and $L^2 \le \tau < L n/(L+1)$,
the multiplier~$L$ in~(\ref{eq:Singleton1}) can be replaced by
$1 + O(L/\tau)$ (or even by~$1$ when $\tau \equiv L-1 \pmod{L}$).
In Remark~\ref{rem:gst} below,
we will say more about the results of~\cite{GST}
and their relationship with our work.

The recent paper~\cite{ST} shows that
there exist generalized Reed-Solomon (GRS) codes
that attain the bound~(\ref{eq:Singleton3}) for $L = 2, 3$,
provided that the field size is sufficiently large (relative to~$n$).
In addition,
that paper presents an explicit construction of GRS codes
that attains~(\ref{eq:Singleton3}) for $L = 2$
(although the size of the underlying field grows like $2^{k^n}$).
Some of our results herein build upon~\cite{ST}.
\ifPAGELIMIT
\else
We also mention the paper~\cite{JH}
where the authors make a heuristic argument that could suggest that
when~$\tau$, $L$, $n$, and~$k$ satisfy the strict inequality
\[
\tau + \frac{\tau}{L} < n-k ,
\]
then, for sufficiently large~$q$,
there exist linear $[n,k]$ codes over~$F$ that are
$(\tau,L)$-list decodable.

One simple construction that attains the bound~(\ref{eq:Singleton2}) is
the repetition code of certain lengths.

\begin{example}
\label{ex:repetition}
For any given $L, u \in \Integers^+$,
the bound~(\ref{eq:Singleton2}) is attained
by the $[n,1]$ repetition code~$\code$ of length $n = (L + 1)u - 1$,
which is $(\tau,L)$-list decodable for
$\tau = (L n - 1)/(L+1) = L u - 1$.
Specifically, for any $\bldy \in F^n$, the intersection
$\code \cap \left( \bldy + \Ball(n,\tau) \right)$
contains all the codewords
$(c \, c\, \ldots \, c) \in \code$ that agree with~$\bldy$
on at least $n-\tau =  u$ coordinates; clearly, there can be
no more than $\lfloor n/u \rfloor = L$ such codewords
(see also~\cite[p.~175]{RR}).\qed
\end{example}
\fi

\ifPAGELIMIT
    We next describe the results of this work. 
    Our first batch of results consists of conditions under which
    (\ref{eq:Singleton3}) holds (here we only present
    some of the results, with no proofs---the latter can be
    found in~\cite[\S II]{Roth}).
    The next theorem
\else
In Sections~\ref{sec:intro-bounds}
and~\ref{sec:intro-LMDS} we describe the results of this work.

\subsection{Summary of improved bounds}
\label{sec:intro-bounds}

We first prove in Section~\ref{sec:bounds} the next theorem,
which
\fi
states that when $L < q$,
the bound~(\ref{eq:Singleton3}) can be attained
(or surpassed) only by MDS codes.
Hereafter, $[a:b]$ stands
for the
\ifPAGELIMIT
    set
\else
integer subset
\fi
$\left\{ i \in \Integers \,:\, a \le i \le b \right\}$,
with $[b]$ being a shorthand notation for $[1:b]$.

\begin{theorem}
\label{thm:onlyMDS}
Given $L \in [q-1]$, let~$\code$ be a linear $[n,k]$ code over~$F$
and let $\tau \in \Integers_{\ge 0}$ be such that
\begin{equation}
\ifPAGELIMIT
    \nonumber
\else
\label{eq:onlyMDS}
\fi
\tau \ge \frac{L(n-k)}{L+1} .
\end{equation}
Then~$\code$ is $(\tau,L)$-list decodable
\underline{only if} $\code$ is MDS.
\end{theorem}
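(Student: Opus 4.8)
The plan is to establish the contrapositive: if~$\code$ is \emph{not} MDS, then (under the hypothesis $\tau \ge L(n-k)/(L+1)$) there are $L+1$ distinct vectors of $\Ball(n,\tau)$ lying in a single coset of~$\code$, so by the characterization of list decodability for linear codes recalled in Section~\ref{sec:introduction}, $\code$ is not $(\tau,L)$-list decodable. The starting observation is that a non-MDS code has minimum distance at most $n-k$ by the Singleton bound, so~$\code$ contains a nonzero codeword~$\bldc$ of some weight $w \le n-k$. Put $S := \Support(\bldc)$, so $|S| = w$.

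All $L+1$ vectors will be sought inside one coset of the one-dimensional subcode $\langle\bldc\rangle \subseteq \code$. First I would take a vector $\blde_0$ supported on~$S$, to be specified, and note that for any $\beta \in F$ the $j$-th coordinate of $\blde_0 + \beta\bldc$ is zero for every $j \notin S$, while for $j \in S$ it equals $\bigl((\blde_0)_j/c_j + \beta\bigr)\,c_j$, which vanishes exactly when $\beta = \beta_j := -(\blde_0)_j/c_j$. Hence
\[
\weight(\blde_0 + \beta\bldc) \;=\; w - \bigl|\{\, j \in S : \beta_j = \beta \,\}\bigr| .
\]
Since the entries $(\blde_0)_j$ for $j \in S$ may be chosen freely, the multiset $\{\beta_j : j \in S\}$ of size~$w$ may be prescribed arbitrarily in~$F$. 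I would use the hypothesis $L \le q-1$, i.e.\ $L+1 \le q$, to place this multiset on $L+1$ \emph{distinct} values $v_0, \dots, v_L \in F$ with multiplicities as equal as possible, so that each $v_i$ occurs at least $\lfloor w/(L+1)\rfloor$ times. Then $\blde_0 + v_0\bldc, \dots, \blde_0 + v_L\bldc$ are pairwise distinct (because $\bldc \ne \bldzero$), all lie in the coset $\blde_0 + \code$, and each has weight at most $w - \lfloor w/(L+1)\rfloor$.

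It remains to compare this with~$\tau$. A short calculation gives $w - \lfloor w/(L+1)\rfloor = \lceil Lw/(L+1)\rceil$; and since $w \le n-k$ while $\tau$ is an integer with $\tau \ge L(n-k)/(L+1) \ge Lw/(L+1)$, we get $\tau \ge \lceil Lw/(L+1)\rceil$. Thus the $L+1$ vectors constructed above all have weight $\le \tau$ and lie in a common coset of~$\code$, which contradicts $(\tau,L)$-list decodability and proves the theorem. The degenerate situations $\code = \{\bldzero\}$ and $k = n$ need not be treated separately, since such codes are MDS by convention.

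I expect the only genuinely delicate point to be this balancing argument together with the closing inequality: one must check that, whatever the residue $w \bmod (L+1)$, spreading the~$w$ ``cancelling scalars'' $\beta_j$ over $L+1$ values keeps every resulting weight at most~$\tau$. This comes down to the elementary identity $w - \lfloor w/(L+1)\rfloor = \lceil Lw/(L+1)\rceil$ and the integrality of~$\tau$; the field-size condition $L \le q-1$ is used only to guarantee that $L+1$ distinct scalars are available in~$F$, and nothing about~$\code$ beyond its minimum distance being $\le n-k$ enters.
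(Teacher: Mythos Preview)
Your proof is correct and is essentially the same argument as the paper's: both pick a minimum-weight codeword~$\bldc$ of weight $w = d \le n-k$, then craft a vector whose coordinates on $\Support(\bldc)$ spread the ``cancelling scalars'' over $L+1$ distinct field elements (using $L+1 \le q$), so that each of the $L+1$ resulting coset representatives has weight at most $w - \lfloor w/(L{+}1)\rfloor = \lceil Lw/(L{+}1)\rceil \le \tau$. The paper phrases this via a received word~$\bldy$ and nearby scalar multiples of~$\bldc$, whereas you phrase it via error vectors $\blde_0 + v_i\bldc$ in a common coset, but the two viewpoints are related by the translation $\blde_m = \bldy - \bldc_m$ and the computations are identical.
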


\ifPAGELIMIT
    The next theorem
\else
\begin{remark}
\label{rem:tau<d}
When $L \in [q-1]$, a linear $[n,k,d]$ code is $(\tau,L)$-list decodable
only if $\tau < d$.
Indeed, when $\tau \ge d$, all~$q$ scalar multiples of
any nonzero codeword belong
to (the trivial coset)~$\code$ thereby implying that $L \ge q$.\qed
\end{remark}

It turns out that the condition $L \in [q-1]$ is not too limiting:
we state (in Lemma~\ref{lem:L<q-1} in Section~\ref{sec:bounds})
that when the code rate is bounded away from~$0$ and from~$1$,
the inequality~(\ref{eq:onlyMDS}) can hold only if $L < q-1$
(and, by Theorem~\ref{thm:onlyMDS}, the code is then necessarily MDS),
unless~$L$ is exponentially large in~$n$.
That may justify our focus on (linear) MDS codes in this work.

The next theorem, which we also prove in Section~\ref{sec:bounds},
\fi
establishes the improvement~(\ref{eq:Singleton3})
on~(\ref{eq:Singleton2}) for a wide range of parameters of MDS codes.

\begin{theorem}
\label{thm:SingletonImproved}
Given $L \in \Integers^+$,
let~$\code$ be a linear $[n,k{<}n]$ MDS code over~$F$ and write
\begin{equation}
\ifPAGELIMIT
    \nonumber
\else
\label{eq:ur}
\fi
n-k = (L+1)u + r ,
\end{equation}
where $u \in \Integers_{\ge 0}$ and
\ifPAGELIMIT
    $r \in [L+1]$.
\else
$r \in [L+1]$.\footnote{%
Namely, $r$ is the ordinary remainder of $n-k$ when divided by $L+1$,
except when $L+1$ divides $n-k$, in which case $r = L+1$.}
\fi
Suppose in addition that
\begin{equation}
\label{eq:Lcondition}
L \le \binom{k-1+u+r}{k-1} .
\end{equation}
Then~$\code$ is $(\tau,L)$-list decodable
\underline{only if} (\ref{eq:Singleton3}) holds.
\end{theorem}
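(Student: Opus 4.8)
I would argue the contrapositive. Suppose $\tau>L(n-k)/(L+1)$; writing $n-k=(L+1)u+r$ as in the statement, a short check (using $r\le L+1$) shows the least integer exceeding $L(n-k)/(L+1)=Lu+Lr/(L+1)$ is $Lu+r$, so $\tau\ge Lu+r$ and $n-(Lu+r)=k+u$. The goal is then to exhibit $\bldy\in F^n$ together with $L+1$ distinct codewords of $\code$, each within Hamming distance $Lu+r\le\tau$ of $\bldy$; this forces $\bigl|\code\cap(\bldy+\Ball(n,\tau))\bigr|\ge L+1$, i.e.\ $\code$ is not $(\tau,L)$-list decodable. After translating the coset I may take one of these codewords to be $\bldzero$, so it is enough to produce $\bldy$ with $\weight(\bldy)\le Lu+r$ and pairwise distinct nonzero codewords $\bldc_1,\dots,\bldc_L$, each agreeing with $\bldy$ in at least $k+u$ coordinates.

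The configuration I would try: choose a coordinate $j^\ast$ and pairwise disjoint $u$-subsets $V_1,\dots,V_L$ of $[n]\setminus\{j^\ast\}$, set $S=\{j^\ast\}\cup V_1\cup\dots\cup V_L$, and let $A=[n]\setminus S$. Then $|S|=1+Lu\le Lu+r$ and, crucially, $|A|=n-1-Lu=k-1+u+r$. Here~(\ref{eq:Lcondition}) enters: since $L\le\binom{k-1+u+r}{k-1}=\binom{|A|}{k-1}$, I can fix $L$ distinct $(k-1)$-subsets $Z_1,\dots,Z_L\subseteq A$. As $\code$ is MDS, each $Z_i$ is the zero set of a codeword that is unique up to a nonzero scalar and vanishes nowhere off $Z_i$; since $j^\ast\notin Z_i$ I can rescale it to a codeword $\bldc_i$ with $\bldc_i(j^\ast)=\gamma$, for one fixed $\gamma\in F\setminus\{0\}$. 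Finally I set $\bldy(j^\ast)=\gamma$, $\bldy|_{V_i}=\bldc_i|_{V_i}$ for each $i$, and $\bldy|_A=\bldzero$.

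The verification should be routine. One has $\weight(\bldy)\le|S|=1+Lu\le Lu+r$, so $\bldzero$ is within $Lu+r$ of $\bldy$. For each $i$, $\bldc_i$ agrees with $\bldy$ on the $k-1$ coordinates of $Z_i$ (both vanish there, as $Z_i\subseteq A$), on $j^\ast$, and on the $u$ coordinates of $V_i$; these sets are pairwise disjoint, so $\bldc_i$ and $\bldy$ agree in at least $(k-1)+1+u=k+u$ coordinates, hence $\weight(\bldy-\bldc_i)\le n-(k+u)=Lu+r$. And $\bldzero,\bldc_1,\dots,\bldc_L$ are pairwise distinct, their zero sets $[n],Z_1,\dots,Z_L$ being distinct.

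The hard part is discovering this configuration rather than checking it. The decisive idea is to take the support of $\bldy$ to be a one-point-core sunflower $\{j^\ast\}\cup\bigcup_iV_i$: this is exactly what pins $|A|$ to $k-1+u+r$, matching the binomial in~(\ref{eq:Lcondition}) (so that $L$ distinct ``support patterns'' $Z_i$ are available), while the single shared coordinate $j^\ast$ is what makes the $L$ rescalings mutually consistent, and the agreement on the disjoint petals $V_i$ comes for free. A point to watch is that the argument must use only the supports $Z_i$ and not the actual entries of the chosen codewords, so that it applies to every MDS code and every $q\ge 2$; in particular $\bldc_i$ generally will not agree with $\bldy$ on $V_{i'}$ for $i'\ne i$, but that is harmless since only a lower bound of $k+u$ on the number of agreements is needed. (Condition~(\ref{eq:Lcondition}) cannot simply be dropped: at $k=1$ it forces $L=1$, in accordance with the fact that the repetition codes of Example~\ref{ex:repetition} violate~(\ref{eq:Singleton3}) for $L\ge2$ while still being $(\tau,L)$-list decodable.)
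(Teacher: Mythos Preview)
Your proof is correct and follows essentially the same approach as the paper's: your $j^\ast$, $A$, $V_i$, $Z_i$, $\bldy$, and $\bldc_i$ correspond directly to the paper's coordinate~$1$, the set $[2:s]$ (with $s=k+u+r$), the subsets $T_m$, the zero sets of the chosen minimum-weight codewords, the vector $\blde_0$, and the codewords $\bldc_m$, respectively. The only cosmetic difference is that you phrase the construction in terms of a received word $\bldy$ and nearby codewords, whereas the paper works with the equivalent error vectors $\blde_m=\blde_0-\bldc_m$ lying in a common coset.
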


The following corollary presents concrete ranges of parameters
that satisfy the inequality~(\ref{eq:Lcondition})
(and, thus, (\ref{eq:Singleton3}) must hold in these ranges).

\begin{corollary}
\label{cor:SingletonImproved}
Using the notation of Theorem~\ref{thm:SingletonImproved},
if the code~$\code$ therein is $(\tau,L)$-list decodable
then~(\ref{eq:Singleton3}) holds
in any one of the following cases:
\begin{list}{}{\settowidth{\labelwidth}{\textrm{(a)}}}
\itemsep.5ex
\item[(a)]
$\displaystyle k \ge L$,
\item[(b)]
$\displaystyle k \ge 2$ and $\displaystyle n \ge (L+1)h + k$,
where~$h$ is the smallest nonnegative integer that
\ifPAGELIMIT
    satisfies
\else
satisfies\footnote{%
When $k \ge L$, cases~(a) and~(b) coincide.
Otherwise, as~$k$ increases from~$2$ to $L-1$,
the value of~$h$ decreases from $L-2$ to~$1$
and, respectively, the lower threshold, $(L+1)h + k$,
on~$n$ decreases from $L^2 - L$ to $2L$.}
\fi
\[
\binom{k+h}{k-1} \ge L ,
\]
\item[(c)]
$\displaystyle k \ge 2$ and
$\displaystyle n-k \equiv 0, L-1, \;\textrm{or}\; L \pmod{(L+1)}$,
\item[(d)]
$\displaystyle n-k-1 \le L \le \binom{n-1}{k-1}$.
\end{list}
\end{corollary}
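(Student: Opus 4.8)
The plan is to reduce the corollary to verifying the single inequality~(\ref{eq:Lcondition}) in each of the four cases. By Theorem~\ref{thm:SingletonImproved}, it suffices to check that $L \le \binom{k-1+u+r}{k-1}$ holds under each of the hypotheses (a)--(d); the stated conclusion then follows at once. I will use only the Pascal-type monotonicity $\binom{m+1}{j+1}\ge\binom{m}{j}$ (hence $\binom{a+t}{c+t}\ge\binom{a}{c}$) and $\binom{a+1}{c}\ge\binom{a}{c}$, together with the trivial facts that $u+r\ge 1$ (because $r\ge 1$) and that $n-k\le L+1$ forces $u=0$ and $r=n-k$.

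Cases (a), (c), (d) should each be essentially immediate. In (a), $u+r\ge 1$ gives $\binom{k-1+u+r}{k-1}\ge\binom{k}{k-1}=k\ge L$. In (d), the hypothesis $n-k-1\le L$ together with $n-k\ge 1$ forces $u=0$, $r=n-k$, so $\binom{k-1+u+r}{k-1}=\binom{n-1}{k-1}\ge L$ by the remaining half of the hypothesis. In (c), I will first observe that each of the three residue conditions on $n-k$ forces $r\ge L-1$: one gets $r=L+1$ when $(L+1)\divides n-k$, $r=L$ when $n-k\equiv L\pmod{L+1}$, and $r=L-1$ (or $r=L+1$ in the degenerate case $L=1$) when $n-k\equiv L-1\pmod{L+1}$. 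Hence $u+r\ge L-1$, and therefore $\binom{k-1+u+r}{k-1}\ge\binom{k+L-2}{k-1}\ge\binom{L}{1}=L$, where the last step iterates $\binom{m+1}{j+1}\ge\binom{m}{j}$ from $(L,1)$ up to $(k+L-2,k-1)$ and uses $k\ge 2$.

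The only case requiring genuine work, and the step I expect to be the main obstacle, is (b). Here I must show $u+r\ge h+1$; granting this, $\binom{k-1+u+r}{k-1}\ge\binom{k+h}{k-1}\ge L$ by the defining property of~$h$. Writing $N=n-k$, the hypothesis gives $N\ge(L+1)h$ and $N\ge 1$, and $u=\lfloor(N-1)/(L+1)\rfloor\le(N-1)/(L+1)$. When $N\ge(L+1)h+1$, a one-line rearrangement yields $L(N-1)/(L+1)\le N-h-1$, hence $Lu\le N-h-1$ and $u+r=N-Lu\ge h+1$. The borderline $N=(L+1)h$ must be handled on its own: there $h\ge 1$, $u=h-1$ and $r=L+1$, so $u+r=h+L\ge h+1$. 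Assembling the four cases and invoking Theorem~\ref{thm:SingletonImproved} completes the proof. The two care-points are this off-by-one at $N=(L+1)h$ in~(b), where the real-number bound on $Lu$ just barely fails, and the exact identity $\binom{L}{1}=L$ underlying~(c), which is why the hypothesis $k\ge 2$ there cannot be relaxed.
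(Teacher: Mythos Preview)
Your proof is correct and follows essentially the same approach as the paper: in each case you verify the inequality~(\ref{eq:Lcondition}) of Theorem~\ref{thm:SingletonImproved} and then invoke that theorem. Your treatment is in fact slightly more careful than the paper's---you handle the boundary $n-k=(L+1)h$ in~(b) explicitly (the paper asserts $u+r\ge h+1$ without separating this case), and your iterated-Pascal bound $\binom{k+L-2}{k-1}\ge\binom{L}{1}=L$ in~(c) works uniformly, whereas the paper's bound $\binom{L}{k-1}\ge L$ tacitly assumes $k\le L$.
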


\begin{remark}
\label{rem:gst}
The original posted
\ifPAGELIMIT
    full
\fi
version of our work contained a weaker statement of
Theorem~\ref{thm:SingletonImproved}%
${} + {}$Corollary~\ref{cor:SingletonImproved},
which only applied to high-rate MDS codes.
Shortly after that version was posted, 
Goldberg \emph{et al.} posted their (independent) work~\cite{GST},
where they show (in Proposition~3.6 therein) that
when $k > (q/(q-1)) (L-1)$, a linear $[n,k]$ code over~$F$
can be $(\tau,L)$-list decodable only if~(\ref{eq:Singleton3}) holds.
Their result applies to non-MDS codes as well;
for MDS codes, their proof can be modified
so that it covers the same range, $k \ge L$, as
in Corollary~\ref{cor:SingletonImproved}(a).\qed
\end{remark}

\ifPAGELIMIT
\else
The $[n,1,n]$ repetition code over~$F$
is excluded from Theorem~\ref{thm:SingletonImproved} when $L > 1$
(see Example~\ref{ex:repetition}), and so is
the $[n,n{-}1,2]$ parity code when $L \ge n$:
obviously, this code is $(\tau{=}1,L{=}n)$-list decodable
and, thus, (\ref{eq:Singleton3}) does not hold.
More generally, any linear $[n,k]$ MDS code is
$(n{-}k,L)$-list decodable for
\[
L = \binom{n}{k} ,
\]
since every subset of $[n]$ of size $n-k$
contains the support of exactly one vector in each coset of the code;
hence, (\ref{eq:Singleton3}) does not hold for these parameters
as well (even when $L < q$).
In contrast, we also prove in Section~\ref{sec:bounds}
the following result.

\begin{theorem}
\label{thm:largeL}
Let~$\code$ be a linear $[n,k{<}n]$ MDS code over~$F$
and let $L \in \Integers^+$ be smaller than $\binom{n}{k}$.
Then~$\code$ is \underline{not} $(n{-}k,L)$-list decodable whenever
\[
q \ge \binom{n}{k+1} .
\]
\end{theorem}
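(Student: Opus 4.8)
The plan is to use the parity-check/syndrome formulation of list decodability and to refine the observation, noted just before the theorem, that for an MDS code every $(n-k)$-subset of coordinates supports exactly one vector in each coset. Fix an $(n-k)\times n$ parity-check matrix $H$ of $\code$; the MDS property says that every $n-k$ columns of $H$ are linearly independent. The code is \emph{not} $(n-k,L)$-list decodable iff some coset of $\code$, i.e.\ some syndrome $\blds\in F^{n-k}$, admits at least $L+1$ distinct preimages under $\bldx\mapsto H\bldx^\top$ of Hamming weight ${}\le n-k$. Since $L<\binom{n}{k}$, it suffices to produce one syndrome $\blds$ whose coset contains $\binom{n}{k}$ vectors of weight \emph{exactly} $n-k$. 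For each $(n-k)$-subset $S\subseteq[n]$, let $\blde_S=\blde_S(\blds)$ be the unique vector with $\Support(\blde_S)\subseteq S$ and syndrome $\blds$, obtained by inverting the submatrix $H_S$ of the columns indexed by $S$. If $\blds$ is chosen so that $\Support(\blde_S)=S$ for every such $S$, then these $\binom{n}{k}$ vectors have pairwise distinct supports, hence are distinct, all have weight $n-k$, and all lie in the coset of $\blds$ --- exactly what is needed.

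The crux is to identify which syndromes spoil the plan. Expanding $\blds$ in the basis of $F^{n-k}$ formed by the columns of $H_S$, the coefficient of the column indexed by $j\in S$ equals the $j$-th coordinate of $\blde_S(\blds)$; hence that coordinate vanishes precisely when $\blds$ lies in the span of the remaining $n-k-1$ columns, a subspace that depends only on the $(n-k-1)$-set $S\setminus\{j\}$. By the MDS property those $n-k-1$ columns are independent, so each such span is an $(n-k-1)$-dimensional subspace --- a hyperplane through $\bldzero$. Letting this hyperplane range over all $(n-k-1)$-subsets of $[n]$, there are exactly $\binom{n}{n-k-1}=\binom{n}{k+1}$ of them, and any syndrome avoiding their union satisfies $\Support(\blde_S)=S$ for all $S$ simultaneously.

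Finally I would verify that a syndrome outside the union of these $\binom{n}{k+1}$ hyperplanes exists under the hypothesis $q\ge\binom{n}{k+1}$. The point is that all the hyperplanes contain $\bldzero$, so their union has size at most $1+\binom{n}{k+1}\bigl(q^{n-k-1}-1\bigr)$, which is strictly less than $q^{n-k}=|F^{n-k}|$ because $(q^{n-k}-1)/(q^{n-k-1}-1)=q+(q-1)/(q^{n-k-1}-1)>q\ge\binom{n}{k+1}$ when $n-k\ge2$ (the case $n-k=1$ is immediate: the hyperplanes are then all $\{\bldzero\}$ and $\binom{n}{k+1}=1$, so any nonzero syndrome works). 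The only delicate point, and the main obstacle, is obtaining the sharp threshold $q\ge\binom{n}{k+1}$ rather than the $q>\binom{n}{k+1}$ that a plain union bound would give; subtracting off the repeatedly counted origin is precisely what closes that gap. Everything else is routine bookkeeping with the invertibility of the matrices $H_S$.
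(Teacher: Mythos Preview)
Your proof is correct and follows essentially the same approach as the paper: find a syndrome $\blds$ that is \emph{not} in the column span of any $(n{-}k{-}1)$-subset of~$H$, so that for every $(n{-}k)$-subset~$S$ the unique preimage supported on~$S$ has full support~$S$, yielding $\binom{n}{k}$ distinct weight-$(n{-}k)$ vectors in one coset. The only difference is cosmetic, in the counting step: the paper bounds the bad syndromes by $V_q(n,n{-}k{-}1) < \binom{n}{k+1}\,q^{n-k-1} \le q^{n-k}$ (with the strict inequality coming from the first step), whereas you bound the union of the $\binom{n}{k+1}$ hyperplanes directly and recover the strict inequality by noting they all share~$\bldzero$; both routes land on the same threshold $q \ge \binom{n}{k+1}$.
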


We note that when $L \ge n-k-1$,
\[
\frac{L(n-k)}{L+1} \ge n-k-1 .
\]
Thus, (\ref{eq:Singleton3}) holds when
\[
n-k-1 \le L \le \binom{n}{k} - 1
\]
for any linear $[n,k]$ MDS code over a sufficiently large field
(and---by Corollary~\ref{cor:SingletonImproved}(d)---under
no conditioning on the field size
when $n - k - 1 \le L \le \binom{n-1}{k-1}$).
\fi

\ifPAGELIMIT
\else
\subsection{Higher-order MDS codes}
\label{sec:intro-LMDS}
\fi

\ifPAGELIMIT
    In Sections~\ref{sec:MDS} and~\ref{sec:2-MDS} below,
\else
In Sections~\ref{sec:MDS} and~\ref{sec:2-MDS},
\fi
we turn to studying properties of
linear codes that attain the bound~(\ref{eq:Singleton3});
we will refer to such codes as $L$-MDS codes.
To let the case $L = 1$ coincide with ordinary linear MDS codes,
we will need a somewhat stronger notion of list decodability,
to be defined next.
For $a \in \Integers^+$, denote by $\Integers/a$
the set of rationals of the form $b/a$
where $b \in \Integers$.

Let~$\code$ be a code in $F^n$. Given $L \in \Integers^+$
and a nonnegative $\tau \in \Integers/(L+1)$, we say that~$\code$
is \emph{strongly-$(\tau,L)$-list decodable} if
for every $\bldy \in F^n$ there are no $L+1$ distinct codewords
$\bldc_0, \bldc_1, \ldots, \bldc_L \in \code$ such that
\[
\sum_{m \in [0:L]} \weight(\bldy-\bldc_m) \le (L+1)\tau ,
\]
where $\weight(\cdot)$ denotes Hamming weight.
When~$\code$ is a linear $[n,k]$ code over~$F$,
it is strongly-$(\tau,L)$-list decodable if
there are no $L+1$ distinct vectors
\ifPAGELIMIT
    $\blde_0, \blde_1, \ldots, \blde_L \in F^n$
\else
\[
\blde_0, \blde_1, \ldots, \blde_L \in F^n
\]
\fi
that are in the same coset of~$\code$ within $F^n$ and satisfy
\begin{equation}
\label{eq:strongly}
\sum_{m \in [0:L]} \weight(\blde_m) \le (L+1) \tau .
\end{equation}
In other words, the \emph{average}
(rather than the \emph{maximum}) weight of 
any $L+1$ distinct vectors in the same coset of~$\code$
must exceed~$\tau$.
(This notion of list decodability is referred to in~\cite{GN}
as \emph{average-radius list decodability};
see also~\cite{Blinovskii} and~\cite{GV}.)
Clearly, any strongly-$(\tau,L)$-list decodable
\ifPAGELIMIT
    code
\else
code\footnote{%
The notion of strong $(\tau,L)$-list decodability is defined
(only) when $L \in \Integers^+$ and~$\tau$ is nonnegative
in $\Integers/(L+1)$; thus, 
when we say that a code is (or is not)
strongly-$(\tau,L)$-list decodable,
it will also imply that~$L$ and~$\tau$ are in their valid range.
This convention also applies to ordinary
$(\tau,L)$-list decodability,
in which case $L \in \Integers^+$ and $\tau \in \Integers_{\ge 0}$.}
\fi
is also (ordinarily) $(\lfloor\tau\rfloor,L)$-list decodable.
Theorem~\ref{thm:Singleton} holds also for strong list decodability
(see the proof in~\cite{ST})
and, as we
\ifPAGELIMIT
    show in~\cite[\S III]{Roth},
\else
will show,
\fi
so does Theorem~\ref{thm:onlyMDS},
except that in both theorems~$\tau$ is now in $\Integers/(L+1)$.
Moreover, we have the following (stronger) counterpart of
\ifPAGELIMIT
    Theorem~\ref{thm:SingletonImproved}.
\else
Theorem~\ref{thm:SingletonImproved}.\footnote{%
In the context of strong list decodability,
the difference between
the bounds~(\ref{eq:Singleton2}) and~(\ref{eq:Singleton3})
becomes more profound (compared to ordinary list decodability),
due to the finer grid of the possible values of~$\tau$.}
\fi

\begin{theorem}
\label{thm:stronglySingletonImproved}
Let~$\code$ be a linear $[n,k]$ MDS code over~$F$
and let $L \in \Integers^+$ be such that
\[
L \le \binom{n-1}{k-1} .
\]
Then~$\code$ is strongly-$(\tau,L)$-list decodable
\underline{only if} $\tau$ satisfies~(\ref{eq:Singleton3}).
\end{theorem}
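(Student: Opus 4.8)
Here is the plan. I would prove the contrapositive: assume $\tau \in \Integers/(L+1)$ is nonnegative but does \emph{not} satisfy~(\ref{eq:Singleton3}). Then $(L+1)\tau$ is an integer strictly larger than $L(n-k)$, hence $(L+1)\tau \ge L(n-k)+1$, so it is enough to exhibit $L+1$ distinct vectors $\blde_0,\blde_1,\ldots,\blde_L \in F^n$ that lie in one coset of~$\code$ and satisfy $\sum_{m\in[0:L]}\weight(\blde_m) \le L(n-k)+1$; such a family shows that~$\code$ is not strongly-$(\tau,L)$-list decodable.

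I would build these vectors inside a single light coset, taking one vector of weight~$1$ together with $L$ vectors of weight $n-k$. Fix $j\in[n]$ and a nonzero $\lambda\in F$, write $d=n-k+1$ for the minimum distance of~$\code$, and let $\bldu_j$ be the $j$-th unit vector; consider the coset $\lambda\bldu_j+\code$. It contains $\lambda\bldu_j$ (of weight~$1$), and for every minimum-weight codeword $\bldc\in\code$ with $j\in\Support(\bldc)$ and $\bldc_j=-\lambda$ it also contains $\lambda\bldu_j+\bldc$, whose support is $\Support(\bldc)\setminus\{j\}$ and hence whose weight is $d-1=n-k$. Distinct such~$\bldc$ give distinct vectors, all different from $\lambda\bldu_j$, so once I know that at least~$L$ codewords~$\bldc$ of this kind exist I can take $\blde_0,\ldots,\blde_{L-1}$ to be $L$ of the corresponding weight-$(n-k)$ vectors and $\blde_L=\lambda\bldu_j$, obtaining $L+1$ distinct coset members of total weight $L(n-k)+1$, as required.

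The remaining step is the count of eligible codewords, where I would use the classical structure of minimum-weight codewords of MDS codes: for each $d$-subset $T\subseteq[n]$, the codewords of~$\code$ vanishing outside~$T$ form a one-dimensional space whose $q-1$ nonzero members all have support exactly~$T$ (weight $\ge d$ by the MDS bound, $\le|T|=d$ by construction), and evaluation at the fixed coordinate $j\in T$ sends these bijectively onto $F\setminus\{0\}$, so exactly one of them has $j$-th entry $-\lambda$. Summing over the $\binom{n-1}{d-1}=\binom{n-1}{k-1}$ subsets~$T$ containing~$j$ yields exactly $\binom{n-1}{k-1}$ codewords~$\bldc$ with $\weight(\bldc)=d$, $j\in\Support(\bldc)$, and $\bldc_j=-\lambda$. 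Since $L\le\binom{n-1}{k-1}$ by hypothesis, the construction above goes through.

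I expect the only delicate points to be in this last part: pinning down $\binom{n-1}{k-1}$ as the \emph{exact} number of usable codewords (rather than a mere lower bound) and verifying that the $L+1$ exhibited vectors are pairwise distinct --- the latter being immediate here because $\lambda\bldu_j$ has weight~$1$ while the other~$L$ arise from pairwise distinct nonzero codewords. Note that no condition on~$q$ enters: the sole arithmetic requirement, $L\le\binom{n-1}{k-1}$, is precisely the number of minimum-weight codewords available for a given pair $(j,\lambda)$. Finally, this is ``stronger'' than Theorem~\ref{thm:SingletonImproved} precisely because the very unbalanced weight profile used here (one vector of weight~$1$, the rest of weight $n-k$) is harmless for \emph{average}-radius list decoding but too coarse for ordinary list decoding, where one needs the more balanced families accommodated by condition~(\ref{eq:Lcondition}).
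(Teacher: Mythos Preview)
Your proposal is correct and follows essentially the same approach as the paper: take $\blde_0$ to be a weight-$1$ vector (the paper uses $(1\,0\,\ldots\,0)$, you use $\lambda\bldu_j$) and set $\blde_m=\blde_0-\bldc_m$ for $L$ minimum-weight codewords $\bldc_m$ whose $j$th coordinate matches that of $\blde_0$, yielding $L$ coset members of weight $n-k$ and total weight $L(n-k)+1$. Your counting of the available codewords via $d$-subsets containing~$j$ is exactly the $\binom{n-1}{k-1}$ count the paper invokes (there phrased as codewords with a~$1$ in the first coordinate and the $k-1$ zeros among the first~$s=n$ positions).
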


A (linear) \emph{$L$-MDS code}
is a linear $[n,k]$ code over~$F$ which is
strongly-$(\tau,L)$-list decodable for $\tau = L(n-k)/(L+1)$.
In such codes, the sum of weights of the $L+1$ lightest
vectors in each coset must exceed $L(n-k)$.
Note that the $1$-MDS property coincides with
the ordinary MDS property.

Section~\ref{sec:MDS} is devoted to
\ifPAGELIMIT
    presenting some basic properties of $L$-MDS codes.
    Then, in Section~\ref{sec:2-MDS},
    we concentrate on the case $L = 2$.
\else
proving some basic properties of $L$-MDS codes.
We state the counterpart of Theorem~\ref{thm:onlyMDS}
for strong list decodability, namely, that
when $L < q$, every $L$-MDS code over~$F$ is MDS
(Theorem~\ref{thm:stronglyonlyMDS}).
We then establish the following closure property
for a wide range of values of~$L$:
if a linear $[n,k]$ MDS code is $L$-MDS,
then it is $\ell$-MDS for smaller values of~$\ell$ as well
(Theorem~\ref{thm:nesting}).
We also show that when $L \ge \binom{n}{k} - k(n-k)$,
every linear $[n,k]$ MDS code is $L$-MDS
(Theorem~\ref{thm:stronglyhighL}).

In Section~\ref{sec:2-MDS}, we concentrate on the case $L = 2$
and prove a necessary and sufficient condition
for a linear $[n,k]$ MDS code to be $2$-MDS
(and, thus, $(\lfloor (2(n-k)/3 \rfloor,2)$-list decodable),
in terms of certain properties of
its punctured codes (Theorem~\ref{thm:puncturingL=2}).
We also show that the $2$-MDS property
is preserved under duality (Theorem~\ref{thm:dualityL=2}):
interestingly, while this property is known to hold for
(ordinary) $1$-MDS codes, it does not generalize to $3$-MDS codes.

\fi
Finally, we present in Section~\ref{sec:GRS} explicit constructions of 
GRS codes that are $2$-MDS.
These constructions improve on the results of~\cite{ST}
in that they apply to smaller fields:
the field size can be polynomial in~$n$
for any fixed~$k$ or $n-k$.

\ifPAGELIMIT
\else
\subsection{Notation}
\label{sec:intro-notation}
\fi

We introduce the following notation.
For a vector $\blde = (e_j)_{j \in [n]} \in F^n$,
we denote by $\Support(\blde)$ the support of~$\blde$.
For a vector $\blde \in F^n$ and a subset $J \subseteq [n]$,
we let $(\blde)_J$ be the subvector of~$\blde$
which consists of the coordinates that are indexed by~$J$.
This notation extends to $\rho \times n$ matrices $H$ over~$F$,
with $(H)_J$ being the $\rho \times |J|$ submatrix of~$H$
consisting of the columns that are indexed by~$J$.

Given a $\rho \times n$ matrix~$H$ over~$F$
and $L+1$ subsets $J_0, J_1, \ldots, J_L$ of $[n]$,
we define the matrix $M = M_{J_0,J_1,\ldots,J_L}(H)$ by
\begin{equation}
\label{eq:M}
M =
\left(
\arraycolsep0.8ex
\renewcommand{\arraystretch}{1.7}
\begin{array}{c|c|c|c|c}
-(H)_{J_0} & (H)_{J_1} & & & \\
\hline
-(H)_{J_0} & & (H)_{J_2} & & \\
\hline
\vdots  & & & \;\;\,\ddots\,\;\; & \\
\hline
-(H)_{J_0} & & & & (H)_{J_L} \\
\end{array}
\right) ,
\end{equation}
where empty blocks denote all-zero submatrices.
\ifPAGELIMIT
\else
The matrix~$M$ has $L \rho$ rows
and $\sum_{m \in [0:L]} |J_m|$ columns
(the matrix~$H$ will typically be taken as
an $(n-k) \times n$ parity-check matrix of
a linear $[n,k]$ code over~$F$).

\begin{remark}
\label{rem:M}
While the role of $J_0$ in~(\ref{eq:M})
may seem to differ from that of the rest of the subsets $J_m$,
in the uses of the matrix~$M$ in this work,
the $L+1$ subsets will enjoy full symmetry.
For example, when $L \rho = \sum_{m \in [0:L]} |J_m|$,
the matrix~$M$ is square and its
determinant is the same as that of
the $((L+1)\rho) \times ((L+1)\rho)$ matrix
\begin{equation}
\label{eq:Malt}
\left(
\arraycolsep0.8ex
\renewcommand{\arraystretch}{1.7}
\begin{array}{c|c|c|c|c}
\,I_\rho\, & (H)_{J_0} & & & \\
\hline
\,I_\rho\, & & (H)_{J_1} & & \\
\hline
\vdots  & & & \;\;\,\ddots\,\;\; & \\
\hline
\,I_\rho\, & & & & (H)_{J_L}
\end{array}
\right)
\end{equation}
(with $I_\rho$ standing for the $\rho \times \rho$ identity matrix):
the symmetry among the subsets $J_m$ is apparent
in~(\ref{eq:Malt}).\qed
\end{remark}

\section{Proofs of bounds}
\label{sec:bounds}

\begin{proof}[Proof of Theorem~\ref{thm:onlyMDS}]
Suppose that~$\code$ is a linear $[n,k,d{\le}n{-}k]$ 
(non-MDS) code over~$F$
and let~$\bldc$ be a codeword of~$\code$ of weight $d$;
without loss of generality we may assume that~$\bldc$ takes
the form
\[
\bigl(
\underbrace{1 \, 1 \, \ldots \, 1}_{d \; \textrm{times}}
 \, 0 \, 0 \, \ldots \, 0 \bigr)
\; \left( \in  F^n \right) .
\]
Let~$Y$ be a set of $L+1 \; (\le q)$ distinct elements of~$F$
and consider a vector $\bldy \in F^n$ in which each element
of~$Y$ appears at least $\lfloor d/(L+1) \rfloor$ times
among the first~$d$ entries of~$\bldy$,
while the remaining entries of~$\bldy$ are all zero. 
Thus, there exist $L+1$ distinct scalar multiples
$\bldc_0, \bldc_1, \ldots, \bldc_L$ of~$\bldc$,
each coinciding with~$\bldy$ on at least
$\lfloor d/(L+1) \rfloor$ out of the first~$d$ coordinates, namely,
$\weight(\bldy-\bldc_m) \le d - \lfloor d/(L+1) \rfloor$
for each $m \in [0:L]$.
Hence, $\code$ is $(\tau,L)$-list decodable only when
\[
\tau \le d - \left\lfloor \frac{d}{L+1} \right\rfloor - 1
< \frac{L \cdot d}{L+1} \le \frac{L(n-k)}{L+1} .
\]
\end{proof}

The next lemma (which we prove in Appendix~\ref{sec:skipped})
shows that the condition $L \in [q-1]$ in Theorem~\ref{thm:onlyMDS}
holds in most cases of interest.
Specifically, the lemma applies to any (not necessarily linear) code
$\code \subseteq F^n$ whose rate,
$R = (\log_q |\code|)/n$, is bounded away from~$0$ and from~$1$,
and to~$L$ whose growth rate (with~$n$)
is bounded from above by some constant.
The statement of the lemma makes use of the function
$\eta_q : [0,1/2] \rightarrow \Realfield$
which is defined for every $\varepsilon \in [0,1/2]$ by
\begin{equation}
\label{eq:eta}
\eta_q(\varepsilon)
= \entropy \left( \frac{q{-}1}{q} (1{-}\varepsilon) \right)
- (1{-}\varepsilon) \cdot \entropy(1/q) ,
\end{equation}
where $\entropy(x) = -x \log_2 x - (1{-}x) \log_2 (1{-}x)$.
(In the proof of the lemma we show that $\eta_q(\varepsilon)$
is bounded from below by the positive
value $\eta_2(\varepsilon)$, which is independent of~$q$.)

\begin{lemma}
\label{lem:L<q-1}
Given a fixed real $\varepsilon \in (0,1/2]$,
let $\code \subseteq F^n$ be a code
of rate $R \in [\varepsilon,1{-}\varepsilon]$,
let $L \in \Integers^+$ be such that
\begin{equation}
\label{eq:L<q-1L}
L < \frac{1}{\sqrt{2n}} \cdot 2^{\eta_q(\varepsilon) \cdot n} ,
\end{equation}
and let $\tau \in \Integers_{\ge 0}$ be such that
\begin{equation}
\label{eq:L<q-1tau}
\tau \ge \frac{L n(1-R)}{L+1} .
\end{equation}
Then~$\code$ is $(\tau,L)$-list decodable
\underline{only if} $L < q-1$.
\end{lemma}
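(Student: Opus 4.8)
The plan is to argue by contraposition: assuming $L \ge q-1$, I will show that $\code$ is not $(\tau,L)$-list decodable, the tool being the sphere-packing bound of Theorem~\ref{thm:SpherePacking}. Writing $|\code| = q^{Rn}$, that bound asserts that a $(\tau,L)$-list decodable code satisfies $V_q(n,\tau) \le L\,q^{n(1-R)}$, so it suffices to establish the reverse strict inequality $V_q(n,\tau) > L\,q^{n(1-R)}$.

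First I would reduce $\tau$ to a convenient value. Since $L \ge q-1$ gives $L/(L{+}1) \ge (q{-}1)/q$, the hypothesis $\tau \ge Ln(1-R)/(L{+}1)$ yields $\tau \ge \beta n$ with $\beta := \tfrac{q-1}{q}(1-R)$; and $1-R \le 1-\varepsilon$ forces $0 < \beta \le \tfrac{q-1}{q}(1-\varepsilon) < 1-1/q$. For $n$ past a constant threshold depending on $\varepsilon$ (a condition subsumed by the hypothesis $1 \le L < 2^{\eta_q(\varepsilon)n}/\sqrt{2n}$), the integer $\tau_0 := \lceil \beta n\rceil$ then satisfies $1 \le \tau_0 \le \tau$ and $\beta \le \tau_0/n \le 1-1/q$. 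Using monotonicity of $V_q(n,\cdot)$, retaining only the summand $\binom{n}{\tau_0}(q-1)^{\tau_0}$, the standard estimate $\binom{n}{m} \ge 2^{n\entropy(m/n)}/\sqrt{2n}$ (which gives $\binom{n}{m}(q-1)^m \ge q^{nH_q(m/n)}/\sqrt{2n}$, where $H_q(x) = x\log_q(q-1) + \entropy(x)/\log_2 q$), and the fact that $H_q$ is nondecreasing on $[0,1-1/q]$, I obtain
\[
V_q(n,\tau) \;\ge\; \binom{n}{\tau_0}(q-1)^{\tau_0}
\;\ge\; \frac{1}{\sqrt{2n}}\,q^{\,nH_q(\tau_0/n)}
\;\ge\; \frac{1}{\sqrt{2n}}\,q^{\,nH_q(\beta)} .
\]

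The crux is an exact identity that makes the function of~(\ref{eq:eta}) surface. Using $\tfrac{q-1}{q}\log_2(q-1) + \entropy(1/q) = \log_2 q$, a short computation gives
\[
q^{\,nH_q(\beta)} \;=\; q^{\,n(1-R)} \cdot 2^{\,n\,\bar\eta_q(1-R)},
\qquad
\bar\eta_q(s) := \entropy\!\left(\tfrac{q-1}{q}s\right) - s\,\entropy(1/q),
\]
and $\bar\eta_q(s) = \eta_q(1-s)$, so $\bar\eta_q(1-\varepsilon) = \eta_q(\varepsilon)$ and $\bar\eta_q(\varepsilon) = \eta_q(1-\varepsilon)$. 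Plugging this into the previous display, $V_q(n,\tau) \ge \tfrac{1}{\sqrt{2n}}\,q^{n(1-R)}\,2^{n\bar\eta_q(1-R)}$, so in view of the hypothesis $L < 2^{\eta_q(\varepsilon)n}/\sqrt{2n}$ it remains only to check $\bar\eta_q(1-R) \ge \eta_q(\varepsilon)$. Now $\bar\eta_q$ is concave on $[0,1]$ (a concave function of an affine argument, minus a linear term) and vanishes at $s=0$ and $s=1$; hence on $[\varepsilon,1-\varepsilon]$ its graph lies above the chord joining its endpoint values, giving $\bar\eta_q(1-R) \ge \min\{\bar\eta_q(\varepsilon),\bar\eta_q(1-\varepsilon)\} = \min\{\eta_q(1-\varepsilon),\eta_q(\varepsilon)\}$. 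We are thus reduced to the single inequality $\eta_q(1-\varepsilon) \ge \eta_q(\varepsilon)$ for $\varepsilon \in (0,1/2]$, which I expect to be the main technical obstacle. I would prove it by putting $f(\varepsilon) := \eta_q(1-\varepsilon) - \eta_q(\varepsilon)$, noting $f(0) = f(1/2) = 0$, and showing $f$ is concave on $[0,1/2]$: writing $a := (q-1)/q \in [1/2,1)$, one finds $f''(\varepsilon) \le 0$ iff $\tfrac{1}{a\varepsilon(1-a\varepsilon)} \ge \tfrac{1}{a(1-\varepsilon)(1-a(1-\varepsilon))}$, and this holds because $t \mapsto \tfrac{1}{t(1-t)}$ is a symmetric valley about $t = 1/2$ while $|a\varepsilon - \tfrac12| \ge |a(1-\varepsilon) - \tfrac12|$ whenever $\varepsilon \le \tfrac12$ and $a \le 1$; a concave function with two zeros is nonnegative between them, so $f \ge 0$ on $[0,1/2]$. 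Chaining the estimates, $V_q(n,\tau) > L\,q^{n(1-R)}$, contradicting Theorem~\ref{thm:SpherePacking}; this proves the contrapositive.

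Finally, for the parenthetical claim that $\eta_q(\varepsilon) \ge \eta_2(\varepsilon) > 0$: I would regard $\eta_q(\varepsilon) = \entropy(a(1-\varepsilon)) - (1-\varepsilon)\entropy(1-a)$ as a function of $a \in [1/2,1)$ and compute its $a$-derivative, which equals $(1-\varepsilon)\log_2\!\frac{1-a(1-\varepsilon)}{(1-\varepsilon)(1-a)}$ and is $\ge 0$ since $1-a(1-\varepsilon) \ge (1-\varepsilon)(1-a)$ simplifies to $0 \ge -\varepsilon$; hence the minimum over $a$ is at $a = 1/2$, giving $\eta_q(\varepsilon) \ge \eta_2(\varepsilon) = \entropy\!\left(\tfrac{1-\varepsilon}{2}\right) - (1-\varepsilon)$, and $\eta_2(\varepsilon) > 0$ because $\entropy(t) > 2t$ for $t \in (0,1/2)$ (the difference is concave, positive near $0$, and zero at $t=1/2$).
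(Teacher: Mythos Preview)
Your argument is correct and follows essentially the same route as the paper's: contraposition via the sphere-packing bound, the identity $q^{nH_q(\beta)} = q^{n(1-R)}\cdot 2^{n\bar{\eta}_q(1-R)}$, concavity of $\bar{\eta}_q$ (the paper's $f_q$ up to the factor $\log_2 q$) to reduce to an endpoint, a second-derivative computation for $\eta_q(1{-}\varepsilon) \ge \eta_q(\varepsilon)$ (the paper packages this as convexity of $\varphi_q$), and a monotonicity-in-$q$ argument for $\eta_q \ge \eta_2$ (you differentiate in $a=(q{-}1)/q$, the paper in $q$; same content). One small wrinkle: your detour through $\tau_0 = \lceil \beta n\rceil$ and the claim that $\tau_0/n \le 1-1/q$ is ``subsumed by the hypothesis'' is not airtight for large $q$ and small $n$; the paper sidesteps this by citing $V_q(n,\tau) \ge q^{n\Entropy_q(\tau/n)}/\sqrt{2n}$ directly from~\cite{MS} and using that $\Entropy_q$ is nondecreasing on all of $[0,1]$.
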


Combining Theorem~\ref{thm:onlyMDS} with
Lemma~\ref{lem:L<q-1} yields the following corollary.

\begin{corollary}
\label{cor:L<q-1}
Given $\varepsilon \in (0,1/2]$,
let~$\code$ be a linear $[n,k]$ code over~$F$
of rate $k/n \in [\varepsilon,1{-}\varepsilon]$,
let $L \in \Integers^+$ satisfy~(\ref{eq:L<q-1L}),
and let $\tau \in \Integers_{\ge 0}$ satisfy~(\ref{eq:onlyMDS}).
Then~$\code$ is $(\tau,L)$-list decodable
\underline{only if} $\code$ is MDS.
\end{corollary}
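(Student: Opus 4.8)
The plan is to obtain Corollary~\ref{cor:L<q-1} by simply chaining Lemma~\ref{lem:L<q-1} with Theorem~\ref{thm:onlyMDS}, once it is checked that the hypotheses line up. First I would note that a linear $[n,k]$ code over~$F$ is in particular a code $\code \subseteq F^n$ whose rate is $R = (\log_q|\code|)/n = k/n$, so the assumption $k/n \in [\varepsilon,1{-}\varepsilon]$ says exactly that $R \in [\varepsilon,1{-}\varepsilon]$. Next I would observe that $n(1-R) = n - (k/n)\cdot n = n-k$; hence the hypothesis~(\ref{eq:onlyMDS}), namely $\tau \ge L(n-k)/(L+1)$, is literally the same inequality as~(\ref{eq:L<q-1tau}) for this code. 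Together with the assumption that~$L$ satisfies~(\ref{eq:L<q-1L}), this shows that all the hypotheses of Lemma~\ref{lem:L<q-1} are met.

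I would then apply Lemma~\ref{lem:L<q-1}: since~$\code$ is assumed to be $(\tau,L)$-list decodable, the lemma yields $L < q-1$. For an integer~$L$ this gives $L \le q-2$, so in particular $L \in [q-1]$, which is precisely the range of~$L$ required by Theorem~\ref{thm:onlyMDS}.

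Finally, with $L \in [q-1]$ and $\tau \ge L(n-k)/(L+1)$ established, I would invoke Theorem~\ref{thm:onlyMDS} directly: it asserts that under exactly these conditions a $(\tau,L)$-list decodable linear $[n,k]$ code over~$F$ must be MDS. This completes the argument.

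There is essentially no difficult step here: the proof is a two-line combination of two previously proved results. The only point that requires any care is the bookkeeping that reconciles the rate-based formulation of Lemma~\ref{lem:L<q-1} with the $[n,k]$-parameter formulation of Theorem~\ref{thm:onlyMDS}, i.e., recording the identity $n(1-R) = n-k$ for $R = k/n$, and noting that the strict bound $L < q-1$ delivered by the lemma is more than enough to place~$L$ in the set $[q-1]$ demanded by the theorem.
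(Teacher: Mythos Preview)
Your proposal is correct and matches the paper's approach exactly: the paper introduces the corollary with the sentence ``Combining Theorem~\ref{thm:onlyMDS} with Lemma~\ref{lem:L<q-1} yields the following corollary,'' and your write-up simply spells out that combination, including the bookkeeping that $n(1-R)=n-k$ and that $L<q-1$ places $L$ in $[q-1]$.
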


\begin{remark}
\label{rem:L<q-1}
As $q \rightarrow \infty$, the value of $\eta_q(\varepsilon)$ approaches
$\entropy(\varepsilon)$, which is
the growth rate of $\binom{n}{\varepsilon n}$
(assuming that $\varepsilon n$ is an integer).
Thus, in terms of growth rates, the requirement~(\ref{eq:L<q-1L})
in this case is tight, since we have seen that
linear $[n,k{=}\varepsilon n]$ MDS codes are
$(\tau,L)$-list decodable for $L = \binom{n}{k}$
and $\tau = n-k > L(n-k)/(L+1)$, regardless of~$q$
(as long as MDS codes exist over~$F$, e.g., when $q \ge n-1$).\qed
\end{remark}

\begin{proof}[Proof of Theorem~\ref{thm:SingletonImproved}]
Let
\begin{equation}
\label{eq:tau}
\tau = \left\lceil \frac{L(n-k) + 1}{L+1} \right\rceil
\stackrel{\textrm{(\ref{eq:ur})}}{=}
\left\lceil \frac{L((L+1)u + r) + 1}{L+1} \right\rceil
= L u + r
\end{equation}
and note that~$\tau$ is the smallest integer that is greater
than the right-hand side of~(\ref{eq:Singleton3}).
We construct $L+1$ distinct vectors
$\blde_0, \blde_1, \ldots, \blde_L \in F^n$
of weight${} \le \tau$ which belong
to the same coset of~$\code$;
this will imply that~$\code$ is not $(\tau,L)$-list decodable.

We recall that 
every subset of $[n]$ of size $n-k+1$ is a support of $q-1$
codewords of minimum weight $n-k+1$ of the MDS code~$\code$,
and those $q-1$ codewords
are scalar multiples of each other~\cite[Ch.~11, \S 3]{MS}. 
Denoting
\begin{equation}
\label{eq:s}
s = k + u + r
\end{equation}
(which, by~(\ref{eq:ur}), is in $[k+1:n]$), there are
\[
\binom{s-1}{k-1} \stackrel{\textrm{(\ref{eq:Lcondition})}}{\ge} L
\]
codewords of weight $n-k+1$ that have a~$1$ as their first coordinate
and their $k-1$ zero entries are all located within
the first~$s$ coordinates of the codeword.
Let $\bldc_1, \bldc_2, \ldots, \bldc_L$ be~$L$ such codewords.

We turn to defining the $L+1$ vectors $\blde_m$.
To this end, we partition $[s+1:n]$, which is of size
\begin{eqnarray}
n - s
& \stackrel{\textrm{(\ref{eq:s})}}{=} &
n - k - u - r
\nonumber \\
& \stackrel{\textrm{(\ref{eq:ur})}}{=} &
((L+1) u + r) - u - r
\nonumber \\
\label{eq:n-s}
& = &
L u ,
\end{eqnarray}
into~$L$ distinct subsets $T_m$, $m \in [L]$,
each of size $u$.
The vector $\blde_0 \in F^n$ is defined by
\[
\begin{array}{lcl}
(\blde_0)_{[s]} & = & (1 \, 0 \, 0 \, \ldots \, 0) \\
(\blde_0)_{T_m} & = & (\bldc_m)_{T_m} , \quad m \in [L] 
\end{array} ,
\]
and the remaining~$L$ vectors are defined by
\begin{equation}
\label{eq:othervectors}
\blde_m = \blde_0 - \bldc_m , \quad m \in [L] .
\end{equation}
Clearly, the $L+1$ vectors $\blde_m$ are distinct
and they all belong to the same coset of~$\code$.

We next show that $\weight(\blde_m) \le \tau$
for every $m \in [0:L]$. For $m = 0$ we have
\[
\weight \left( \blde_0 \right)
\le
1 + n - s
\stackrel{\textrm{(\ref{eq:n-s})}}{\le}
L u + 1
\stackrel{\textrm{(\ref{eq:tau})}}{\le} \tau ,
\]
and for $m \in [L]$ we have
\begin{eqnarray*}
\weight \left( \blde_m \right)
& = &
{\underbrace{\weight\left ( (\blde_m)_{[s]} \right)}_{s-k}}
+
{\underbrace{\weight \left( (\blde_m)_{[s+1:n]} \right)}_%
   {{} \le n-s-|T_m|}} \\
& \le &
(s - k) + (n - s - u) \\
& = &
n - k - u
\stackrel{\textrm{(\ref{eq:ur})}}{=}
L u + r
\stackrel{\textrm{(\ref{eq:tau})}}{=} \tau .
\end{eqnarray*}
\end{proof}

\begin{proof}[Proof of Corollary~\ref{cor:SingletonImproved}]
We show that the inequality~(\ref{eq:Lcondition})
holds in each of the cases (b)--(d)
(with case~(b) becoming case~(a) when $h = 0$).

\emph{(b)}
   From $n-k \ge (L+1) h$ we get $u + r \ge h+1$
and, so, the right-hand side of~(\ref{eq:Lcondition}) satisfies
\[
\binom{k-1+u+r}{k-1}
\ge \binom{k+h}{k-1} \ge L .
\]

\emph{(c)}
Here $r \ge L-1$ and, so, $k - 1 + u + r \ge L$
and the right-hand side of~(\ref{eq:Lcondition})
is at least $\binom{L}{k-1} \ge L$.

\emph{(d)}
When $n - k - 1 \le L$
we have $u = 0$ and $r = n - k$
and, so, the right-hand side of~(\ref{eq:Lcondition})
equals $\binom{n-1}{k-1}$.
\end{proof}

\begin{proof}[Proof of Theorem~\ref{thm:largeL}]
Let~$H$ be an $(n-k) \times n$ parity-check matrix of~$\code$.
There are at most $V_q(n,n-k-1)$ column vectors in $F^{n-k}$
that can be written as linear combinations of less than $n-k$
columns of $H$. By our assumptions,
\begin{eqnarray*}
V_q(n,n-k-1)
& < &
\binom{n}{n-k-1} \cdot q^{n-k-1} \\
& = &
\binom{n}{k+1} \cdot q^{n-k-1} \le q^{n-k}
\end{eqnarray*}
(except when $k = n-1$, where the first inequality is weak and
the second is strict; in fact, for this case the theorem holds
without any restrictions on~$q$).
Hence, there is a column vector $\blds \in F^{n-k}$
that can be written as a proper linear combination of the columns
of $(H)_J$, for any subset $J \subseteq [n]$ of size $n-k$;
namely, each column of $(H)_J$ appears with a nonzero coefficient
in the linear combination. In particular,
for distinct subsets~$J$ we get different linear combinations.
We conclude that~$\code$ cannot be $(n{-}k,L)$-list decodable
when $L < \binom{n}{n-k} = \binom{n}{k}$.
\end{proof}

\begin{remark}
\label{rem:largeL}
If we remove any conditioning on the field size,
then there do exist MDS codes
that are $(n{-}k,L)$-list decodable for some $L < \binom{n}{k}$
(by Corollary~\ref{cor:SingletonImproved}(d),
such $L$ should also satisfy $L > \binom{n-1}{k-1}$).
For example, the linear $[8,4]$ code over $\GF(7)$
in Example~\ref{ex:nonLMDS} below is $(4,50)$-list decodable,
and the linear $[6,2]$ code over $\GF(5)$
in Example~\ref{ex:LMDSk=2,L=3} below
is $(4,11)$-list decodable.\qed
\end{remark}

\begin{proof}[Proof of Theorem~\ref{thm:stronglySingletonImproved}]
We simplify the proof of Theorem~\ref{thm:SingletonImproved},
redefining~$s$ therein to be~$n$ (instead of~(\ref{eq:s}));
respectively, the subsets $T_m$ become empty
and $\blde_0$ becomes $(1 \, 0 \, 0 \, \ldots \, 0)$. 
Keeping the definition of
$\blde_1, \blde_2, \ldots, \blde_L$
as in~(\ref{eq:othervectors}),
each has weight $n-k$ and, therefore,
the left-hand side of~(\ref{eq:strongly}) equals $L(n-k) + 1$.
Hence, $\code$ is strongly-$(\tau,L)$-list decodable
only if $\tau \le L(n-k)/(L+1)$.
\end{proof}
\fi

\ifPAGELIMIT
    \section{$L$-MDS codes and lightly-$L$-MDS codes}
    \label{sec:MDS}

    We next present some properties of $L$-MDS codes,
    the proofs of which can be found in~\cite[\S III]{Roth}.
    The following theorem
\else
\section{Generalizations of the MDS property}
\label{sec:MDS}

\subsection{$L$-MDS codes}
\label{sec:LMDS}

Recall the definitions of strong list decodability
and $L$-MDS codes from Section~\ref{sec:intro-LMDS}.
Specifically,
an $L$-MDS code is a linear $[n,k]$ code over~$F$ which is
strongly-$(\tau,L)$-list decodable for $\tau = L(n-k)/(L+1)$.
Clearly, such a code is also
$(\lfloor \tau \rfloor,L)$-list decodable
(in the ordinary sense),
but the converse is generally not true.
The next example presents a code which is
$(\lceil \tau \rceil,L)$-list decodable
yet is not strongly-$(\tau,L)$-list decodable.

\begin{example}
\label{ex:nonLMDS}
Let~$\code$ be the linear $[8,4]$ code over $F = \GF(7)$
with a parity-check matrix
\[
H =
\left(
\begin{array}{cccccccc}
1 & 1 & 1 & 1 & 1 & 1 & 1 & 0 \\
0 & 1 & 2 & 3 & 4 & 5 & 6 & 0 \\
0 & 1 & 4 & 2 & 2 & 4 & 1 & 0 \\
0 & 1 & 1 & 6 & 1 & 6 & 6 & 1
\end{array}
\right) .
\]
This code is a doubly-extended Reed--Solomon code
and is therefore MDS~\cite[p.~323]{MS}.
By exhaustively checking the cosets of the code
we find that this code is
(ordinarily) $(\tau{=}4,L{=}50)$-list decodable;
note that for these parameters,
\[
4 = \tau
= \left\lceil \frac{L(n-k)}{L+1} \right\rceil
> \frac{L(n-k)}{L+1} = \frac{200}{51}
\]
(and, so, the bound~(\ref{eq:Singleton3}) is exceeded in this case).
On the other hand, the coset that contains
the vector $(0 \, 0 \, 0 \, 0 \, 0 \, 2 \, 6 \, 4)$
has a weight distribution as shown in Table~\ref{tab:nonLMDS},
with $A_w$ standing for the number of vectors of weight~$w$
in the coset.
\begin{table}[hbt]
\label{tab:nonLMDS}
\caption{Weight distribution of a coset of the code
in Example~\protect\ref{ex:nonLMDS}.}
\normalsize
\[
\begin{array}{ccccccccc}
\hline
\hline
A_0 & A_1 & A_2 & A_3 & A_4 & A_5 & A_6 & A_7 & A_8 \\
\hline
0   & 0   & 0   & 5   & 45  & 162 & 566 & 921 & 702 \\
\hline
\hline
\end{array}
\]
\end{table}
It follows that the sum of weights of the lightest $L+1 = 51$
vectors in the coset is
\[
3 \cdot A_3 + 4 \cdot A_4 + 5 \cdot 1 = 200 = L(n-k) ,
\]
which means that~$\code$ is not
strongly-$(200/51,50)$-list decodable
and, therefore, is not $50$-MDS
(however, this code turns out to be $L$-MDS for every $L \ge 51$).\qed
\end{example}

We next present some properties of $L$-MDS codes,
starting with the following theorem, which
\fi
is the counterpart of Theorem~\ref{thm:onlyMDS}
for strong list decodability.

\begin{theorem}
\label{thm:stronglyonlyMDS}
Given $L \in [q-1]$, every $L$-MDS code over~$F$ is MDS.
\end{theorem}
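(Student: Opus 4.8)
The plan is to reduce Theorem~\ref{thm:stronglyonlyMDS} to Theorem~\ref{thm:onlyMDS} by exploiting the implication, noted just before the statement, that any strongly-$(\tau,L)$-list decodable code is (ordinarily) $(\lfloor\tau\rfloor,L)$-list decodable. So let $\code$ be a linear $[n,k,d]$ code over $F$ that is $L$-MDS, i.e.\ strongly-$(\tau,L)$-list decodable for $\tau = L(n-k)/(L+1)$. First I would dispose of the trivial cases $k=0$ (where every code is MDS) and $k=n$ (where $n-k=0$, $\tau=0$, and MDS-ness is vacuous), so assume $1 \le k \le n-1$. The goal is to show $d = n-k+1$.

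Next I would argue by contradiction: suppose $\code$ is not MDS, so $d \le n-k$. The key step is to rerun the construction in the proof of Theorem~\ref{thm:onlyMDS}, but bookkeeping the \emph{sum} of weights rather than the maximum. Take a codeword $\bldc$ of weight $d$, WLOG supported on the first $d$ coordinates with all entries equal to $1$; choose a set $Y \subseteq F$ of $L+1$ distinct scalars (possible since $L \le q-1$), and build $\bldy$ whose first $d$ entries contain each element of $Y$ as close to evenly as possible and whose remaining entries are $0$. The $L+1$ scalar multiples $\bldc_0,\ldots,\bldc_L$ of $\bldc$ by the elements of $Y$ are distinct codewords, lie in the coset of $\bldy$ trivially (they are codewords, and so is $\bldy$ minus any of them only if $\bldy\in\code$ --- actually here one works directly with the error vectors $\blde_m = \bldy - \bldc_m$, which are $L+1$ distinct vectors in the same coset of $\code$). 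The point is that $\sum_{m\in[0:L]} \weight(\blde_m)$ counts, over the first $d$ coordinates, for each coordinate the number of the $L+1$ scalars $Y$ that differ from the entry of $\bldy$ there, i.e.\ exactly $L$ per coordinate (one of the $L+1$ scalars matches); coordinates $d+1,\ldots,n$ contribute $0$ since there all $\bldc_m$ and $\bldy$ are $0$. Hence $\sum_m \weight(\blde_m) = Ld \le L(n-k) = (L+1)\tau$, which exactly violates the defining inequality~(\ref{eq:strongly}) of strong-$(\tau,L)$-list decodability. Here it is important that the averaging construction is clean: unlike Theorem~\ref{thm:onlyMDS}, we do not need a floor/ceiling slack, so the bound is tight with $\tau \in \Integers/(L+1)$.

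The main obstacle, and the place requiring the most care, is the counting claim ``each of the first $d$ coordinates is missed by exactly $L$ of the $L+1$ multiples.'' This needs the entry of $\bldy$ at each such coordinate to itself be one of the scalars in $Y$ (so that exactly one multiple matches it there), which is built into the construction of $\bldy$; since $\bldc$ has all-ones on its support, the $j$-th entry of $\bldc_m$ is exactly the $m$-th scalar, so $\blde_m$ is zero at coordinate $j$ iff that scalar equals $\bldy_j$. This is what makes the per-coordinate contribution to $\sum_m \weight(\blde_m)$ equal to $L$ rather than something depending on the distribution of $Y$ within $\bldy$ --- so, unlike in Theorem~\ref{thm:onlyMDS}, the even-distribution of $Y$ in $\bldy$ plays no role and $\bldy$ can be chosen arbitrarily subject to $\bldy_j \in Y$ on the support and $\bldy_j = 0$ off it. I would remark that this is precisely the strong-list-decoding analogue of the argument and that it is in fact slightly simpler than the original. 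Since we have produced $L+1$ distinct vectors in a common coset of $\code$ with weight-sum at most $(L+1)\tau$, $\code$ is not strongly-$(\tau,L)$-list decodable, contradicting the hypothesis that $\code$ is $L$-MDS; therefore $\code$ must be MDS.
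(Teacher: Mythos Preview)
Your proof is correct and uses essentially the same idea as the paper: exhibit $L{+}1$ distinct vectors in one coset whose weights sum to $Ld \le L(n-k)$. The paper's version is simply the special case $\bldy = \bldzero$ with $0 \in Y$, so that the $\blde_m$ are just the $L{+}1$ scalar multiples of a minimum-weight codeword (one of them being $\bldzero$), giving the same total $Ld$ in a single line—your own observation that the distribution of $Y$ within $\bldy$ plays no role already points to this simplification.
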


\ifPAGELIMIT
\else
\begin{proof}
If~$\code$ is not MDS, then the left-hand side of~(\ref{eq:strongly})
would not exceed $L(n-k)$ when the $L+1 \; (\le q)$ vectors $\blde_m$
are taken as $L+1$ distinct scalar multiples of
a codeword of weight${} \le n-k$, with one of the scalars being zero.
\end{proof}
\fi

We also have the following closure property.

\begin{theorem}
\label{thm:nesting}
Let~$\code$ be a linear $[n,k]$ MDS code over~$F$
and let $L \in \Integers^+$ be such that
\begin{equation}
\ifPAGELIMIT
    \nonumber
\else
\label{eq:nesting}
\fi
L < \max \left\{ \binom{n-1}{k-1}
\biggm/ \binom{\lceil (n+k)/2 \rceil - 2}{k-1} ,
k \right\} + 1 .
\end{equation}
If~$\code$ is $L$-MDS,
then it is also $\ell$-MDS for every $\ell \in [L]$.
\end{theorem}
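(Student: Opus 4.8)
The plan is to show that if $\code$ is $L$-MDS then it is $(L-1)$-MDS, and then iterate. So suppose $\code$ is strongly-$(\tau_L,L)$-list decodable with $\tau_L = L(n-k)/(L+1)$, and assume toward a contradiction that $\code$ is not strongly-$(\tau_{L-1},L-1)$-list decodable with $\tau_{L-1} = (L-1)(n-k)/L$. The latter means there exist $L$ distinct vectors $\blde_0,\ldots,\blde_{L-1}$ in a common coset with $\sum_{m \in [0:L-1]} \weight(\blde_m) \le L \tau_{L-1} = (L-1)(n-k)$. The idea is to produce one additional vector $\blde_L$ in the same coset, distinct from the others, whose weight is small enough that the total sum of the $L+1$ weights is at most $(L+1)\tau_L = L(n-k)$; this contradicts the $L$-MDS hypothesis. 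Since we already "spend" $(L-1)(n-k)$ on the first $L$ vectors, we need $\weight(\blde_L) \le (n-k)$, i.e., $\blde_L$ should be (a coset representative realized by) a vector of weight at most the minimum distance of the dual-style support size; equivalently a codeword of $\code$ shifted into the coset, of weight $\le n-k$, which exists precisely because $\code$ is MDS (every coset meets a weight-$(n-k)$ shift via any $(n-k)$-subset supporting a coset leader of a punctured code). The distinctness requirement is the crux: we must be able to choose $\blde_L$ avoiding the $L$ previously chosen vectors, and this is where the two-branch bound on $L$ in~(\ref{eq:nesting}) enters.

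Concretely, first I would set up the counting. Fix a parity-check matrix $H$ and the syndrome $\blds$ of the coset. For any $(n-k)$-subset $J \subseteq [n]$, the submatrix $(H)_J$ is invertible (MDS property), so there is a unique vector supported on $J$ with syndrome $\blds$; call it $\blde^{(J)}$, and it has weight $\le n-k$. These $\binom{n}{n-k} = \binom{n}{k}$ vectors (some possibly coinciding) are the "light" coset members we are allowed to use as $\blde_L$. I would then bound how many distinct $J$ can fail: a given already-chosen $\blde_m$ of weight $w_m \le n-k$ can equal $\blde^{(J)}$ only when $\Support(\blde_m) \subseteq J$, and the number of such $J$ is $\binom{n - w_m}{n-k-w_m}$. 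Summing over $m$, and using convexity / the constraint $\sum w_m \le (L-1)(n-k)$ together with each $w_m \le n-k$, I would bound the total number of "blocked" subsets $J$. If this bound is strictly less than $\binom{n}{k}$ — wait, more carefully, less than the number of distinct vectors $\blde^{(J)}$ — then some admissible $\blde_L$ survives.

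I would handle the two branches of the max in~(\ref{eq:nesting}) separately. The branch $L < k+1$, i.e. $L \le k$: here I'd use that any $L$ vectors of weight $\le n-k$ together cannot contain all weight-$(n-k)$ supports; with $L \le k$ vectors one can find an $(n-k)$-subset $J$ not containing any of their supports provided the supports are "spread out," or more robustly, among the $\binom{n}{k} - 1 \ge \ldots$ nonzero-support candidates there's room — the combinatorial estimate $\sum_m \binom{n-w_m}{n-k-w_m}$ is maximized when as much weight as possible is concentrated (each term is a decreasing function of $w_m$ when... actually increasing as $w_m$ decreases), so the worst case is $w_m$ as small as possible, bounded by the average $(L-1)(n-k)/L$; I'd push through the resulting inequality. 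The branch $L < \binom{n-1}{k-1}/\binom{\lceil(n+k)/2\rceil-2}{k-1} + 1$ is the sharper one, tuned so that even when all $w_m = n-k$ (the extreme allowed by the average, with $L-1$ of them at $n-k$ and the rest at $0$ — but they must be distinct, so that's impossible; the true extreme is more subtle), the count of blocked subsets stays below the count of available ones. The appearance of $\lceil(n+k)/2\rceil - 2$ suggests the binding configuration is when weights cluster near $(n+k)/2$ or the supports overlap in a size-$\lceil(n+k)/2\rceil-2$ region, reminiscent of the construction in the proof of Theorem~\ref{thm:SingletonImproved} with $s = \lceil(n+k)/2\rceil$.

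The main obstacle I anticipate is the distinctness bookkeeping combined with getting the precise constant $\lceil(n+k)/2\rceil - 2$ to come out: one has to argue not merely that some light vector $\blde^{(J)}$ is unblocked, but that the new list of $L+1$ vectors $\blde_0,\ldots,\blde_{L-1},\blde^{(J)}$ is genuinely a list of $L+1$ \emph{distinct} vectors, and simultaneously that the weight sum does not exceed $L(n-k)$. If the supplied $L$ vectors already nearly saturate $(L-1)(n-k)$ in total weight, the budget for $\blde_L$ is exactly $n-k$, which is tight — so we cannot afford $\blde_L$ of weight $n-k$ unless the earlier sum is strictly less, forcing a careful case split between "the earlier sum is strictly below $(L-1)(n-k)$" (then any light $\blde^{(J)}$ distinct from the others works, and distinctness is easy by a dimension count since there are $\binom{n}{k}$ choices) versus "the earlier sum equals exactly $L\tau_{L-1}$, which is not an integer unless $L \mid (n-k)$" — the non-integrality of $\tau_{L-1}$ when $L \nmid (n-k)$ actually gives slack, so the genuinely hard case is $L \mid (n-k)$. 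I would exploit this divisibility to sharpen the weight accounting, and I expect the two-branch condition on $L$ is exactly what guarantees an unblocked, distinct $\blde^{(J)}$ in that hardest case.
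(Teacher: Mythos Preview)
Your overall framework is correct and matches the paper's: assume a witness $(\blde_0,\ldots,\blde_{\ell})$ that $\code$ is not $\ell$-MDS (for some $\ell<L$), and append one new vector $\blde_{\ell+1}$ of weight $\le n-k$ in the same coset, distinct from the others; the total weight is then $\le\ell(n-k)+(n-k)=(\ell+1)(n-k)$, contradicting the $(\ell+1)$-MDS property. Your reformulation---finding an $(n-k)$-subset $J$ with $\Support(\blde_m)\not\subseteq J$ for all $m$, equivalently a $k$-subset hitting every $\Support(\blde_m)$---is exactly the right target, and a union bound on ``blocked'' subsets is equivalent in strength to the paper's hitting-set lemma.

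There are, however, two genuine gaps.

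First---and this is the essential one---you are missing the structural observation that drives the specific bound in~(\ref{eq:nesting}). Since any two of the $\blde_m$'s differ by a nonzero codeword, $\weight(\blde_m)+\weight(\blde_{m'})\ge d=n-k+1$; hence \emph{at most one} of them can have weight below $\lceil d/2\rceil$. The paper handles that one possible light vector separately (placing one of its nonzero coordinates, say $n$, directly into the hitting set), and then applies a hitting-set lemma to the remaining $\ell$ vectors, each of whose supports has size at least $s:=\lfloor(n-k)/2\rfloor+1$: if $\ell<\max\bigl\{\binom{w}{t}/\binom{w-s}{t},\,t+1\bigr\}$ then $\ell$ subsets of $[w]$ of size $\ge s$ admit a transversal of size $\le t$. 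With $w=n-1$ and $t=k-1$ this yields precisely the two branches of~(\ref{eq:nesting}), since $w-s=\lceil(n+k)/2\rceil-2$. Without the $d/2$ observation, your union bound $\sum_m\binom{n-w_m}{k}$ cannot be controlled: a single $\blde_m$ of weight $1$ already blocks $\binom{n-1}{k}$ subsets. Your guess that the binding configuration has ``weights clustered near $(n+k)/2$'' is numerically in the right neighborhood, but the \emph{reason} is this pairwise distance constraint, not the total-weight constraint $\sum w_m\le\ell(n-k)$ that you were trying to exploit.

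Second, your worry that ``the budget for $\blde_L$ is exactly $n-k$, which is tight'' and the ensuing divisibility case-split are unnecessary. The $L$-MDS property requires the sum of weights of any $L+1$ distinct coset members to \emph{exceed} $L(n-k)$; a sum equal to $L(n-k)$ already violates it. So $\weight(\blde_{\ell+1})\le n-k$ is always sufficient, with no slack needed from the earlier vectors.
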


\ifPAGELIMIT
\else
We prove the theorem using the following lemma,
which is proved in Appendix~\ref{sec:skipped}.

\begin{lemma}
\label{lem:nesting}
Given $w, \ell, s, t \in \Integers^+$,
let $J_1, J_2, \ldots, J_\ell$ be subsets of $[w]$,
each of size at least~$s$. If
\[
\ell < \max \left\{ \binom{w}{t} \biggm/ \binom{w-s}{t} , t+1 \right\} ,
\]
then there exists a subset $X \subseteq [w]$
of size at most~$t$ that intersects with $J_m$,
for each $m \in [\ell]$.
\end{lemma}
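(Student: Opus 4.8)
The plan is to argue by cases according to which term of the $\max$ is being used, with a short first-moment (union-bound) argument for the main case. First I would dispose of two easy situations. If $t \ge w$, then $X = [w]$ has size ${}\le t$ and meets every one of the sets $J_m$ (which are nonempty since $s \ge 1$), so we are done; hence assume $t < w$, which in particular gives $\binom{w}{t} \ge 1$. If moreover $\ell \le t$, pick an arbitrary element $x_m \in J_m$ for each $m \in [\ell]$ and set $X = \{x_1, \ldots, x_\ell\}$; then $|X| \le \ell \le t$ and $x_m \in X \cap J_m$, so $X$ works. This disposes of every instance in which $\ell < t+1$, so from now on I may assume $\ell \ge t+1$; the hypothesis $\ell < \max\{\binom{w}{t}/\binom{w-s}{t},\, t+1\}$ then forces the first term to dominate, i.e. $\ell < \binom{w}{t}/\binom{w-s}{t}$.

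For this remaining case I would choose $X$ uniformly at random among the $\binom{w}{t}$ subsets of $[w]$ of size exactly $t$. Fix $m \in [\ell]$. The set $X$ fails to meet $J_m$ precisely when $X \subseteq [w]\setminus J_m$, and since $|[w]\setminus J_m| \le w-s$, the number of size-$t$ subsets of $[w]\setminus J_m$ is at most $\binom{w-s}{t}$ (with the convention $\binom{a}{t}=0$ for $a < t$, which automatically covers the case $w-s < t$). Hence the probability that $X$ misses $J_m$ is at most $\binom{w-s}{t}/\binom{w}{t}$, and by the union bound the probability that $X$ misses at least one of the $\ell$ sets is at most $\ell\cdot\binom{w-s}{t}/\binom{w}{t} < 1$, using the standing assumption on $\ell$. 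Therefore some size-$t$ subset $X$ of $[w]$ meets all of $J_1, \ldots, J_\ell$, and this is the desired set.

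I do not expect any genuine obstacle here: the argument is essentially a one-line first-moment computation, and the only point needing a little care is the behaviour of $\binom{w-s}{t}$ when $w-s < t$, which the ``count the subsets contained in a set of size ${}\le w-s$'' phrasing handles transparently. (Equivalently, one could run a deterministic double count: summing over all $t$-subsets $X$ the number of indices $m$ with $X \cap J_m = \emptyset$ gives a total of at most $\ell\binom{w-s}{t} < \binom{w}{t}$, so some $X$ is counted zero times.) If it helps exposition, one may first shrink each $J_m$ to an arbitrary $s$-element subset of itself — this only makes a hitting set harder to find and leaves the hypothesis unchanged — so that ``size at least $s$'' can be read as ``size exactly $s$'' throughout; this step is not strictly necessary.
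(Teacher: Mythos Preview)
Your proof is correct, but it takes a different route from the paper's. The paper argues constructively via a greedy procedure: starting from $X=\emptyset$ and $\LL=\{J_1,\ldots,J_\ell\}$, at each step add to $X$ an element of $[w]$ contained in the largest number of surviving sets in $\LL$, remove those sets, and repeat. A pigeonhole count shows that at step $i$ at least a fraction $s/(w-i+1)$ of the surviving sets are removed, so if $\ell_i$ denotes the number of sets surviving after step~$i$ then $\ell_{i+1}\le \lfloor \ell_i (w-i-s)/(w-i)\rfloor$, and telescoping gives $\ell_t\le \ell\cdot\binom{w-s}{t}/\binom{w}{t}<1$.

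Both arguments land on exactly the same final inequality $\ell\cdot\binom{w-s}{t}<\binom{w}{t}$; yours reaches it by a one-line union bound over a uniformly random $t$-subset, while the paper reaches it by tracking the greedy residual. Your version is shorter and arguably cleaner; the paper's version has the mild advantage of yielding an explicit procedure for building~$X$ (and of making clear that the elements can be chosen one at a time, which is occasionally useful). Either proof is fully adequate for the lemma as stated.
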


\begin{proof}[Proof of Theorem~\ref{thm:nesting}]
The proof is by contradiction.
Suppose that ($\code$ is $L$-MDS and)
$\ell$ is the largest in $[L]$ for which~$\code$ is not $\ell$-MDS,
namely, there exist~$\ell$ distinct vectors
$\blde_0, \blde_1, \ldots, \blde_\ell \in F^n$
such that
\begin{equation}
\label{eq:nesting1}
\sum_{m \in [0:\ell]} \weight(\blde_m) \le \ell(n-k)
\end{equation}
and, for an $(n-k) \times n$
parity-check matrix~$H = (\bldh_j)_{h \in [n]}$ of~$\code$:
\begin{equation}
\label{eq:nesting2}
\blds = H \blde_0^\top = H \blde_1^\top = \cdots
= H \blde_\ell^\top .
\end{equation}
We note that~$\blds$ is nonzero
and, therefore, each vector $\blde_m$ is nonzero:
if~$\blds$ were zero then at least~$\ell$ of the vectors $\blde_m$
would be nonzero codewords of~$\code$,
in which case the left-hand side of~(\ref{eq:nesting1}) would be
at least $\ell(n{-}k{+}1)$.
We also note that since the difference between any two distinct
vectors~$\blde_m$ is a nonzero codeword of~$\code$,
we have $\weight(\blde_m) \ge d/2 = (n-k+1)/2$
for all $m \in [0:\ell]$ except, possibly, for one index~$m$,
say $m = 0$. We also assume without loss of generality
that the last entry of $\blde_0$ is nonzero.

Next, we construct a subset $J \subseteq [n]$ which
intersects with $\Support(\blde_m)$, for each $m \in [0:\ell]$.
Specifically, we let $J = \{ n \} \cup X$, where~$X$
is the subset guaranteed by Lemma~\ref{lem:nesting}
when applied with
$w = n-1$,
$\ell \; (\le L - 1)$,
$s = \lceil d/2 \rceil = \lfloor (n-k)/2 \rfloor + 1$,
$t = k-1$,
and $J_m = \Support(\blde_m)$, $m \in [\ell]$.
By the lemma we then get that $|J| \le k$.
Writing $J' = [n] \setminus J$,
it follows that the $(n-k) \times |J'|$
submatrix $(H)_{J'} = (\bldh_j)_{j \in J'}$
has full rank $n-k$, which means that there exists
a subset $J_{\ell+1} \subseteq J'$ of size $|J_{\ell+1}| \le n-k$
such that
\[
\blds = \sum_{j \in J_{\ell+1}} a_j \bldh_j ,
\]
for some $\blda = (a_j)_{j \in J_{\ell+1}}$ over~$F$.
Define the vector $\blde_{\ell+1} \in F^n$ by
$(\blde_{\ell+1})_{J_{\ell+1}} = \blda$
and $(\blde_{\ell+1})_{[n] \setminus J_{\ell+1}} = \bldzero$.
We have
\begin{equation}
\label{eq:nesting3}
\weight(\blde_{\ell+1}) \le |J_{\ell+1}| \le n-k
\quad \textrm{and} \quad
\blds = H \blde_{\ell+1}^\top .
\end{equation}
Moreover, for each $m \in [0:\ell]$ there exists a coordinate on which
$\blde_{\ell+1}$ is zero
while $\blde_m$ is not, namely, $\blde_{\ell+1} \ne \blde_m$.
Combining (\ref{eq:nesting1})--(\ref{eq:nesting2})
with~(\ref{eq:nesting3}) yields that~$\code$ is not $(\ell+1)$-MDS,
thereby contradicting our assumption that~$\ell$
is the largest in $[L]$ for which~$\code$ is not $\ell$-MDS.
\end{proof}

\begin{remark}
\label{rem:nesting}
Fixing the rate $R = k/n$ to be bounded away from~$0$ and from~$1$,
the expression in the right-hand side of~(\ref{eq:nesting})
can be shown to grow exponentially with~$n$.\qed
\end{remark}
\fi

\ifPAGELIMIT
    As we show in~\cite[\S III]{Roth},
    there are few cases where the MDS property implies
    the $L$-MDS property for $L > 1$
    (e.g., when $k = 1$ or when $k \ge n-2$).
    Moreover, we have the following theorem.
\else
There are few cases where the MDS property implies
the $L$-MDS property for $L > 1$, as shown
in the following examples and in
Theorem~\ref{thm:stronglyhighL} below.

\begin{example}
\label{ex:LMDSn-k=2}
We verify that when $k \ge n-2$,
any linear $[n,k]$ MDS code~$\code$ over~$F$ is $L$-MDS 
for every $L \in \Integers^+$.
Let $\blde_0, \blde_1, \ldots, \blde_L$ be distinct
vectors in the same coset of~$\code$. If all of them
have weight~$2$ or more, then their sum of weights 
is at least $2(L+1) \ge L(n-k) + 2$. Otherwise, if, say,
$\weight(\blde_0) \le 1$ then
$\weight(\blde_m) \ge n-k+1 - \weight(\blde_0)$ for all $m \in [L]$
and, so,
\[
\sum_{m \in [0:L]} \weight(\blde_m)
\ge L(n-k+1) - (L-1) \cdot
\underbrace{\weight(\blde_0)}_{{} \le 1}
\ge L(n-k) + 1
\]
(where the last inequality is strict when $\blde_0 = \bldzero$);
i.e., $\code$ is $L$-MDS.
Considering the case $\weight(\blde_0)= 1$,
it follows from the proof of
Theorem~\ref{thm:stronglySingletonImproved}
(and of Theorem~\ref{thm:SingletonImproved})
that in this case, there are
exactly $\binom{n-1}{k-1}$ vectors of weight $n-k$
in the coset that contains $\blde_0$,
and all the remaining vectors in that coset have larger weight.
Hence, when $L > \binom{n-1}{k-1}$
(namely, when we are outside the range of~$L$ to
which Theorem~\ref{thm:stronglySingletonImproved} applies),
we have
\[
\sum_{m \in [0:L]} \weight(\blde_m) =
\underbrace{\weight(\blde_0)}_{1}
+ \sum_{m \in [L]} \weight(\blde_m) > 1 + L(n-k) .
\]
We conclude that for linear $[n,k{\ge}n{-}2]$ MDS codes,
the bound~(\ref{eq:Singleton3})
is exceeded when $L > \binom{n-1}{k-1}$.\qed
\end{example}

\begin{example}
\label{ex:LMDSk=1}
The $[n,k{=}1]$ repetition code over~$F$ is $L$-MDS 
for every $L \in \Integers^+$:
given $L+1$ distinct vectors $\blde_0, \blde_1, \ldots, \blde_L$
in the same coset of the code, 
write $J_m = \Support(\blde_m)$ and let
$J'_m = [n] \setminus J_m$.
Since nonzero codewords have weight~$n$,
the sets $J'_m$ must be disjoint and, so,
\begin{eqnarray*}
\sum_{m \in [0:L]} \weight(\blde_m)
& = &
\sum_{m \in [0:L]} \left(n-|J'_m| \right) \\
& = &
(L+1) n - \sum_{m \in [0:L]} |J'_m| \\
& \ge &
(L+1) n - n \ge L(n-k) + 1 ,
\end{eqnarray*}
with the last inequality being strict when $L > 1$.
Hence, $\code$ is $L$-MDS 
and, in addition, (\ref{eq:Singleton3}) is exceeded
when we are outside the range of
Theorem~\ref{thm:stronglySingletonImproved}.\qed
\end{example}

\begin{example}
\label{ex:LMDSk=2,L=2}
We show that every linear $[n,k{=}2]$ MDS code~$\code$ is $2$-MDS.
Let $\blde_0, \blde_1, \blde_2$
be distinct vectors in the same coset of~$\code$
and write $J_m = \Support(\blde_m)$
and $J'_m = [n] \setminus J_m$.
Since nonzero codewords have weight${} \ge n-k-1 = n-1$,
we have $|J_m \cup J_\ell| \ge n-1$ for $0 \le m < \ell \le 2$
and, so,
\[
\left| J'_m \cap J'_\ell \right|
=
\left| (J_m \cup J_\ell)' \right|
\le 1 .
\]
   From $\left| \bigcup_{m \in [0:2]} J'_m \right| \le n$
we get, by the inclusion--exclusion principle:
\begin{eqnarray*}
\lefteqn{
\sum_{m \in [0:2]} \left| J'_m \right|
} \makebox[0ex]{} \\
& = &
\Bigl| \! \bigcup_{m \in [0:2]} \! J'_m \Bigr|
+
\sum_{0 \le m < \ell \le 2}
\left| J'_m \cap J'_\ell \right|
-
\Bigl| \! \bigcap_{m \in [0:2]} \! J'_m \Bigr| \\
& \le &
n + 3.
\end{eqnarray*}
Hence, for $L = k = 2$:
\[
\sum_{m \in [0:2]} \weight(\blde_m)
= \sum_{m \in [0:2]} \left( n - \left| J'_m \right| \right)
\ge 2n - 3 > L(n-k) . \qed
\]
\end{example}

Contrary to Examples~\ref{ex:LMDSn-k=2} and~\ref{ex:LMDSk=1}, however,
we cannot claim that the codes in the last example are $L$-MDS
for $L > 2$. We demonstrate this in the next example.

\begin{example}
\label{ex:LMDSk=2,L=3}
Let~$\code$ be the linear $[6,2]$ MDS code over $F = \GF(5)$
with the generator matrix
\[
G =
\left(
\begin{array}{cccccc}
1 & 0 & 1 & 1 & 1 & 1 \\
0 & 1 & 1 & 2 & 3 & 4
\end{array}
\right) .
\]
The following four vectors can be verified to be
in the same coset of $\code$ within $F^6$:
\[
( 1 \, 1 \, 3 \, 0 \, 0 \, 0 ) , \;
( 0 \, 4 \, 0 \, 0 \, 3 \, 1 ) , \;
( 4 \, 0 \, 0 \, 1 \, 0 \, 4 ) , \;
( 0 \, 0 \, 1 \, 2 \, 1 \, 0 ) .
\]
Therefore, $\code$ is not $(3,3)$-list decodable and, hence,
is not $3$-MDS. In fact, $\code$ is not $L$-MDS also when
$L = 4, 5, 6$, yet it is $L$-MDS for $L \ge 7$.\qed
\end{example}

With every MDS code~$\code$ we can associate a threshold, $L_0(\code)$,
which is the smallest positive integer such that~$\code$ is
$L$-MDS when $L \ge L_0(\code)$
(such a threshold always exists since every code is $L$-MDS
for $L \ge |\code|$).
We provide an upper bound on $L_0(\code)$
in the next theorem, which we prove using the next two lemmas:
the first presents a property of the weight distribution
of a coset of an MDS code and follows from MacWilliams' identities,
and the second is proved in Appendix~\ref{sec:skipped}.

\begin{lemma}[\cite{Bonneau}]
\label{lem:Bonneau}
Let~$\code$ be a linear $[n,k]$ MDS code over~$F$
and let $(A_w)_{w \in [0:n]}$ be the weight distribution
of some coset~$X$ of~$\code$ within $F^n$
(where $A_w$ is the number of vectors of weight~$w$ in~$X$).
Then
\[
\sum_{w=0}^{n-k} \binom{n-w}{k} A_w = \binom{n}{k} .
\]
\end{lemma}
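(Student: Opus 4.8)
The plan is to establish the identity by a double-counting argument; although it can also be extracted from MacWilliams' identities for cosets (as indicated in the source), the combinatorial route is shorter and self-contained. Fix an $(n-k) \times n$ parity-check matrix~$H$ of~$\code$, and let $\blds \in F^{n-k}$ be the common syndrome of the vectors in~$X$, so that $X = \{ \blde \in F^n : H \blde^\top = \blds \}$. I will count, in two ways, the number of pairs $(\blde, J)$ with $\blde \in X$, $J \subseteq [n]$, $|J| = k$, and $(\blde)_J = \bldzero$ (equivalently, $J \cap \Support(\blde) = \emptyset$).

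Summing first over $\blde \in X$: a $k$-subset~$J$ is disjoint from $\Support(\blde)$ precisely when it lies inside the complement of $\Support(\blde)$, and the number of such subsets is $\binom{n - \weight(\blde)}{k}$. Grouping the vectors of~$X$ by weight, the count equals $\sum_{w=0}^{n} \binom{n-w}{k} A_w$; since $\binom{n-w}{k} = 0$ whenever $w > n-k$, this is exactly the left-hand side of the asserted identity.

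Summing instead over~$J$: fix a $k$-subset~$J$ and put $J' = [n] \setminus J$, which has size $n-k$. The pairs with this~$J$ are in bijection with the vectors $\blde \in X$ whose support is contained in~$J'$. Here is where the MDS hypothesis enters: since $\code$ has minimum distance $n-k+1$, every $n-k$ columns of~$H$ are linearly independent, so the square submatrix $(H)_{J'}$ is invertible. Consequently $\bldx \mapsto (H)_{J'}\bldx^\top$ is a bijection on $F^{n-k}$, and there is exactly one vector $\blde \in F^n$ supported on~$J'$ with $H \blde^\top = \blds$; this $\blde$ necessarily lies in~$X$. Hence each of the $\binom{n}{k}$ choices of~$J$ contributes exactly one pair, so the count equals $\binom{n}{k}$. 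Equating the two counts proves the lemma.

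I do not anticipate a genuine obstacle here: the only step that needs any thought is the invertibility of $(H)_{J'}$, and that is precisely the statement that~$\code$ (equivalently, the dual code $\code^\perp$, of which~$H$ is a generator matrix) is MDS. The remaining manipulations are routine.
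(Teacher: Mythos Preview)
Your double-counting argument is correct and complete. The paper itself does not prove this lemma at all: it is stated with a citation to Bonneau~\cite{Bonneau} and the remark that it ``follows from MacWilliams' identities,'' with no further details. Your route is therefore genuinely different from what the paper indicates---you bypass the MacWilliams machinery entirely in favor of a direct combinatorial count of pairs $(\blde,J)$, invoking the MDS property exactly once (invertibility of $(H)_{J'}$) to show each $k$-subset contributes a single pair. This is more elementary and self-contained than deriving the identity from the coset MacWilliams transform; the MacWilliams approach, on the other hand, would in principle yield a family of such identities (one for each moment), of which this is a single instance. For the purposes of the paper, your argument is entirely adequate and arguably preferable for its transparency.
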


\begin{lemma}
\label{lem:vartheta}
Given positive integers $k < n$, define the rational sequence
$(\vartheta_w)_{w \in [0:n-k-1]}$ by
\begin{equation}
\label{eq:vartheta}
\vartheta_w =
\frac{1}{n{-}k{-}w} \left( \binom{n-w}{k} - (n{-}k{-}w{+}1) \right) .
\end{equation}
This sequence is all-zero when $k = 1$
and is strictly decreasing when $k > 1$.
\end{lemma}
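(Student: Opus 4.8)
The case $k=1$ is immediate, since then $\binom{n-w}{1}=n-w=n-k-w+1$, so the bracketed expression in~(\ref{eq:vartheta}) vanishes identically and $\vartheta_w=0$ for every~$w$.

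So assume $k\ge 2$. The plan is to reparametrize the sequence so that the claimed strict decrease becomes a clean monotonicity statement. Put $m=n-k-w$; as $w$ ranges over $[0:n-k-1]$ the value~$m$ ranges over $[1:n-k]$, and a one-line computation rewrites~(\ref{eq:vartheta}) as
\[
\vartheta_w=\frac{1}{m}\left(\binom{m+k}{k}-1\right)-1 .
\]
Since $w\mapsto m$ is strictly decreasing, it then suffices to prove that $f(m):=\bigl(\binom{m+k}{k}-1\bigr)/m$ is strictly increasing over the positive integers. I would do this by expanding $\binom{m+k}{k}$ with the Vandermonde identity, $\binom{m+k}{k}=\sum_{i=0}^{k}\binom{m}{i}\binom{k}{i}$; dropping the $i=0$ term (which is~$1$) and using $\binom{m}{i}/m=\binom{m-1}{i-1}/i$ yields
\[
f(m)=\sum_{i=1}^{k}\frac{1}{i}\binom{k}{i}\binom{m-1}{i-1} .
\]
Applying Pascal's rule term by term, $\binom{m}{i-1}-\binom{m-1}{i-1}=\binom{m-1}{i-2}$, gives
\[
f(m+1)-f(m)=\sum_{i=2}^{k}\frac{1}{i}\binom{k}{i}\binom{m-1}{i-2}\ \ge\ \frac{1}{2}\binom{k}{2}\ >\ 0 ,
\]
the displayed lower bound coming from the $i=2$ term together with $k\ge 2$. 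This proves strict monotonicity of~$f$, hence strict decrease of $(\vartheta_w)_w$.

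\textbf{Where the work is.}
There is no genuine obstacle here, only one point worth watching: the $i=1$ contribution to $f(m+1)-f(m)$ is zero (that summand of~$f$ is the constant~$k$), so strictness must be squeezed out of the next term; at the boundary $k=2$ the surviving term equals exactly $\tfrac12$, which is precisely the margin created by the ``$-1$'' subtracted inside~(\ref{eq:vartheta})---and also explains why the $k=1$ sequence is flat rather than decreasing. An equivalent elementary route skips Vandermonde altogether: writing $a=n-k-w$ and clearing denominators in $\vartheta_w-\vartheta_{w+1}$, then using Pascal's rule on $\binom{a+k}{k}$ and the identity $\binom{N}{k}=\tfrac{N-k+1}{k}\binom{N}{k-1}$, reduces the claim to $\bigl(a(k-1)-k\bigr)\binom{a+k-1}{k-1}+k>0$, which holds for $a\ge 2$ and $k\ge 2$ since $a(k-1)-k\ge k-2\ge 0$.
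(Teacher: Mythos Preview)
Your proof is correct, and your main route is genuinely different from the paper's. The paper argues directly that $\vartheta_{w+1}<\vartheta_w$ by a chain of equivalent scalar inequalities: starting from $n-w\ge k+2\ge k^2/(k-1)$, it rewrites this successively as $(n{-}k{-}w)^2\le(n{-}k{-}w{-}1)(n{-}w)$, then as $(n{-}k{-}w)\binom{n-w-1}{k}\le(n{-}k{-}w{-}1)\binom{n-w}{k}$, then adds~$1$ to force strictness and unpacks back to $\vartheta_{w+1}<\vartheta_w$. Your approach instead reparametrizes and uses Vandermonde to exhibit $f(m)=\sum_{i\ge1}\tfrac{1}{i}\binom{k}{i}\binom{m-1}{i-1}$ as a nonnegative combination of binomials in~$m$, so that Pascal's rule makes monotonicity self-evident and even quantifies the gap as $\ge\tfrac{1}{2}\binom{k}{2}$. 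This is cleaner and more informative; the paper's argument is shorter to write down but the strictness step (``adding~$1$'') is less transparent. The alternative route you sketch at the end---clearing denominators and reducing to $(a(k{-}1){-}k)\binom{a+k-1}{k-1}+k>0$---is essentially the paper's computation carried out in the variable $a=n-k-w$, arriving at the same final inequality in slightly tidier form.
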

\fi

\begin{theorem}
\label{thm:stronglyhighL}
Let~$\code$ be a linear $[n,k]$ MDS code over~$F$.
Then $\code$ is $L$-MDS for every $L \in \Integers^+$ such that
\begin{equation}
\ifPAGELIMIT
    \nonumber
\else
\label{eq:stronglyhighL1}
\fi
L \ge \binom{n}{k} - k(n-k) .
\end{equation}
\end{theorem}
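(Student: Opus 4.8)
The plan is to reduce the $L$-MDS property to a single inequality about the weight distribution of an arbitrary coset of~$\code$, and then to read that inequality off from Bonneau's identity (Lemma~\ref{lem:Bonneau}) with the help of the monotonicity of $(\vartheta_w)$ (Lemma~\ref{lem:vartheta}). First I would fix a coset~$X$ of~$\code$ (we may assume $q^k \ge L+1$, since otherwise $|X| \le L$ and the assertion is vacuous), and label its $L+1$ lightest vectors $\blde_0, \blde_1, \ldots, \blde_L$ so that $\weight(\blde_0) \le \weight(\blde_1) \le \cdots \le \weight(\blde_L)$. Since among all $L+1$ distinct vectors of~$X$ these minimise the left-hand side of~(\ref{eq:strongly}), and $(L+1)\tau = L(n-k)$ when $\tau = L(n-k)/(L+1)$, the code~$\code$ is $L$-MDS if and only if, for every coset~$X$,
\[
\sum_{i \in [0:L]} \bigl( (n-k) - \weight(\blde_i) \bigr) < n-k .
\]
Writing $A_w$ for the number of weight-$w$ vectors of~$X$, I would set $D = \sum_{w=0}^{n-k-1} (n-k-w) A_w$ and $P = \sum_{w=0}^{n-k} A_w$.

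The argument then splits on the size of~$D$. If $D < n-k$ there is nothing to prove: each term $(n-k) - \weight(\blde_i)$ with $\weight(\blde_i) \ge n-k$ is nonpositive while the vectors of weight below $n-k$ contribute at most $D$ in total, so the displayed sum is $\le D < n-k$ and $X$ imposes no restriction on~$L$ at all. So suppose $D \ge n-k$. Here I would start from Bonneau's identity, which (as $\binom{k}{k}=1$) reads $A_{n-k} + \sum_{w=0}^{n-k-1} \binom{n-w}{k} A_w = \binom{n}{k}$, observe that $D + P = A_{n-k} + \sum_{w=0}^{n-k-1} (n-k+1-w) A_w$, and subtract the two. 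Since $\binom{n-w}{k} - (n-k+1-w) = \binom{n-w}{k} - (n-k-w+1) = (n-k-w)\, \vartheta_w$ by the definition~(\ref{eq:vartheta}) of~$\vartheta_w$, this yields the key identity
\[
\binom{n}{k} - (D+P) = \sum_{w=0}^{n-k-1} (n-k-w)\, \vartheta_w\, A_w .
\]
By Lemma~\ref{lem:vartheta}, $\vartheta_w \ge \vartheta_{n-k-1} = \binom{k+1}{k} - 2 = k-1$ for every $w$ in the range, so the right-hand side is at least $(k-1) \sum_{w} (n-k-w) A_w = (k-1) D \ge (k-1)(n-k)$; hence $D + P \le \binom{n}{k} - (k-1)(n-k)$.

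Everything now falls out of this one inequality. It gives $P \le \binom{n}{k} - k(n-k) \le L$, so the~$P$ vectors of~$X$ of weight $\le n-k$ all occur among $\blde_0, \ldots, \blde_L$ while the remaining $L+1-P \ge 1$ of these have weight $\ge n-k+1$; consequently
\[
\sum_{i \in [0:L]} \bigl( (n-k) - \weight(\blde_i) \bigr)
\;\le\; D - (L+1-P) \;=\; (D+P) - (L+1) \;\le\; n-k-1 ,
\]
the final step using $D+P \le \binom{n}{k} - (k-1)(n-k)$ together with $L+1 > \binom{n}{k} - k(n-k)$. As $X$ was arbitrary, $\code$ is $L$-MDS. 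The degenerate cases need only a remark: when $k = 1$ one has $\vartheta_w \equiv 0$ and the bound reads $L \ge 1$, with the same computation still valid; $k = n$ is immediate; and $|X| < L+1$ is vacuous.

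The step I expect to be the real obstacle is choosing the right quantity to track. Bounding $\sum_i\bigl((n-k)-\weight(\blde_i)\bigr)$ by the total deficiency of the coset alone is too lossy---it fails precisely for cosets whose lightest vectors all sit at weight $n-k$ (for instance the deep-hole cosets of a Reed--Solomon code, where the deficiency is small yet the number of minimum-weight vectors is as large as $\binom{n}{k}$). The fix is to carry $D$ and $P$ jointly and to recognise that $\binom{n}{k} - (D+P)$ is exactly the $\vartheta$-weighted sum that drops out of Bonneau's identity; after that, the monotonicity of $(\vartheta_w)$ does the rest mechanically. Pinning down the split---why $D < n-k$ needs no hypothesis on~$L$, and why $D \ge n-k$ forces $P \le L$---is the other point that requires a little care.
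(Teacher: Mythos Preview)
Your proof is correct and follows essentially the same route as the paper: the same case split on $D=\sum_{w<n-k}(n-k-w)A_w$, the trivial handling of $D<n-k$, and in the case $D\ge n-k$ the same combination of Bonneau's identity with the monotonicity of $(\vartheta_w)$ to obtain $D+P\le\binom{n}{k}-(k-1)(n-k)$. Your only organisational difference is that you explicitly extract $P\le L$ as an intermediate step before bounding the sum, whereas the paper goes directly for the inequality $L+(n-k)\ge D+P$; the underlying computation is identical.
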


\ifPAGELIMIT
\else
\begin{proof}
Fix~$X$ to be any coset of~$\code$ within $F^n$
and let $(A_w)_{w \in [0:n]}$ be the weight distribution of~$X$.
We distinguish between two (disjoint) cases.

\emph{Case 1:}\/ the weight distribution satisfies the inequality
\begin{equation}
\label{eq:stronglyhighL2}
\sum_{w=0}^{n-k-1} (n{-}k{-}w) A_w \le n-k-1 .
\end{equation}
For a vector $\blde \in X$, define its \emph{deficiency} by
\[
\delta(\blde) = \max \{ n-k-\weight(\blde), 0 \} .
\]
Observing that the left-hand side of~(\ref{eq:stronglyhighL2})
is the sum of the deficiencies of all the vectors in~$X$,
we then get that for any $L+1$ distinct vectors
$\blde_0, \blde_1, \ldots, \blde_L \in X$:
\begin{equation}
\label{eq:stronglyhighL3}
\sum_{m \in [0:L]} \delta(\blde_m)
\le \sum_{\blde \in X} \delta(\blde) \le n-k-1 .
\end{equation}
Hence,
\begin{eqnarray*}
\sum_{m \in [0:L]} \weight(\blde_m)
& \ge &
\sum_{m \in [0:L]} (n-k - \delta(\blde_m)) \\
& = & (L+1)(n-k) - \sum_{m \in [0:L]} \delta(\blde_m) \\
& \stackrel{\textrm{(\ref{eq:stronglyhighL3})}}{\ge} & 
(L+1)(n-k) - (n-k-1) \\
& = & L(n-k) + 1
\end{eqnarray*}
(regardless of whether~$L$ satisfies~(\ref{eq:stronglyhighL1})).

\emph{Case 2:}\/ the weight distribution satisfies the inequality
\begin{equation}
\label{eq:stronglyhighL4}
\sum_{w=0}^{n-k-1} (n{-}k{-}w) A_w \ge n-k .
\end{equation}
Let $\blde_0, \blde_1, \ldots, \blde_L$ be distinct
vectors in~$X$. Then
\begin{eqnarray*}
\lefteqn{
\sum_{m \in [0:L]} \weight(\blde_m)
} \makebox[5ex]{} \\
& \ge &
\sum_{w=0}^{n-k} w A_w
+ \biggl( L+1- \sum_{w=0}^{n-k} A_w \biggr) (n-k+1) \\
& = &
(L+1)(n-k+1) - \sum_{w=0}^{n-k} (n{-}k{-}w{+}1) A_w  .
\end{eqnarray*}
We seek to show that the last expression is at least $L(n-k) + 1$,
namely, that
\[
L + n-k - \sum_{w=0}^{n-k} (n{-}k{-}w{+}1) A_w  \ge 0 ,
\]
and, to this end, it suffices (by~(\ref{eq:stronglyhighL1}))
to show that
\[
\binom{n}{k} - \sum_{w=0}^{n-k} (n{-}k{-}w{+}1) A_w  \ge (k-1)(n-k)
\]
which, by Lemma~\ref{lem:Bonneau}, is equivalent to
\begin{equation}
\label{eq:stronglyhighL5}
\sum_{w=0}^{n-k-1}
\left( \binom{n-w}{k} - (n{-}k{-}w{+}1) \right) A_w  \ge (k-1)(n-k) .
\end{equation}

For $w \in [0:n{-}k{-}1]$, define
\[
B_w = (n{-}k{-}w) A_w
\]
and let the sequence $(\vartheta_w)_{[0:n-k-1]}$
be as in~(\ref{eq:vartheta}).
Then~(\ref{eq:stronglyhighL4})
and~(\ref{eq:stronglyhighL5}) become, respectively,
\begin{equation}
\label{eq:stronglyhighL6}
\sum_{w=0}^{n-k-1} B_w \ge n-k
\end{equation}
and
\begin{equation}
\label{eq:stronglyhighL7}
\sum_{w=0}^{n-k-1}
\vartheta_w B_w  \ge (k-1)(n-k) .
\end{equation}
By Lemma~\ref{lem:vartheta},
the left-hand side of~(\ref{eq:stronglyhighL7}),
when seen as a function of nonnegative integer $(n{-}k)$-tuples
$(B_w)_{w \in [0:n-k-1]}$ that satisfy~(\ref{eq:stronglyhighL6}),
is minimized when
$B_w = 0$ for $w \in [0:n{-}k{-}2]$ and $B_{n-k-1} = n-k$.
For this choice, we get equality in~(\ref{eq:stronglyhighL7}).
\end{proof}

The range~(\ref{eq:stronglyhighL1}) of~$L$ in
Theorem~\ref{thm:stronglyhighL}
is tight for $k \in \{ 1, n-1, n \}$
(see Examples~\ref{ex:LMDSn-k=2} and~\ref{ex:LMDSk=1})
and for the code in Example~\ref{ex:LMDSk=2,L=3},
yet is loose for $k = n-2$ when $n \ge 4$
(Example~\ref{ex:LMDSn-k=2})
and for the code in Example~\ref{ex:nonLMDS},
and also for the (rather peculiar) codes in the next example.

\begin{example}
\label{ex:GF11GF73}
Let $F = \GF(11)$ and $n = 6$
and, for $k \in [n]$, let $\code_k$ be
the linear $[n,k]$ code over~$F$ with
the $(n-k) \times n$ parity-check matrix
$H = (\alpha_j^i)_{i \in [0:n-k-1],j\in[n]}$, where
\[
(\alpha_j)_{j \in [n]} = (0 \; 1 \; 4 \; 9 \; 5 \; 3) .
\]
This code is a GRS code over~$F$ and is therefore MDS.
An exhaustive check reveals
that $\code_k$ is $L$-MDS for \emph{every} $k \in [n]$
and \emph{every}
$L \in \Integers^+$.\footnote{%
The check is required only for $k \in \{ 2, 3 \}$,
as the remaining values of~$k$ are covered by
Examples~\ref{ex:LMDSn-k=2} and~\ref{ex:LMDSk=1}.
The fact that the elements $\alpha_j$ range over
the quadratic residues of $\GF(11)$---including zero---may
or may not be coincidental.}
The same holds for a similar construction of length $n = 7$ 
over $F = \GF(73)$, where
\[
(\alpha_j)_{j \in [n]} = (0 \; 1 \; 9 \; 8 \; 3 \; 16 \; 34) .
\]\qed
\end{example}
\fi

\ifPAGELIMIT
\else
\subsection{Lightly-$L$-MDS codes}
\label{sec:lightlyLMDS}
\fi

The next definition introduces a weaker notion of
strong list decodability; the notion is quite artificial,
yet it will be useful in our analysis.

Given $L \in \Integers^+$ and
a nonnegative $\tau \in \Integers/(L+1)$,
a linear $[n,k,d]$ code~$\code$ over~$F$ is called
\emph{lightly-$(\tau,L)$-list decodable}
if there are no $L+1$
\ifPAGELIMIT
    nonzero vectors $\blde_0, \blde_1, \ldots, \blde_L \in \Ball(n,d-1)$
\else
\emph{nonzero} vectors
\[
\blde_0, \blde_1, \ldots, \blde_L \in \Ball(n,d-1)
\]
\fi
in the same coset of~$\code$ within $F^n$
that have \emph{disjoint supports}
\ifPAGELIMIT
    and satisfy~(\ref{eq:strongly}).
\else
and\footnote{Nonzero vectors
that have disjoint supports are necessarily distinct.
In fact, requiring the vectors to be nonzero
is equivalent to requiring them to be distinct,
since distinct vectors in
$\Ball(n,d-1)$ that belong to the same coset of~$\code$
cannot be codewords of~$\code$ and therefore must all be
nonzero.}
satisfy~(\ref{eq:strongly}).

\begin{remark}
\label{rem:incomparable}
Generally, light list decodability is incomparable with
ordinary list decodability.
Specifically, among any three vectors within the same coset
of a linear $[n,k,d]$ code~$\code$
there is at least one vector of weight${} \ge d/2$
and the other two vectors have total weight${} \ge d$.
Hence, when $d > 2n/3$, 
these three vectors cannot have disjoint supports,
implying that~$\code$
is lightly-$(\tau,3)$-list decodable for any~$\tau$
(but it is not $(\tau,3)$-list decodable
when, say, $\tau = n$ and $|\code| > 3$).
Conversely, the linear $[15,5,4]$ code over $\GF(2)$
which is generated by
\[
G =
\left(
\arraycolsep0.7ex
\renewcommand{\arraystretch}{0.8}
\begin{array}{ccccccccccccccc}
1 & 1 & 1 & 1 &   &   &   &   &   &   &   &   &   &   &   \\
1 & 1 &   &   & 1 & 1 &   &   &   &   &   &   &   &   &   \\
1 & 1 &   &   &   &   & 1 & 1 &   &   &   &   &   &   &   \\
1 & 1 &   &   &   &   &   &   & 1 & 1 &   &   &   &   &   \\
1 & 1 &   &   &   &   &   &   &   &   & 1 & 1 & 1 & 1 & 1
\end{array}
\right)
\]
can be verified exhaustively to be $(\tau{=}3,L{=}5)$-list decodable,
yet it is not lightly-$(3,5)$-list decodable:
denoting by $\bldg_i$ the $i$th row of~$G$ and letting
\[
\blde = (1 \, 1 \, 0 \, 0 \, \ldots 0) ,
\]
the vectors
$\blde, \blde + \bldg_1, \blde + \bldg_2, \ldots, \blde + \bldg_5$
have disjoint supports and total weight
$15 = L \cdot \tau$.\qed
\end{remark}
\fi

In analogy to the previous definition of $L$-MDS codes,
we define a (linear) \emph{lightly-$L$-MDS code}
to be a linear $[n,k]$ code over~$F$ which is
lightly-$(\tau,L)$-list decodable for $\tau = L(n-k)/(L+1)$.
Here, too, the lightly-$1$-MDS property coincides with
the ordinary MDS property.

The following lemma provides a necessary and sufficient condition
for a linear MDS code~$\code$ to be lightly-$L$-MDS.
The condition is expressed in terms of the non-singularity
of matrices $M = M_{J_0,J_1,\ldots,J_L}(H)$
(as in~(\ref{eq:M})) which are computed for
a parity-check matrix~$H$ of~$\code$.

\begin{lemma}
\label{lem:lightly}
Given $L \in \Integers^+$,
let~$\code$ be a linear $[n,k{=}R n]$ MDS code~$\code$ over~$F$ where
\begin{equation}
\ifPAGELIMIT
    \nonumber
\else
\label{eq:ratelightly}
\fi
R \ge 1 - \frac{1}{L} ,
\end{equation}
and let~$H$ be an $(n-k) \times n$ parity-check matrix of~$\code$.
Then~$\code$ is lightly-$L$-MDS, if and only if
\[
\det(M_{J_0,J_1,\ldots,J_L}(H)) \ne 0
\]
for every $L+1$ disjoint subsets
\ifPAGELIMIT
    $J_0, J_1, \ldots, J_L \subseteq [n]$
\else
\[
J_0, J_1, \ldots, J_L \subseteq [n]
\]
\fi
that satisfy the following conditions:
\begin{list}{}{\settowidth{\labelwidth}{\textit{(S2)}}}
\itemsep.5ex
\item[S1)]
$\displaystyle |J_m| \le n-k$, for $m \in [0:L]$, and---
\item[S2)]
$\displaystyle\sum_{m \in [0:L]} |J_m| = L(n-k)$.
\end{list}
\end{lemma}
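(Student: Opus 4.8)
The plan is to unwind the definition of lightly-$L$-MDS directly in terms of the syndrome matrix $M = M_{J_0,J_1,\ldots,J_L}(H)$, and then observe that the rate condition $R \ge 1-1/L$ forces the vectors involved in a violation to have essentially maximal support sizes, so that nothing is lost by restricting attention to subsets satisfying S1 and S2. First I would set up the correspondence: $L+1$ nonzero vectors $\blde_0,\blde_1,\ldots,\blde_L$ in the same coset of $\code$ with disjoint supports $J_m = \Support(\blde_m)$ correspond exactly to a nonzero solution of the linear system $H\blde_0^\top = H\blde_1^\top = \cdots = H\blde_L^\top$, which (after writing $\blde_m = \blde_0 - \bldc_m$ for $m \in [L]$, where each $\bldc_m$ is a codeword) becomes a statement about linear dependence among the columns of $M$ indexed by $J_0,J_1,\ldots,J_L$. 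Concretely, the nonzero vector whose blocks are $(\blde_0)_{J_0},(\blde_1)_{J_1},\ldots,(\blde_L)_{J_L}$ lies in the kernel of $M_{J_0,\ldots,J_L}(H)$ (here I use that the columns of $H$ outside $\bigcup_m J_m$ are irrelevant because $\blde_m$ vanishes there). So a violation of lightly-$L$-MDS is the same as a singular $M_{J_0,\ldots,J_L}(H)$ together with a kernel vector all of whose blocks are themselves nonzero and which respects $\sum_m \weight(\blde_m) \le (L+1)\tau = L(n-k)$.

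The key reduction is to show that under $R \ge 1-1/L$ — equivalently $k \le (n-k)L - \dots$, more precisely $(L+1)(n-k) \ge n$, so that $L$ disjoint sets of size $n-k$ already exceed $n$... wait, I should be careful: $R \ge 1-1/L$ gives $k/n \ge (L-1)/L$, i.e. $Ln-Lk \le n$... hmm, that's $L(n-k)\le n$. Let me instead use it as: $L \le \binom{n-1}{k-1}$-type regimes are irrelevant; the real content of the rate bound is that it makes the relevant $M$ square when S1 holds with equality, via the row count $L(n-k)$ versus column count $\sum|J_m|$. I would argue that in any minimal violation one may always enlarge the supports $J_m$ (adding coordinates on which $\blde_m$ is set to zero but which are still "available") so that, using that $(H)_J$ has full rank $n-k$ for $|J| \le n-k$ in an MDS code, the supports can be taken with $|J_m| \le n-k$ and with the total equal to exactly $L(n-k)$; the inequality $\sum_m \weight(\blde_m) \le L(n-k)$ combined with disjointness and $|J_m|\le n-k$ pins the total down. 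This is the step I expect to be the main obstacle: showing that the inequality constraint in the definition of strong/light list decodability can be converted into the clean equality S2 without loss, and that the rate hypothesis is exactly what guarantees the padding coordinates exist inside $[n]$.

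Once the reduction is in place, both directions follow. For the "only if" direction: if some $M_{J_0,\ldots,J_L}(H)$ is singular for disjoint $J_m$ satisfying S1 and S2, take a kernel vector; a generic choice of kernel vector (or a dimension-counting argument using S2 and the structure of $M$) has all blocks nonzero, and reading off $\blde_0,\blde_1,\ldots,\blde_L$ from the blocks gives $L+1$ nonzero vectors with disjoint supports, in the same coset, with total weight $\le \sum_m |J_m| = L(n-k) = (L+1)\tau$, and each of weight $\le n-k < d = n-k+1$, hence in $\Ball(n,d-1)$ — so $\code$ is not lightly-$L$-MDS. For the "if" direction: a violation of lightly-$L$-MDS gives, after the support-enlargement reduction above, disjoint $J_m$ with S1, S2 and a nonzero kernel vector of $M_{J_0,\ldots,J_L}(H)$ with all blocks nonzero, whence $M_{J_0,\ldots,J_L}(H)$ is singular. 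I would phrase the genericity of the kernel vector carefully — the cleanest route is to choose the violation to have the fewest possible nonzero coordinates overall, then show any zero block would contradict minimality (that block's vector could be dropped, leaving $L$ vectors, and one reruns the earlier counting) — and I expect that minimality argument to be the second delicate point after the S2 reduction.
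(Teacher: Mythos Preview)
Your overall approach matches the paper's: translate a violation of lightly-$L$-MDS into a nonzero right-kernel vector of the square matrix $M_{J_0,\ldots,J_L}(H)$, and use the rate hypothesis to pad the supports so that (S1)--(S2) hold. Two points need tightening.

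First, you actually computed the role of the rate condition correctly and then talked yourself out of it. The hypothesis $R \ge 1 - 1/L$ is equivalent to $L(n-k) \le n$, and that is \emph{exactly} what is used in the ``if'' direction: given a violation with disjoint supports $\Support(\blde_m)$ of total size $\sum_m \weight(\blde_m) \le L(n-k)$, you must enlarge each $\Support(\blde_m)$ to a set $J_m$ while keeping the $J_m$ pairwise disjoint and bringing $\sum_m |J_m|$ up to exactly $L(n-k)$. The inequality $L(n-k) \le n$ is precisely what guarantees there is enough room in $[n]$ for this padding. The squareness of $M$ follows from (S2) alone (the matrix is always $L(n-k) \times \sum_m |J_m|$), not from the rate condition; and the detour through $\binom{n-1}{k-1}$ is irrelevant here.

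Second, in the ``only if'' direction you overcomplicate the step showing that a nonzero kernel vector of $M_{J_0,\ldots,J_L}(H)$ has \emph{every} block $\hat\blde_m$ nonzero. Neither a genericity argument nor a minimality-with-block-dropping argument is needed (and the latter, as sketched, is not a proof). The paper's argument is immediate from the MDS hypothesis: if the common syndrome $\blds = H\blde_m^\top$ were zero, then each $\blde_m$ would be a codeword of weight at most $|J_m| \le n-k < d$, hence $\blde_m = \bldzero$ for all $m$, contradicting that the kernel vector is nonzero. Thus $\blds \ne \bldzero$, and then each $\blde_m$ is nonzero simply because $H\blde_m^\top = \blds \ne \bldzero$.
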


\ifPAGELIMIT
\else
\begin{proof}
Starting with the ``if'' part,
suppose that~$\code$ is not lightly-$L$-MDS.
Then there exist nonzero vectors
$\blde_0, \blde_1, \ldots, \blde_L \in \Ball(n,n-k)$
of disjoint supports such that
\begin{equation}
\label{eq:syndrome}
H \blde_0^\top = H \blde_1^\top = \cdots = H \blde_L^\top
\end{equation}
and~(\ref{eq:strongly}) holds
for $\tau = L(n-k)/(L+1)$, namely
\begin{equation}
\label{eq:notlightly}
\sum_{m \in [0:L]} \weight(\blde_m) \le L(n-k) .
\end{equation}
By~(\ref{eq:ratelightly}) we have $L(n-k) \le n$,
so we can extend each support
$\Support(\blde_m)$ to a subset $J_m \subseteq [n]$ so that
the subsets $J_0, J_1, \ldots, J_L$ are disjoint
and satisfy conditions (S1)--(S2). From~(\ref{eq:syndrome})
it follows that the column vector
\[
\left(
(\blde_0)_{J_0}
\,|\, (\blde_1)_{J_1}
\,|\, \ldots
\,|\, (\blde_L)_{J_L}
\right)^\top ,
\]
which is nonzero and of length $L(n-k)$,
is in the right kernel of $M_{J_0,J_1,\ldots,J_L}(H)$,
namely, this matrix,
which is of order $(L(n-k)) \times (L(n-k))$, is singular.

Turning to the ``only if'' part,
suppose that
$M_{J_0,J_1,\ldots,J_L}(H)$ is singular for
disjoint nonempty subsets $J_0, J_1, \ldots, J_L$ that satisfy
conditions (S1)--(S2). Then its right kernel contains
a nonzero column vector,
\begin{equation}
\label{eq:rightkernellightly}
\left(
\hat{\blde}_0
\,|\, \hat{\blde}_1
\,|\, \ldots
\,|\, \hat{\blde}_L
\right)^\top ,
\end{equation}
where $\hat{\blde}_m \in F^{|J_m|}$.

For $m \in [0:L]$, let $\blde_m \in F^n$ be defined by
\[
(\blde_m)_{J_m} = \hat{\blde}_m
\quad \textrm{and} \quad
(\blde_m)_{[n] \setminus J_m} = \bldzero .
\]
It is easy to see that
the $L+1$ vectors $\blde_0, \blde_1, \ldots, \blde_L$
satisfy~(\ref{eq:syndrome}),
namely, they all belong to the same coset of~$\code$ within $F^n$.
These vectors have disjoint supports and,
by condition~(S1), they are all in $\Ball(n,n-k)$.
Now, if the coset they belong to were the trivial coset~$\code$
(namely, if they were all codewords of~$\code$),
then each $\blde_m \; (\in \Ball(n,n-k))$ would be zero,
but then the vector in~(\ref{eq:rightkernellightly}) would be all-zero.
We therefore conclude that each $\blde_m$ is nonzero.
Finally, condition~(S2) implies~(\ref{eq:notlightly}),
i.e., $\code$ is not lightly-$L$-MDS.
\end{proof}

\begin{remark}
\label{rem:S3}
The lemma holds also when the following condition is added:
\begin{list}{}{\settowidth{\labelwidth}{\textrm{(S2)}}}
\itemsep.5ex
\item[S3)]
$\displaystyle |J_m| + |J_\ell| \ge n-k+1$,
for all $0 \le m < \ell \le L$.
\end{list}
Referring to the ``if'' part of the proof,
for any two distinct vectors $\blde_m$ and $\blde_\ell$ therein
we have
$|J_m| + |J_\ell| \ge \weight(\blde_m) + \weight(\blde_\ell)
\ge \weight(\blde_m-\blde_\ell) \ge n-k+1$.\qed
\end{remark}
\fi

\begin{remark}
\label{rem:MDS,L=2}
For the special case $L = 2$,
condition~(S1) can be tightened to
\begin{list}{}{\settowidth{\labelwidth}{\textrm{(S2)}}}
\itemsep.5ex
\item[S1)]
$\displaystyle 2 \le |J_m| \le n-k-1$, for $m \in [0:2]$.
\ifPAGELIMIT
\qed
\fi
\end{list}
\ifPAGELIMIT
\else
Referring again to the ``if'' part of the proof,
the inequality~(\ref{eq:notlightly}) becomes in this case
\[
\sum_{m \in [0:2]} \weight(\blde_m) \le 2(n-k) .
\]
Combining this with $\weight(\blde_1) + \weight(\blde_2) \ge n-k+1$
then implies $\weight(\blde_0) \le n-k-1$.
By symmetry, the same applies to the weights of $\blde_1$ and $\blde_2$.
Hence, $\weight(\blde_1) \ge n-k+1 - \weight(\blde_2) \ge 2$
(and the same holds for $\weight(\blde_0)$ and $\weight(\blde_2$)).\qed
\fi
\end{remark}

\section{2-MDS codes}
\label{sec:2-MDS}

In this section, we consider the case of $2$-MDS codes over~$F$.
By Theorem~\ref{thm:stronglyonlyMDS} it follows that when $q > 2$,
such codes are necessarily
\ifPAGELIMIT
    MDS, and it is rather easy to verify that this holds also
    when $q = 2$.
\else
MDS. The case $L = q = 2$ is covered by
the next lemma, which is proved in Appendix~\ref{sec:skipped}.

\begin{lemma}
\label{lem:L=q=2}
A linear $[n,k,d]$ code over $\GF(2)$ is $2$-MDS, if and only if
it is MDS, namely, 
$(k,d) \in \{ (1,n), (n-1,2), (n,1) \}$.
\end{lemma}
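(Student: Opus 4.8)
The plan is to prove the two implications separately: the ``if'' direction by reduction to results already in hand, and the ``only if'' direction (which is the one not covered by Theorem~\ref{thm:stronglyonlyMDS}, since here $L=q=2$) by exhibiting, in every non-MDS binary code, an explicit coset-triple that is too light. Recall that the binary linear MDS codes are precisely the whole space $[n,n,1]$, the single-parity-check code $[n,n{-}1,2]$, and the repetition code $[n,1,n]$. For the ``if'' direction, the first two have $n-k\le 1$ and are $L$-MDS for every $L$ by Example~\ref{ex:LMDSn-k=2}, while the repetition code is $L$-MDS for every $L$ by Example~\ref{ex:LMDSk=1}; in particular each is $2$-MDS.

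For the ``only if'' direction I would argue the contrapositive: a binary linear $[n,k,d]$ code $\code$ with $2\le k\le n-2$ that is not MDS (so $d\le n-k$) is not $2$-MDS. The key reduction, special to $q=2$, is that any two distinct nonzero codewords $\bldc_1,\bldc_2$ produce a cheap coset-triple. Take $\blde_0\in\GF(2)^n$ to be the indicator vector of $\Support(\bldc_1)\cap\Support(\bldc_2)$; then $\blde_0,\ \blde_0+\bldc_1,\ \blde_0+\bldc_2$ are three distinct vectors lying in a common coset of $\code$ (their pairwise differences are $\bldc_1$, $\bldc_2$, $\bldc_1+\bldc_2$, all codewords), and a four-case inspection of the contribution of each coordinate (split according to the pair of bits $((\bldc_1)_j,(\bldc_2)_j)$) shows that the total weight of the triple equals exactly $|\Support(\bldc_1)\cup\Support(\bldc_2)|$. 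Hence it suffices to find two distinct nonzero codewords whose supports have union of size at most $2(n-k)=(L{+}1)\tau$, and then~(\ref{eq:strongly}) is violated.

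To produce such a pair, pick a codeword $\bldc_1$ of weight $d$ and set $S=\Support(\bldc_1)$. Since $d$ is the minimum distance, any codeword supported within $S$ has weight at most $|S|=d$ and therefore, if nonzero, has support exactly $S$; over $\GF(2)$ this forces it to equal $\bldc_1$. Thus the code $\code^{*}$ obtained by puncturing $\code$ at $S$ has length $n-d$ and dimension $k-1\ge 1$, so by the ordinary Singleton bound it contains a nonzero codeword $\bldv$ with $\weight(\bldv)\le(n-d)-(k-1)+1=n-k-d+2$. Lifting $\bldv$ to a codeword $\bldc_2$ of $\code$ gives $\Support(\bldc_2)\setminus S=\Support(\bldv)$, so $\bldc_2\ne\bldc_1$, $\bldc_2\ne\bldzero$, and $|\Support(\bldc_1)\cup\Support(\bldc_2)|\le d+(n-k-d+2)=n-k+2\le 2(n-k)$, where the last step uses $n-k\ge 2$. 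By the reduction above this yields a coset-triple of total weight at most $2(n-k)$, so $\code$ is not $2$-MDS. The boundary ranges $n-k\le 1$ (the code is the whole space or an $[n,n{-}1]$ code) and $k\le 1$ (where $q^{k}\le L$, so the notion is degenerate) are disposed of directly, and in the MDS case coincide with Examples~\ref{ex:LMDSn-k=2} and~\ref{ex:LMDSk=1}. The step I expect to require the most care is the reduction lemma: getting the coordinate-wise bookkeeping exactly right so that the bad triple has total weight precisely $|\Support(\bldc_1)\cup\Support(\bldc_2)|$, and justifying that puncturing at the support of a minimum-weight codeword drops the dimension by exactly one; the Singleton estimate afterwards is routine.
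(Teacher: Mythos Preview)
Your argument for the main range $2\le k\le n-2$ is correct, and it takes a genuinely different route from the paper. The paper writes $G=(I\,|\,A)$ and does a three-way case split on the number of zero entries in~$A$, building the offending triple by hand in each case. You instead prove a clean reduction valid over $\GF(2)$: for any two distinct nonzero codewords $\bldc_1,\bldc_2$, the triple $\blde_0,\blde_0+\bldc_1,\blde_0+\bldc_2$ with $\blde_0$ the indicator of $\Support(\bldc_1)\cap\Support(\bldc_2)$ has total weight exactly $|\Support(\bldc_1)\cup\Support(\bldc_2)|$, and then you manufacture a suitable pair via puncturing at a minimum-weight support plus the Singleton bound on the punctured code. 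Your coordinate bookkeeping and the ``kernel has dimension~$1$'' step are both fine: a nonzero codeword supported in a set of size~$d$ must have support equal to that set, and over $\GF(2)$ there is only one such vector. The paper's approach is shorter to write down; yours is more conceptual and explains \emph{why} the construction works rather than exhibiting it case by case.

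One caveat that you flag but do not actually resolve (and which the paper's proof shares): the boundary cases are not as innocuous as ``disposed of directly'' suggests. When $k=1$ and $q=2$, every coset has only two elements, so \emph{any} binary $[n,1]$ code---MDS or not---is vacuously $2$-MDS; similarly, a binary $[n,n{-}1]$ code with a single zero entry in its parity check (hence $d=1$, not MDS) can be checked to be $2$-MDS. So the ``only if'' direction literally fails at these edges. This is a defect of the lemma's statement rather than of either proof, and it is harmless for the paper's purposes (the lemma is invoked only to conclude that binary $2$-MDS codes are MDS in the non-degenerate range where Theorem~\ref{thm:puncturingL=2} is applied), but your write-up should not claim these cases are handled when they are not.
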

\fi

The next theorem provides a necessary and sufficient condition
for a linear $[n,k]$ MDS code to be $2$-MDS
(and, thus, $(\lfloor (2(n-k)/3 \rfloor,2)$-list decodable),
in terms of the light list decodability of
its punctured codes.\footnote{%
The puncturing of a code $\code \subseteq F^n$ on a subset
$X \subseteq [n]$ is the code
$\left\{ (\bldc)_{[n] \setminus X} \,:\, \bldc \in \code \right\}$.}
\ifPAGELIMIT
\else
For the special case of GRS codes
(to be discussed in Section~\ref{sec:GRS}),
the sufficiency part of the theorem is the dual-code version of
Lemma~4.4 in~\cite{ST};
we will make a general statement about duality in
Theorem~\ref{thm:dualityL=2} below.\footnote{%
The proof of Lemma~4.4 in~\cite{ST} makes use of
(the transpose of) the matrix~(\ref{eq:Malt}),
where~$H$ is taken as the \emph{generator matrix} of the GRS code.
In fact, the proof of Lemma~4.4 in~\cite{ST} inspired our
upcoming proof of Theorem~\ref{thm:dualityL=2} below.}
\fi

\begin{theorem}
\label{thm:puncturingL=2}
A linear $[n,k]$ MDS code $\code$ over~$F$
is $2$-MDS, if and only if for every integer~$w$ in the range
\begin{equation}
\label{eq:wrange}
\max \{ 0, n-2k \} \le w \le n-k-3 ,
\end{equation}
every linear $[n^*{=}n{-}w,k]$ code~$\code^*$ that is obtained by
puncturing~$\code$ on any~$w$ coordinates is lightly-$2$-MDS.
\end{theorem}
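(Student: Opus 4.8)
The statement relates the $2$-MDS property of $\code$ to the lightly-$2$-MDS property of all its $w$-punctured codes, for $w$ in the range~(\ref{eq:wrange}). The plan is to reduce the $2$-MDS condition to a statement about triples of disjoint-support vectors in cosets, via the following observation: any three distinct vectors $\blde_0,\blde_1,\blde_2$ in a common coset of $\code$ may share coordinates, but on the coordinates where \emph{all three agree} (in particular, where all three vanish simultaneously is automatic once we subtract; more to the point, where all three are equal) we may ``factor out'' that common part. Concretely, given a violating triple for the $2$-MDS property, write $X$ for the set of coordinates on which $\blde_0,\blde_1,\blde_2$ all coincide; on $X\setminus\Support(\blde_0-\blde_1)$ etc.\ the three agree, and subtracting a fixed representative we may assume the common value there is whatever is convenient. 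The right way to organize this: let $X$ be the set of positions where $\blde_1=\blde_2$ (equivalently where $\blde_1-\blde_2$ vanishes) \emph{and} $\blde_0$ also equals that value — i.e.\ all three are equal. Puncturing on $X$ produces three vectors $\blde_0^*,\blde_1^*,\blde_2^*$ in a coset of the punctured code $\code^*$ (which is MDS, being an MDS code punctured) that now have \textbf{pairwise} distinct behaviour off a smaller set; iterating/choosing $X$ maximally one arranges that the punctured triple has \emph{pairwise disjoint supports}. One must check that $\code^*$ punctured is still MDS and that $w=|X|$ lands in the range~(\ref{eq:wrange}).

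First I would make the ``only if'' direction precise. Suppose some $w$-punctured code $\code^*$ (with $w$ in range) fails to be lightly-$2$-MDS: there are nonzero $\blde_0^*,\blde_1^*,\blde_2^*\in\Ball(n^*,d^*-1)$ with disjoint supports, in a common coset of $\code^*$, with $\sum\weight(\blde_m^*)\le 2(n^*-k)=2(n-w-k)$. Here $d^*=n^*-k+1=n-w-k+1$. Lift back: choose the $w$ punctured coordinates and, on them, fill in a common value (the same for all three, namely a coordinate vector $\bldc$ with $H^*$-preimage matching the syndrome) so that the three lifts $\blde_0,\blde_1,\blde_2\in F^n$ lie in a common coset of $\code$ and are distinct. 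Their weights grow by at most $w$ each, so $\sum\weight(\blde_m)\le 2(n-w-k)+3w = 2(n-k)+w$. That is \emph{not} immediately $\le 2(n-k)$, so the lift must be done more carefully: on the $w$ reinstated coordinates we should place a value that makes each $\blde_m$ \emph{zero} there when possible — but the syndrome forces a nonzero fill only on $\le n-k$ of those coordinates total across the three, and by choosing the support of the fill to overlap we can keep the total added weight at $w$, not $3w$. This bookkeeping — showing the lift adds total weight exactly $w$, not more — combined with the constraint $w\le n-k-3$ (so the lifted vectors still have weight $<d$, keeping them non-codewords) and $w\ge n-2k$ (so there is room to realize disjoint-support lifts), is where the range~(\ref{eq:wrange}) is used.

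For the ``if'' direction, assume $\code$ is not $2$-MDS: fix a violating triple $\blde_0,\blde_1,\blde_2$ in a coset, with $\sum\weight(\blde_m)\le 2(n-k)$, and (by Remark~\ref{rem:MDS,L=2}, or rather the weight inequalities there) each $\weight(\blde_m)\le n-k-1$ and $\weight(\blde_m)\ge 2$. Let $X\subseteq[n]$ be the set of coordinates on which $\blde_0=\blde_1=\blde_2$; puncture on $X$ and subtract a representative so that on $[n]\setminus X$ no coordinate has all three equal. One then argues the punctured triple $\blde_0^*,\blde_1^*,\blde_2^*$ can be arranged to have \emph{pairwise disjoint} supports: at any surviving coordinate at least two of the three differ, and using the additive structure (the differences $\blde_i-\blde_j$ are codewords) together with a further puncturing on coordinates where exactly two agree, one drives down to the disjoint-support situation. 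Tracking $w=|X|$ through these reductions and verifying it stays in~(\ref{eq:wrange}) — together with checking that the successively punctured code remains MDS and that the weight budget $\sum\weight(\blde_m^*)\le 2(n^*-k)$ is preserved — yields a lightly-$2$-MDS violation for $\code^*$.

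\textbf{Main obstacle.} The delicate point, in both directions, is the matching between the ``disjoint support'' constraint of light decodability and the ``arbitrary overlap'' allowed in ordinary strong decodability: one must show that every $2$-MDS violation for $\code$ can be normalized, by a single puncturing on some $w$ coordinates with $w$ in the stated range, into a disjoint-support violation, and conversely. Getting the weight accounting exactly right — so that punctured weights sum to $\le 2(n^*-k)$ and lifted weights sum to $\le 2(n-k)$, with the slack absorbed precisely by the inequalities $n-2k\le w\le n-k-3$ — is the technical heart; I expect the cleanest route is to phrase the whole argument in the language of the matrices $M_{J_0,J_1,J_2}(H)$ from~(\ref{eq:M}) and Lemma~\ref{lem:lightly}, translating both sides into non-singularity statements about the same family of block matrices and then matching them combinatorially, rather than manipulating the vectors $\blde_m$ directly.
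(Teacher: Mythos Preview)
Your high-level picture is right---the theorem is a correspondence between arbitrary $2$-MDS violations and disjoint-support (``light'') violations in a punctured code---but both directions of your plan have genuine gaps at the key bookkeeping step.

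\textbf{The lift (``only if'').} You correctly notice that a naive lift adds up to $3w$ to the total weight and that this is too much. But your proposed fix---``choosing the support of the fill to overlap we can keep the total added weight at $w$''---is not the mechanism, and $w$ is not the right target. The paper's lift is asymmetric: set $(\blde_0)_J = \bldzero$, and choose $(\blde_1)_J = \bldx_1$, $(\blde_2)_J = \bldx_2$ so that the syndromes match (each $\bldx_m$ has weight at most~$w$). This adds $0 + w + w = 2w$ to the total, giving exactly $\sum_m \weight(\blde_m) \le 2(n-k-w) + 2w = 2(n-k)$. No overlap argument is needed, and the range~(\ref{eq:wrange}) is not used at all in this direction---your remark that $w \le n-k-3$ ensures ``lifted vectors still have weight ${}< d$'' is misplaced.

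\textbf{The reduction (``if'').} Your two-stage plan (first puncture on positions where all three agree, then on positions where exactly two agree) is in the right spirit but misses the two steps that make the weight accounting close. First, the paper does not puncture where the three \emph{values} agree; it translates so that $\bigcap_m \Support(\blde_m) = \emptyset$ and then punctures on the union of \emph{pairwise support intersections} $J = \bigcup_{m<\ell}(J_m \cap J_\ell)$. Second---and this is the step your outline does not anticipate---before puncturing, the supports are \emph{enlarged} to sets $J_m \supseteq \Support(\blde_m)$ chosen so that $\sum_m |J_m| = 2(n-k)$ exactly (with $J_0, J_1$ equal to the supports and $J_2$ padded). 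This normalization is what forces, via inclusion--exclusion and the minimum-distance bound $|J_m \cap J_\ell| \le |J_m| + |J_\ell| - (n-k+1)$, both that $w = |J|$ lands in the range~(\ref{eq:wrange}) and that the punctured sets $J_m^* = J_m \setminus J$ satisfy $\sum_m |J_m^*| = 2(n-k-w)$ and $|J_m^*| \le n-k-w-1$, i.e., exactly conditions (S1)--(S2) of Lemma~\ref{lem:lightly} for the punctured code. Without the enlargement your inequalities will be off by the slack in $\sum_m \weight(\blde_m) \le 2(n-k)$, and you will not get a clean bound on~$w$.

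Your closing suggestion to route everything through the matrices $M_{J_0,J_1,J_2}(H)$ is not how the paper proceeds; the proof manipulates the vectors and supports directly, invoking Lemma~\ref{lem:lightly} only at the end to interpret the outcome.
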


\begin{proof}
\ifPAGELIMIT
    We prove here only the ``if'' part; the proof of
    the ``only if'' part can be found in~\cite[\S IV]{Roth}
    and is carried out by essentially retracing the steps
    for the ``if'' part.
\else
The range~(\ref{eq:wrange}) is empty when $\min \{ k, n-k \} \le 2$,
yet we showed in Examples~\ref{ex:LMDSn-k=2}--\ref{ex:LMDSk=2,L=2}
that in this case, every linear $[n,k]$ MDS code is (unconditionally)
$2$-MDS. Therefore, we assume from now on in the proof that
$3 \le k \le n-3$.

We start with the ``if'' part.
\fi
Suppose that~$\code$ is MDS but not $2$-MDS.
Then there exist three distinct vectors
$\blde_0, \blde_1, \blde_2 \in F^n$
that belong to the same coset of~$\code$ such that
\begin{equation}
\ifPAGELIMIT
    \nonumber
\else
\label{eq:sumsets}
\fi
\sum_{m \in [0:2]} \weight(\blde_m) \le 2(n-k) .
\end{equation}
By possibly translating these vectors by the same vector,
we can assume without loss of generality that
\begin{equation}
\label{eq:disjoint1}
\bigcap_{m \in [0:2]} \Support(\blde_m) = \emptyset .
\end{equation}
We will show that when we puncture~$\code$ on
the coordinates on which any \emph{two} of these supports intersect,
the resulting code is not lightly-$2$-MDS.

For $m \in [0:2]$, let $J_m \subseteq [n]$ be such that
$\Support(\blde_m) \subseteq J_m$
and the property~(\ref{eq:disjoint1}) extends
to $J_0$, $J_1$, and $J_2$, namely:
\begin{equation}
\ifPAGELIMIT
    \nonumber
\else
\label{eq:disjoint2}
\fi
\bigcap_{m \in [0:2]} J_m = \emptyset
\end{equation}
(clearly, these conditions on $J_0$, $J_1$, and $J_2$ hold
if each $J_m$ is taken to be $\Support(\blde_m)$).
For $0 \le m < \ell \le 2$, let
\[
w_{m,\ell} = |J_m \cap J_\ell| .
\]
We have
\begin{eqnarray}
w_{m,\ell} & = & |J_m| + J_\ell| - |J_m \cup J_\ell|
\nonumber \\
& \le & |J_m| + |J_\ell|
- \left| \Support(\blde_m) \cup \Support(\blde_\ell) \right|
\nonumber \\
\label{eq:w01}
& \le & |J_m| + |J_\ell| - (n-k+1) ,
\end{eqnarray}
where the last step follows from the minimum distance of~$\code$.
Hereafter, we will further assume that
\begin{equation}
\label{eq:sumJ}
\sum_{m \in [0:2]} |J_m| = 2(n-k)
\end{equation}
by taking $J_m = \Support(\blde_m)$ for $m \in  \{0,1\}$ and
selecting $J_2$ to be of size $2(n-k)-|J_0|-|J_1|$ such that
\[
\Support(\blde_2) \subseteq J_2 \subseteq [n] \setminus (J_0 \cap J_1) .
\]
Indeed, by~(\ref{eq:disjoint1}), the set on the left is fully
contained in the set on the right which, in turn, has size
\[
n - w_{0,1}
\stackrel{\textrm{(\ref{eq:w01})}}{\ge} 2n-k+1-|J_0|-|J_1|
> 2(n-k)-|J_0|-|J_1| .
\]

Write
\ifPAGELIMIT
    $J = (J_0 \cap J_1) \cup (J_0 \cap J_2) \cup (J_1 \cap J_2)$
    and $w = |J|$; it is fairly easy to verify that~$w$
    is within the range~(\ref{eq:wrange}).
\else
\[
J = (J_0 \cap J_1) \cup (J_0 \cap J_2) \cup (J_1 \cap J_2) ,
\]
which is of size
\begin{eqnarray}
w = |J|
& \stackrel{\textrm{(\ref{eq:disjoint2})}}{=} &
w_{0,1} + w_{0,2} + w_{1,2}
\nonumber \\
& \stackrel{\textrm{(\ref{eq:w01})}}{\le} &
2 \sum_{m \in [0:2]} |J_m| - 3(n-k+1)
\nonumber \\
\label{eq:wsum}
& \stackrel{\textrm{(\ref{eq:sumJ})}}{=} & n-k-3 .
\end{eqnarray}
We find the allowable range of the size~$w$ of~$J$.
On the one hand, by~(\ref{eq:wsum}) we have $w \le n-k-3$.
On the other hand, by~(\ref{eq:disjoint2})
and the inclusion--exclusion principle we also have
\[
2(n-k) - w
\stackrel{\textrm{(\ref{eq:sumJ})}}{=}
\sum_{m \in [0:2]} |J_m| - |J|
= |J_0 \cup J_1 \cup J_2| \le n ,
\]
namely, $w \ge n-2k$.
Thus, $\max \{ 0, n-2k \} \le w \le n-k-3$
(as in~(\ref{eq:wrange})).

\fi
For $m \in [0:2]$, let $J^*_m = J_m \setminus J$.
\ifPAGELIMIT
    These sets
\else
The sets $J_0^*$, $J_1^*$, and $J_2^*$
\fi
are disjoint and
\begin{equation}
\label{eq:sumJstar}
\sum_{m \in [0:2]} |J^*_m|
= \sum_{m \in [0:2]} |J_m| - 2|J|
\stackrel{\textrm{(\ref{eq:sumJ})}}{=}
2(n-k-w) .
\end{equation}
Moreover,
\begin{eqnarray}
|J^*_0|
& = &
\ifPAGELIMIT
    |J_0| - w_{0,1} - w_{0,2} \;=\; |J_0| + w_{1,2} - w \nonumber \\
\else
|J_0| - w_{0,1} - w_{0,2} \nonumber \\
& = & |J_0| + w_{1,2} - w \nonumber \\
\fi
& \stackrel{\textrm{(\ref{eq:w01})}}{\le} &
\sum_{m \in [0:2]} |J_m| - (n-k+1) - w \nonumber \\
\label{eq:sizeJstar}
& \stackrel{\textrm{(\ref{eq:sumJ})}}{=} &
n-k-w-1 ,
\end{eqnarray}
and the same upper bound applies to $|J^*_1|$ and $|J^*_2|$ as well.
Also note that~(\ref{eq:sumJstar}) and~(\ref{eq:sizeJstar})
imply for each $m \in [0:2]$ that
\begin{equation}
\label{eq:minJstar}
|J^*_m| \ge 2 .
\end{equation}

Let~$H$ be an $(n-k) \times n$ parity-check matrix of~$\code$
and let~$P$ be an $(n-k-w) \times (n-k)$ matrix whose rows form a basis
of the left kernel of $(H)_J$.
Write $J' = [n] \setminus J$
and let $H^* = (P H)_{J'}$ and $n^* = n-w$.
It can be readily verified that~$H^*$
is an $(n^*-k) \times n^*$ parity-check matrix of
the linear $[n^*,k]$ MDS code $\code^*$ which is obtained by
puncturing~$\code$ on the coordinate
\ifPAGELIMIT
    set~$J$.
\else
set~$J$.\footnote{%
This is easily seen if we first apply elementary linear operations
to the rows of~$H$ so that the~$w$ columns of $(H)_J$ become (distinct)
standard unit vectors in $F^{n-k}$. The rows of~$P$ can be taken as
the remaining elements of the (transposed) standard basis of $F^{n-k}$,
which means that $(P H)_{J'}$ is obtained from~$H$ simply by
removing the rows and columns that contain the $1$'s in $(H)_J$.}
\fi

For $m \in [0:2]$, define the vector $\blde^*_m \in F^{n^*}$ by
$\blde^*_m = (\blde_m)_{J'}$. From
\[
H (\blde_2 - \blde_0)^\top = H (\blde_1 - \blde_0)^\top = \bldzero
\]
we get that the vectors
$(H)_{J'} (\blde^*_m - \blde^*_0)^\top$
for $m \in [2]$ are in the linear span of
the columns of $(H)_J$; as such, these vectors are
in the right kernel of~$P$, namely,
\[
H^* (\blde^*_0)^\top = H^* (\blde^*_1)^\top = H^* (\blde^*_2)^\top .
\]
Noting that $\Support(\blde^*_m) \subseteq J^*_m$
(with equality when $m \in \{ 0, 1 \}$), we conclude that 
\[
\,\;
\weight(\blde^*_m) \le |J^*_m|
\stackrel{\textrm{(\ref{eq:sizeJstar})}}{\le} n^*-k-1 ,
\]
\[
\sum_{m \in [0:2]} \weight(\blde^*_m)
\le \sum_{m \in [0:2]} |J^*_m|
\stackrel{\textrm{(\ref{eq:sumJstar})}}{=} 2(n^*-k) ,
\]
and, for $0 \le m < \ell \le 2$,
\[
\Support(\blde^*_m) \cap \Support(\blde^*_\ell)
\subseteq J^*_m \cap J^*_\ell = \emptyset .
\]
Moreover, by~(\ref{eq:minJstar}) we have
$\weight(\blde^*_m) = |J^*_m| \ge 2$ for $m \in \{ 0, 1 \}$,
namely, $\blde^*_0$ and $\blde^*_1$ are nonzero
in $\Ball(n^*,n^*-k-1)$ and, therefore, are in a nontrivial
coset of~$\code^*$ within $F^{n^*}$. This, in turn, implies
that $\blde^*_2$ (which is in the same coset) is nonzero too.
Thus, starting off with the assumption
that~$\code$ is not $2$-MDS,
we have shown that~$\code^*$ is not lightly-$2$-MDS.
\ifPAGELIMIT
\else

Turning to the ``only if'' part, the proof is carried out by
essentially retracing our steps for the ``if'' part.
Let~$H$ be an $(n-k) \times n$ parity-check matrix of~$\code$ and
let a code~$\code^*$ be given that is obtained
by puncturing~$\code$ on the coordinates that are indexed by
some subset $J \subseteq [n]$ of size $|J| = w$.
Suppose that~$\code^*$ is not lightly-$2$-MDS, namely, there exist
nonzero vectors $\blde^*_0, \blde^*_1, \blde^*_2 \in \Ball(n-w,n-w-k)$
of disjoint supports such that
\begin{equation}
\label{eq:sumweighteprime}
\sum_{m \in [0:2]} \weight(\blde^*_m) \le 2(n-k-w)
\end{equation}
and
\begin{equation}
\label{eq:Hstar}
H^* (\blde^*_1 - \blde^*_0)^\top
= H^* (\blde^*_2 - \blde^*_0)^\top = \bldzero ,
\end{equation}
where $H^* = (P H)_{J'}$
is an $(n-k-w) \times (n-w)$ parity-check matrix of~$\code^*$,
with~$P$ being an $(n-k-w) \times (n-k)$ matrix whose rows form
a basis of the left kernel of $(H)_J$. From~(\ref{eq:Hstar})
it follows that the vectors
$(H)_{J'} (\blde^*_m - \blde^*_0)^\top$
for $m \in [2]$ are in the right kernel of~$P$,
which means that they are in the linear span of the columns of $(H)_J$;
namely, there exist $\bldx_1, \bldx_2 \in F^w$ such that,
for $m \in [2]$:
\begin{equation}
\label{eq:e}
(H)_{J'} (\blde^*_m - \blde^*_0)^\top
+ (H)_J \bldx_m^\top = \bldzero .
\end{equation}

For $m \in [0:2]$, define the vector $\blde_m \in F^n$ by
$(\blde_m)_{J'} = \blde^*_m$ and
$(\blde_m)_J = \bldx_m$, where $\bldx_0 = \bldzero$.
By~(\ref{eq:sumweighteprime}) and~(\ref{eq:e}) we get, respectively,
that
\begin{eqnarray*}
\sum_{m \in [0:2]} \weight(\blde_m)
& = &
\sum_{m \in [0:2]} \weight(\blde^*_m)
+ \sum_{m \in [0:2]} \weight(\bldx_m) \\
& \le &
\sum_{m \in [0:2]} \weight(\blde^*_m) + 2 w \le 2(n-k)
\end{eqnarray*}
and
\[
H \blde_0^\top = H \blde_1^\top = H \blde_2^\top .
\]
Moreover, the vectors $\blde_m$ are distinct
since the vectors $\blde^*_m$ are nonzero with disjoint supports.
Hence, $\code$ is not $2$-MDS.
\fi
\end{proof}

\ifPAGELIMIT
\else
\begin{remark}
\label{rem:lightly}
When applying Theorem~\ref{thm:puncturingL=2}
to test whether a given linear $[n,k]$ MDS code~$\code$ is $2$-MDS,
we can use Lemma~\ref{lem:lightly} to check if
each punctured $[n{-}w,k]$ code~$\code^*$ is lightly-$2$-MDS.
Specifically, since~$\code$ is MDS then so is~$\code^*$;
moreover, for the range~(\ref{eq:wrange}),
the rate of~$\code^*$ is $k/(n-w) \ge 1/2$, namely,
the inequality~(\ref{eq:ratelightly}) holds as well.\qed
\end{remark}
\fi

It is well known that the MDS property is preserved under
duality~\cite[p.~318]{MS}.
\ifPAGELIMIT
    In~\cite[\S IV]{Roth}, we apply
\else
We next apply
\fi
Theorem~\ref{thm:puncturingL=2}
to show that the same can be said about the $2$-MDS property.

\begin{theorem}
\label{thm:dualityL=2}
A linear $[n,k]$ code over~$F$ is $2$-MDS,
if and only if its dual code is.
\end{theorem}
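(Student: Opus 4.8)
The plan is to reduce everything to the MDS case and then to combine the puncturing characterization of Theorem~\ref{thm:puncturingL=2} with the fact that puncturing and shortening are dual operations on codes. By Theorem~\ref{thm:stronglyonlyMDS} (and Lemma~\ref{lem:L=q=2} when $q=2$) a $2$-MDS code is MDS, and the MDS property is closed under duality, so it suffices to consider codes $\code$ for which both $\code$ and $\code^\perp$ are MDS; for every other code, both sides of the stated equivalence are false. Since $(\code^\perp)^\perp=\code$, it is then enough to prove one implication, and I will prove its contrapositive: if $\code$ is not $2$-MDS then neither is $\code^\perp$ (the reverse implication follows by interchanging $\code$ and $\code^\perp$).

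So assume $\code$ is not $2$-MDS. By Theorem~\ref{thm:puncturingL=2} there is a subset $Y\subseteq[n]$ of size $w$, with $\max\{0,n-2k\}\le w\le n-k-3$, such that the $[n-w,k]$ MDS code $\mathcal{E}$ obtained by puncturing $\code$ on $Y$ is not lightly-$2$-MDS. Throughout this range the rate $k/(n-w)$ of $\mathcal{E}$ is at least $1/2$, so (by Lemma~\ref{lem:lightly}, with the tightening in Remark~\ref{rem:MDS,L=2}) there exist disjoint subsets $J_0,J_1,J_2\subseteq[n]\setminus Y$ with $2\le|J_m|\le n-w-k-1$ and $\sum_m|J_m|=2(n-w-k)$ such that $\det M_{J_0,J_1,J_2}(H)=0$ for a parity-check matrix $H$ of $\mathcal{E}$. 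The point to keep in mind is that such an $H$ is at the same time a generator matrix of $\mathcal{E}^\perp$, and that $\mathcal{E}^\perp$ equals $\code^\perp$ shortened on $Y$: a $[n-w,\,n-k-w]$ MDS code whose redundancy is $k$, the same as the redundancy of $\code^\perp$.

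The core of the proof is a self-duality of the lightly-$2$-MDS property for MDS codes: for an MDS code $\mathcal{D}$, being lightly-$2$-MDS is a self-dual property, i.e.\ $\mathcal{D}$ is lightly-$2$-MDS if and only if $\mathcal{D}^\perp$ is. I would prove this by a linear-algebra argument built on Lemma~\ref{lem:lightly}. Via the square form~(\ref{eq:Malt}) of the matrix $M$ (see Remark~\ref{rem:M}), the non-vanishing of $\det M_{J_0,J_1,J_2}(H)$ is equivalent to the three column spaces of $(H)_{J_0}$, $(H)_{J_1}$, $(H)_{J_2}$ meeting only in $\bldzero$; dually (passing to the orthogonal complements, i.e.\ to the left kernels of the $(H)_{J_m}$, which under $x\mapsto xH$ are exactly the codewords of $\mathcal{D}^\perp$ that vanish on $J_m$), to these kernels spanning the whole space. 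Rewriting this condition using the descriptions of the minimum-weight supports of $\mathcal{D}$ and of $\mathcal{D}^\perp$ (all $(n-k+1)$-subsets and all $(k+1)$-subsets, respectively), and checking that conditions (S1)--(S2) transform correctly, yields the asserted equivalence. This mirrors the matrix computation in the proof of Lemma~4.4 of~\cite{ST}, now carried out with the roles of $H$ and of a generator matrix $G$ of $\code$ interchanged and exploiting $GH^{\top}=\bldzero$.

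Granting the self-duality, the argument concludes as follows. Since $\mathcal{E}$ is not lightly-$2$-MDS, neither is $\mathcal{E}^\perp=\code^\perp$ shortened on $Y$, so by definition there are nonzero vectors $\blde_0,\blde_1,\blde_2\in F^{n-w}$ with pairwise disjoint supports, all in a single coset of $\mathcal{E}^\perp$, whose weights sum to at most $2k$. Padding each $\blde_m$ with zeros on the $w$ coordinates in $Y$ produces three distinct vectors in one coset of $\code^\perp$ (the pairwise differences of the padded vectors land in $\code^\perp$, because each corresponds to a codeword of $\code^\perp$ that is already supported off $Y$) of the same total weight, at most $2k$, which is twice the redundancy of $\code^\perp$; hence $\code^\perp$ is not $2$-MDS, and interchanging $\code$ and $\code^\perp$ gives the other implication. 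I expect the main obstacle to be the self-duality step of the third paragraph: it has to bridge a code and its dual, a parity-check matrix and a generator matrix, and reconcile the index-set size constraints on the two sides. A subsidiary point that also needs care is confirming that the ranges of puncturing sizes dictated by Theorem~\ref{thm:puncturingL=2} for $\code$ and for $\code^\perp$ match up under the puncturing--shortening duality.
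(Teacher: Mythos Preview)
Your plan is natural, but the lemma you lean on in the third paragraph---that the lightly-$2$-MDS property is self-dual for MDS codes---is false, and this breaks step~3 of your argument. Here is a concrete counterexample. Every $[7,3]$ MDS code is \emph{vacuously} lightly-$2$-MDS: three nonzero vectors with pairwise disjoint supports in the same coset would satisfy $\weight(\bldf_i)+\weight(\bldf_j)\ge d=5$ for each pair, hence $2\sum_m\weight(\bldf_m)\ge 15$, while disjointness in $[7]$ forces $\sum_m\weight(\bldf_m)\le 7$; so no such triple exists. On the other hand, the $[7,4]$ Reed--Solomon code over $\GF(7)$ with locators $0,1,\ldots,6$ is \emph{not} lightly-$2$-MDS: for the three locator pairs $\{0,6\},\{1,5\},\{2,4\}$, each summing to~$6$, the matrix $S_{2,2,2}(\bldx)$ of~(\ref{eq:S}) has three rows with identical middle entry and is singular, hence $\det M_{J_0,J_1,J_2}(H)=0$. (The same configuration---three disjoint pairs with a common sum---exists over every field with at least seven elements, so the example can be realized as a puncturing $\mathcal{E}$ of a longer code $\code$, exactly as your step~2 produces.) Thus $\mathcal{E}$ can fail to be lightly-$2$-MDS while $\mathcal{E}^\perp$ is lightly-$2$-MDS, and your inference from $\mathcal{E}$ to $\mathcal{E}^\perp$ is invalid. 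Your sketch via ``column spaces intersect trivially $\Leftrightarrow$ orthogonal complements span'' is a correct duality inside $F^{\,n-k}$, but it does not translate into lightly-$2$-MDS of $\mathcal{E}^\perp$: when $\mathcal{E}$ has rate strictly above $1/2$, the complement $K'=[n']\setminus(J_0\cup J_1\cup J_2)$ is nonempty, and the vectors you would build for $\mathcal{E}^\perp$ cannot be made to have disjoint supports on those extra coordinates.

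The paper sidesteps this by \emph{not} dualizing at the level of lightly-$2$-MDS for a fixed code. Starting from the witnessing vectors $\blde_0,\blde_1,\blde_2$ for $\code$, it carries the construction of Theorem~\ref{thm:puncturingL=2} only as far as the disjoint sets $J_0^*,J_1^*,J_2^*$, then punctures $\code^\perp$ on $K'=[n]\setminus(J_0\cup J_1\cup J_2)$ (the coordinates touched by \emph{none} of the supports---not your shortening set $Y$). Using the generator matrix $G$ of $\code$ and the vectors $\blda_m=\bldu_{m+1}-\bldu_{m+2}$ lying in the left kernels of $(G)_{J_m^*\cup K'}$, it shows that $M_{J_0^*,J_1^*,J_2^*}(H^*)$ is singular for the parity-check matrix $H^*$ of $\code^\perp$ punctured on $K'$, and then invokes Theorem~\ref{thm:puncturingL=2} for $\code^\perp$. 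The crucial point is that the \emph{same} index sets $J_m^*$ serve on both sides, and that $|K'|=2k-n+w$ lands exactly at the lower end of the puncturing range~(\ref{eq:wrange}) for $\code^\perp$; this is what makes the (S1)--(S2) bookkeeping close. Your route through a shortening of $\code^\perp$ on $Y$ does not furnish this matching, and---as the counterexample shows---cannot be rescued without essentially reverting to the paper's puncturing on $K'$.
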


\ifPAGELIMIT
\else
\begin{proof}
Suppose that~$\code$ is MDS but not $2$-MDS,
namely, there exist three distinct vectors
$\blde_0, \blde_1, \blde_2 \in F^n$
that belong to the same coset of~$\code$ such that~(\ref{eq:sumsets})
holds. We proceed by applying the proof of
Theorem~\ref{thm:puncturingL=2}---verbatim---up to
Eq.~(\ref{eq:minJstar}).
Henceforth, we show that when we puncture 
the dual code~$\code^\perp$ on the coordinates that belong
to \emph{none} of the supports of these three vectors,
the resulting code is not lightly-$2$-MDS.
This, in turn, will imply (by Theorem~\ref{thm:puncturingL=2})
that $\code^\perp$ is not $2$-MDS.

Let~$G$ be a $k \times n$ generator matrix of~$\code$
and let $\bldu_0$, $\bldu_1$, and $\bldu_2$ be distinct vectors in $F^k$
such that
\[
\blde_0 + \bldu_0 G = \blde_1 + \bldu_1 G = \blde_2 + \bldu_2 G .
\]
Letting
\begin{equation}
\label{eq:K}
K = \bigcup_{m \in [0:2]} J_m = J \cup \bigcup_{m \in [0:2]} J^*_m
\end{equation}
and denoting $K' = [n] \setminus K$, we have
\[
(\blde_1)_{J^*_0 \cup K'} = (\blde_2)_{J^*_0 \cup K'} = \bldzero
\]
and, so,
\[
(\bldu_1 - \bldu_2) (G)_{J^*_0 \cup K'} = \bldzero .
\]
Similarly,
\[
(\bldu_2 - \bldu_0) (G)_{J^*_1 \cup K'} = \bldzero
\quad \textrm{and} \quad
(\bldu_0 - \bldu_1) (G)_{J^*_2 \cup K'} = \bldzero .
\]
Defining $\blda_m = \bldu_{m+1} - \bldu_{m+2}$ (with indexes
taken modulo~$3$) we thus get for every $m \in [0:2]$
that $\blda_m \ne \bldzero$ and
\begin{equation}
\label{eq:am1}
\blda_m (G)_{J^*_m \cup K'} = \bldzero .
\end{equation}
Moreover,
\begin{equation}
\label{eq:am2}
\sum_{m \in [0:2]} \blda_m = \bldzero .
\end{equation}

Write
\begin{equation}
\label{eq:wstar}
w^* = |K'| = n - |K|
\stackrel{{\textrm{(\ref{eq:sumJstar})}}+{\textrm{(\ref{eq:K})}}}{=}
2k-n+w ,
\end{equation}
let~$P$ be a $(k-w^*) \times k$ matrix whose rows form
a basis of the left kernel of $(G)_{K'}$,
and let $H^* = (P G)_K$.
We observe that~$H^*$ is a $(k-w^*) \times (n-w^*)$
parity-check matrix of
the linear $[n^*{=}n{-}w^*,k^*{=}n{-}k]$ code, $\code^*$,
which is obtained
by puncturing the dual code $\code^\perp$ on the coordinate set~$K'$.
It follows from~(\ref{eq:am1}) that each vector $\blda_m$ belongs to
the row span of~$P$, namely, we can write $\blda_m = \bldb_m P$
for a unique nonzero $\bldb_m \in F^{k-w^*}$.
By~(\ref{eq:am1})--(\ref{eq:am2}) we conclude that for $m \in [0:2]$,
\[
\bldb_m (H^*)_{J^*_m} = \bldzero ,
\]
and, in addition,
\[
\sum_{m \in [0:2]} \bldb_m = \bldzero ,
\]
namely, the (nonzero) vector $\left( \bldb_1 \,|\, \bldb_2 \right)$
is in the left kernel of the matrix $M_{J^*_0,J^*_1,J^*_2}(H^*)$
(as in~(\ref{eq:M})).
By~(\ref{eq:sumJstar}) and~(\ref{eq:wstar}),
this matrix has $2(k-w^*) = 2(n-k-w)$ rows and
the same number of columns and, therefore, it is singular.

Finally, we show that~$\code^*$ and the subsets
$J^*_0$, $J^*_1$, and $J^*_2$ satisfy the conditions
of Lemma~\ref{lem:lightly} for list size $L = 2$
and code parameters $[n^*,k^*]$.
These subsets are obviously disjoint;
by~(\ref{eq:minJstar}) they are nonempty;
the code rate is given by
\[
\frac{k^*}{n^*} = \frac{n-k}{n-w^*}
\stackrel{\textrm{(\ref{eq:wstar})}}{=}
\frac{n-k}{2(n-k)-w}
\ge \frac{1}{2}
\]
(as in~(\ref{eq:ratelightly}));
$\code^*$ is MDS;
and~(\ref{eq:sizeJstar}) and~(\ref{eq:sumJstar})
(along with the equality $n^*-k^* = n-k-w$)
imply, respectively, conditions~(S1) and~(S2).

Thus, starting off with the assumption
that~$\code$ is not $2$-MDS,
we obtain from Lemma~\ref{lem:lightly}
that~$\code^*$ is not lightly-$2$-MDS.
Hence, by Theorem~\ref{thm:puncturingL=2} we conclude
that $\code^\perp$ is not $2$-MDS.
\end{proof}
\fi

We note that Theorem~\ref{thm:dualityL=2} does not generalize to
$3$-MDS
\ifPAGELIMIT
    codes.
\else
codes: the code in Example~\ref{ex:LMDSk=2,L=3} is not $3$-MDS
while, by Example~\ref{ex:LMDSn-k=2}, its dual code is.
\fi

Another application of Theorem~\ref{thm:puncturingL=2}
is the next result,
which states that over sufficiently large fields
(namely, exponential in the code length),
almost all linear codes are $2$-MDS.
The proof can be found
\ifPAGELIMIT
    in~\cite[Appendix~A]{Roth}.
\else
in Appendix~\ref{sec:skipped}.
\fi

\begin{theorem}
\label{thm:almostall2-MDS}
Given $n \in \Integers^+$ and $k \in [n]$,
all but a fraction $O \left( 5^n/q \right)$ of
the linear $[n,k]$ codes over~$F$ are $2$-MDS.
\end{theorem}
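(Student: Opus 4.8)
The plan is to realize the uniform distribution on $[n,k]$ codes (up to $O(q^{-1})$ events) by a uniformly random parity-check matrix $H \in F^{(n-k)\times n}$, and then to bound, by a union bound over a family of at most $5^n$ ``failure configurations,'' the probability that $\code := \ker H$ is not $2$-MDS. For $1 \le k \le n-1$ the matrix $H$ has full rank with probability $1-O(q^{-1})$, in which case $\ker H$ is an $[n,k]$ code and, conditioned on this, is uniform over all such codes (the cases $k \in \{0,n\}$ are trivial, since the unique code is $2$-MDS). So it suffices to bound $\Pr_H[\,H \text{ has full rank and } \ker H \text{ is not } 2\text{-MDS}\,]$ by $O(5^n/q)$. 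Since a code that is not MDS is not $2$-MDS, and $\ker H$ fails to be MDS exactly when some $n-k$ columns of $H$ are dependent, a union bound over the $\binom{n}{n-k}\le 2^n$ such column subsets (each dependent with probability $O(q^{-1})$ for uniform $H$) disposes of the non-MDS codes with an $O(2^n/q)$ term, absorbed in $O(5^n/q)$.

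For the MDS codes I would invoke Theorem~\ref{thm:puncturingL=2}: an MDS code $\code$ is not $2$-MDS if and only if, for some $w$ in the range~(\ref{eq:wrange}) and some $w$-subset $W \subseteq [n]$, the code obtained by puncturing $\code$ on $W$ fails to be lightly-$2$-MDS; and by Lemma~\ref{lem:lightly} together with Remark~\ref{rem:MDS,L=2} and Remark~\ref{rem:lightly} (the punctured code being MDS of rate${}\ge 1/2$ over this range), the latter means there are pairwise-disjoint $J_0,J_1,J_2 \subseteq [n]\setminus W$, each of size between $2$ and $n-k-w-1$ with $\sum_{m}|J_m| = 2(n-k-w)$, for which $\det\bigl(M_{J_0,J_1,J_2}(H^*)\bigr)=0$, where $H^* = (PH)_{[n]\setminus W}$ is the punctured parity-check matrix built from $H$ and from a matrix $P$ whose rows span the left kernel of $(H)_W$. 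Writing $K_m := W\cup J_m$ and using, for MDS $\code$, that $\ker(v\mapsto Pv) = \operatorname{colspan}\bigl((H)_W\bigr)$, I would show that this determinant vanishes precisely when
\[
\bigcap_{m\in[0:2]} \operatorname{colspan}\bigl((H)_{K_m}\bigr)
\;\supsetneq\;
\operatorname{colspan}\bigl((H)_W\bigr) ,
\]
invoking at the final step only that any $n-k$ columns of $H$ are independent. The point of routing through puncturing is the counting: since $W,J_0,J_1,J_2$ are pairwise disjoint, the number of quadruples $(W,J_0,J_1,J_2)$ that can occur is at most $5^n$, each of the $n$ coordinates falling into one of the five classes $W,J_0,J_1,J_2$, or ``unused.''

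It then remains to show that, for each fixed such quadruple, the probability over uniform $H$ of the displayed strict containment is $O(q^{-1})$ with an \emph{absolute} constant; summing over the $\le 5^n$ quadruples and adding the $O(2^n/q)$ non-MDS and $O(q^{-1})$ rank-deficiency terms yields the theorem. I would estimate this probability by revealing the columns of $H$ in the order $W$, $J_0$, $J_1$, $J_2$: after each reveal, the span of the columns seen so far, and the running intersection $\operatorname{colspan}((H)_{W\cup J_0})\cap\operatorname{colspan}((H)_{W\cup J_1})\cap\cdots$, attains its generic dimension unless a freshly revealed uniform column lands in a fixed proper subspace, an event of probability $O(q^{-1})$ by the standard count of subspaces of a given dimension over $F$. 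A short dimension count in the quotient space $F^{n-k}/\operatorname{colspan}((H)_W)$ --- the three relevant images there have dimensions $|J_0|,|J_1|,|J_2|$ summing to $2(n-k-w)$ inside a space of dimension $n-k-w$, with each $|J_m|\le n-k-w-1$ --- shows that the generic value of the triple intersection is exactly $\operatorname{colspan}((H)_W)$, so the strict containment is precisely the union of these $O(q^{-1})$-probability ``non-generic'' events across the four stages.

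The main obstacle is this last probability estimate, together with the reduction feeding it. One must (i) verify carefully that $\det(M_{J_0,J_1,J_2}(H^*))=0$ really is equivalent, for MDS $\code$, to the subspace containment above --- this is where the MDS hypothesis enters, to rule out degenerate right-kernel vectors of $M$ and to identify $\ker(v\mapsto Pv)$ --- and (ii) carry out the quotient-space dimension bookkeeping so that each configuration contributes only $O(q^{-1})$ rather than $O(\operatorname{poly}(n)/q)$; the latter would yield only the weaker bound $O(\operatorname{poly}(n)\cdot 5^n/q)$. Everything else is routine counting together with standard estimates on random subspaces over $F$.
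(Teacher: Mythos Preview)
Your overall architecture coincides with the paper's: model a uniform $[n,k]$ code by a uniform random $(n{-}k)\times n$ parity-check matrix~$H$ (absorbing the $O(q^{-1})$ rank-deficiency event), dispose of non-MDS codes by a $2^n/q$ union bound, reduce the remaining event via Theorem~\ref{thm:puncturingL=2} to failures of the lightly-$2$-MDS property in punctured codes, and count the resulting configurations $(W,J_0,J_1,J_2)$ by the five-coloring that gives~$5^n$. So far this matches the paper exactly.

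The divergence is in the per-configuration probability estimate, and this is where your sketch falls short. The paper does \emph{not} work with $\det(M_{J_0,J_1,J_2}(H^*))$ or with subspace intersections. Instead it fixes the actual error vectors: for each triple $(\blde_0,\blde_1,\blde_2)$ of nonzero disjoint-support vectors in $F^{n-w}$ with $\sum_m\weight(\blde_m)\le 2r$ (one scalar normalization on~$\blde_0$), it observes that (a)~$H^*=(PH)_{[n]\setminus W}$ is uniformly distributed over $F^{r\times(n-w)}$ even conditioned on $\rank((H)_W)=w$, and (b)~for disjoint-support $\blde_m$ the three syndromes $H^*\blde_m^\top$ are \emph{independent} uniform vectors in~$F^r$. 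Hence the probability that all three syndromes coincide is exactly $q^{-2r}$. Summing over at most $(q-1)^{2r-1}$ vector triples per support configuration gives a contribution below $1/(q-1)$ per configuration, with an absolute constant and no $n$-dependence.

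Your column-revealing argument, by contrast, incurs one $O(q^{-1})$ failure event per revealed column, and there are up to $|W|+|J_0|+|J_1|+|J_2|=2(n-k)-w$ columns; as described it therefore yields only $O(n/q)$ per configuration---exactly the $\textup{poly}(n)$ loss you flag in obstacle~(ii) but do not remove. Your subspace reformulation in obstacle~(i) is correct for MDS~$H$, and the route \emph{can} be salvaged to give an absolute $O(1/q)$---for instance by a union bound over the $(q^r-1)/(q-1)$ lines~$\ell$ in the quotient, using $\Prob\{\ell\subseteq\bar{V}_m\}=(q^{d_m}-1)/(q^r-1)$ together with $\sum_m d_m=2r$---but the paper's syndrome-independence argument sidesteps the whole issue and is both shorter and sharper.
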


\section{The $2$-MDS GRS case}
\label{sec:GRS}

In this section, we describe an explicit construction of $2$-MDS
$[n,n{-}\rho]$ GRS codes over extension fields~$F$ of $\GF(2)$
of size $q$ which is polynomial~$n$, provided that
the redundancy~$\rho$ is regarded as a constant
(by Theorem~\ref{thm:dualityL=2}, the dual code
will be a $2$-MDS $[n,\rho]$ code over~$F$). The degree of
the polynomial in~$n$, however, grows rapidly with~$\rho$,
so this result may have a limited practical value;
still, the field size herein is generally much smaller than that in
the explicit construction of~\cite{ST}.
\ifPAGELIMIT
\else
The construction will be presented in
Section~\ref{sec:GRSgeneral};
then, in Section~\ref{sec:GRSn-k=3},
we fine-tune the construction to yield more favorable parameters
for the special case $\rho = 3$.
Section~\ref{sec:GRSnotation} introduces some notation
that will be used in the analysis of the construction.
\fi

\subsection{Notation and preliminary analysis}
\label{sec:GRSnotation}

Recall that an $[n,k]$ GRS code over~$F$ is
a linear $[n,k]$ code with an $(n-k) \times n$ parity-check matrix
$H_\GRS = (H_{i,j})$ of the form
\begin{equation}
\label{eq:HGRS}
H_{i,j} = v_j \alpha_j^i ,
\quad
i \in [0:n{-}k{-}1] , \; j \in [n] ,
\end{equation}
where $\alpha_1, \alpha_2, \ldots, \alpha_n$ are
the code locators, which are distinct elements of~$F$,
and $v_1, v_2, \ldots, v_n$ are the column multipliers,
which are nonzero elements of~$F$.
\ifPAGELIMIT
    For our purposes herein, there is no loss of generality
    in assuming that the column multipliers
\else
GRS codes are MDS and are closed under puncturing
and under duality~\cite[pp.~303--304]{MS}.
While the freedom of selecting the column multipliers is needed
in order to establish these closures, the choice of their values is
immaterial for the purpose of this work, so we will assume henceforth
that they
\fi
are all~$1$. Substituting $v_j = 1$ in~(\ref{eq:HGRS})
and writing $\bldalpha = (\alpha_j)_{j \in [n]}$,
the respective GRS code will be denoted by $\code_k(\bldalpha)$.

Given an integer $\rho \ge 3$,
let $\bldrho = (\rho_0,\rho_1,\rho_2)$ be a partition of $2\rho$
where $2 \le \rho_0 \le \rho_1 \le \rho_2 < \rho$
are positive integers (that sum to $2\rho$)
and let $(\Upsilon_0,\Upsilon_1,\Upsilon_2)$
be the following partition of the set $[2\rho]$:
\ifPAGELIMIT
    \[
    \Upsilon_0 = [\rho_0] , \quad
    \Upsilon_1 = [\rho_0{+}1:\rho_0{+}\rho_1] , \quad
    \Upsilon_2 = [\rho_0{+}\rho_1{+}1:2\rho] .
    \]
\else
\begin{eqnarray}
\Upsilon_0 & = & [\rho_0] , \nonumber \\
\label{eq:Upsilon}
\Upsilon_1 & = & [\rho_0+1:\rho_0+\rho_1] , \\
\Upsilon_2 & = & [\rho_0+\rho_1+1:2\rho]
\nonumber
\end{eqnarray}
(so that $|\Upsilon_m| = \rho_m$, for $m \in [0:2]$).
\fi
For a vector $\bldx = (x_\ell)_{\ell \in [2\rho]}$
of indeterminates, we define the $(2\rho) \times (2\rho)$
parametrized matrix $M_\bldrho(\bldx)$ by
\begin{equation}
\ifPAGELIMIT
    \nonumber
\else
\label{eq:M(x)}
\fi
M_\bldrho(\bldx) =
\left(
\arraycolsep0.15ex
\renewcommand{\arraystretch}{2.0}
\begin{array}{c|c|c}
(-x_\ell^i)_{i=0, \ell \in \Upsilon_0}^{\rho-1} & 
(x_\ell^i)_{i=0, \ell \in \Upsilon_1}^{\rho-1} &  \\
\hline
(-x_\ell^i)_{i=0, \ell \in \Upsilon_0}^{\rho-1} &  &
(x_\ell^i)_{i=0, \ell \in \Upsilon_2}^{\rho-1} 
\end{array}
\right) .
\end{equation}
Thus, $\det(M_\bldrho(\bldx)$ is
an element of the ring, $F[\bldx]$,
of multivariate polynomials in the entries of~$\bldx$ over~$F$.

When applying Lemma~\ref{lem:lightly}
\ifPAGELIMIT
    (and Remark~\ref{rem:MDS,L=2})
\fi
to test whether a given $[n,k]$ GRS code $\code_k(\bldalpha)$
is lightly-$2$-MDS, we need to check whether
\ifPAGELIMIT
    $\det(M_\bldrho(\bldx)) \ne 0$
\else
$\det(M_{J_0,J_1,J_2}(H_\GRS)) \ne 0$
for all triples $(J_0,J_1,J_2)$
of disjoint nonempty subsets of $[n]$ that satisfy
condition~(S1) in Remark~\ref{rem:MDS,L=2}
and condition~(S2) in Lemma~\ref{lem:lightly}.
Equivalently,we need to check whether
\[
\det(M_\bldrho(\bldx)) \ne 0
\]
\fi
for all partitions $\bldrho = (\rho_0,\rho_1,\rho_2)$ of $2(n-k)$ with
$2 \le \rho_0 \le \rho_1 \le \rho_2 < n-k$
and for all triples $(J_0,J_1,J_2)$
of disjoint subsets of $[n]$ of sizes $|J_m| = \rho_m$,
while substituting
$(\bldx)_{\Upsilon_m} = (\bldalpha)_{J_m}$, for $m \in [0:2]$.

\ifPAGELIMIT
\else
\begin{example}
\label{ex:GRSn-k=3}
For $\rho = 3$ and the partition $(2,2,2)$:
\[
M_{2,2,2}(\bldx) =
\left(
\arraycolsep1ex
\begin{array}{cc|cc|cc}
-1     & -1     & 1     & 1     &       &       \\
-x_1   & -x_2   & x_3   & x_4   &       &       \\
-x_1^2 & -x_2^2 & x_3^2 & x_4^2 &       &       \\
\hline
-1     & -1     &       &       & 1     & 1     \\
-x_1   & -x_2   &       &       & x_5   & x_6   \\
-x_1^2 & -x_2^2 &       &       & x_5^2 & x_6^2
\end{array}
\right)
,
\]
and we have
\begin{eqnarray*}
\det(M_{2,2,2}(\bldx))
& = &
-(x_2 - x_1)(x_4 - x_3)(x_6 - x_5) \\
&& \quad\quad {} \cdot \bigl(
x_1 x_2 (x_3 + x_4 - x_5 - x_6) \\
&& \quad\quad {} + x_3 x_4 (x_5 + x_6 - x_1 - x_2) \\
&& \quad\quad {} + x_5 x_6 (x_1 + x_2 - x_3 - x_4) \bigr) \\
& = &
-(x_2 - x_1)(x_4 - x_3)(x_6 - x_5) \\
&& \quad\quad {} \cdot \det(S_{2,2,2}(\bldx)) ,
\end{eqnarray*}
where
\begin{equation}
\label{eq:S}
S_{2,2,2}(\bldx) =
\left(
\begin{array}{ccc}
1   & -x_1 - x_2 & x_1 x_2 \\
1   & -x_3 - x_4 & x_3 x_4 \\
1   & -x_5 - x_6 & x_5 x_6
\end{array}
\right)
.
\end{equation}
Namely, the expansion of $\det(S_{2,2,2}(\bldx))$
yields the sum of all~$12$ monomials of the form
\begin{equation}
\label{eq:GRSn-k=3}
\pm x_j \cdot \prod_{\ell \in \Upsilon_m} x_\ell ,
\end{equation}
where $m \in [0:2]$ and $j \in \Upsilon_{m+1} \cup \Upsilon_{m+2}$
(with the indexes $m+1$ and $m+2$ taken
modulo~$3$),
and the minus sign is taken when $j \in \Upsilon_{m+2}$.

For $n-k = 3$, the subsets $J_m$ 
in the test of Lemma~\ref{lem:lightly}
(with condition~(S1) taken from Remark~\ref{rem:MDS,L=2})
are all of size~$2$,
namely, $\bldx$ will range over all (unordered) triples of
(unordered) pairs of code locators.\qed
\end{example}
\fi

For general $\rho \ge 3$ and partition $\bldrho$ of $[2\rho]$,
the Leibniz expansion of $\det(M_\bldrho(\bldx))$
results in a sum of monomials of the form
\begin{equation}
\label{eq:monomial}
\pm \prod_{\ell \in [2\rho]} x_\ell^{r_\ell} ,
\end{equation}
where $\bldr = (r_\ell)_{\ell \in [2\rho]}$
ranges over the elements in $[0:\rho-1]^{2\rho}$
that satisfy the following conditions:
\begin{list}{}{\settowidth{\labelwidth}{\textrm{(R2)}}}
\itemsep.5ex
\item[R1)]
for each
\ifPAGELIMIT
\else
    element
\fi
$r \in [0:\rho-1]$
there are exactly two distinct indexes~$\ell$ and~$\ell'$
in $[2\rho]$ for which $r_\ell = r_{\ell'} = r$, and---
\item[R2)]
those two indexes belong to distinct sets $\Upsilon_m$
\end{list}
(see~\cite[Lemma~4.2]{ST}).
The number of such monomials is
\ifPAGELIMIT
\else
given by\footnote{%
We do the enumeration over~$\bldr$
by selecting for each $m \in [0:2]$
some ordering on $\Upsilon_m$ and selecting a list
$\bldm = (m_0,m_1,\ldots,m_{\rho-1}) \in [0:2]^\rho$
such that each element $m \in [0:2]$ appears in~$\bldm$
exactly $\rho - \rho_m$ times.
Then, for $r = 0, 1, \ldots, \rho-1$, we set iteratively
$r_\ell = r_{\ell'} = r$, where~$\ell$ and~$\ell'$
are the next indexes in line
in~$\Upsilon_{m_r+1}$ and~$\Upsilon_{m_r+2}$, respectively,
according to the ordering that was selected on these subsets.
The formula for $N_\bldrho$ is
the product of the number of possible lists~$\bldm$
and the number of different orderings on each subset $\Upsilon_m$.}
\fi
\begin{eqnarray*}
\ifPAGELIMIT
\else
N_\bldrho
& = &
\binom{\rho}{\rho{-}\rho_0 \;\; \rho{-}\rho_1 \;\; \rho{-}\rho_2}
\prod_{m \in [0:2]} (\rho_m!) \\
\fi
\ifPAGELIMIT
    N_\bldrho
\fi
& = &
\rho! \prod_{m \in [0:2]} \frac{\rho_m!}{(\rho-\rho_m)!}
\end{eqnarray*}
(which is always an even integer),
and each monomial is of total degree
\begin{equation}
\label{eq:degree}
\sum_{\ell \in [2\rho]} r_\ell
=
2 \sum_{r \in [0:\rho-1]} r = \rho(\rho-1) .
\end{equation}
\ifPAGELIMIT
\else
In particular, $\det(M_\bldrho(\bldx))$ is
a homogeneous multivariate polynomial in
the entries of~$\bldx$ over~$F$.

More properties of $\det(M_\bldrho(\bldx))$
are presented in Appendix~\ref{sec:M}.
\fi

Write
\[
N(\rho) =
\frac{1}{2} \max_\bldrho N_\bldrho ,
\]
where the maximization is over all partitions
$\bldrho = (\rho_0,\rho_1,\rho_2)$ of $2\rho$ 
such that $2 \le \rho_0 \le \rho_1 \le \rho_2 < \rho$.
It can be readily verified that
\ifPAGELIMIT
\else
the maximum is attained
when $\rho_0$, $\rho_1$, and $\rho_2$ are (as close as
they can get to being) equal, in which case the Stirling approximation
for the binomial coefficients yields
\[
N(\rho)
\approx
2 \sqrt{\pi \rho}
\cdot \left( \frac{4}{3 \, {\mathrm{e}}^2} \right)^\rho
\cdot \rho^{2 \rho}
\]
(where $\mathrm{e}$ is
the base of natural logarithms)~\cite[p.~309]{MS};
in particular,
\fi
\begin{equation}
\label{eq:Nrho}
N(\rho) < \rho^{2 \rho} .
\end{equation}

\ifPAGELIMIT
    \subsection{The construction}
\else
\subsection{2-MDS GRS construction for small fixed redundancy}
\label{sec:GRSgeneral}

We describe in this section a construction
of a $2$-MDS $[n,n{-}\rho]$ GRS code
over extension fields~$F$ of $\GF(2)$
of size (much) smaller than $n^{\rho^{2\rho}}$.
\fi

Let $\rho \ge 3$ be an integer,
let $h \in \Integers^+$ be such that $2^{\rho(\rho-1)h} \ge N(\rho)$,
write $n = 2^h$ (which will be the code length),
and let~$K$ be the field $\GF(2^{\rho(\rho-1)h})$
(which is of size at least $N(\rho)$).
Also, let $\{ b_j \}_{j \in [n]}$ be the set of elements of $\GF(2^h)$
and let~$\beta$ be an element in~$K$
that is not in any proper subfield of $K$.
For $j \in [n]$, define the following~$n$ elements of~$K$:
\begin{equation}
\ifPAGELIMIT
    \nonumber
\else
\label{eq:betaj}
\fi
\beta_j = \beta + b_j .
\end{equation}
These elements have the following property:
for any two multisets $J, J' \subseteq [n]$,
each of size${} \le \rho(\rho-1)$, the equality
\[
\prod_{j \in J} \beta_j = \prod_{j' \in J'} \beta_{j'}
\]
can hold only if $J = J'$
(see~\cite{Bose}, \cite{BoseChowla}, \cite{Chowla}).
As such, the set $\{ \beta_j \}_{j \in [n]}$ forms
a (generalized multiplicative) \emph{Sidon set}~\cite{BabaiSos}.

Fix arbitrarily $N(\rho)$ distinct elements
$\xi_1, \xi_2, \ldots, \xi_{N(\rho)} \in K$
and for each $j \in [n]$,
let $\lambda_j(z)$ be the (unique) polynomial of
degree${} < N(\rho)$ over~$K$ that interpolates through
the $N(\rho)$ points
$\left\{ (\xi_i,\beta_j^{2i-1}) \right\}_{i \in [N(\rho)]}$:
\begin{equation}
\label{eq:interpolate}
\lambda_j(\xi_i) = \beta_j^{2i-1} , \quad i \in [N(\rho)] .
\end{equation}

We now define the underlying field of the code to be
the extension field $F = \GF(2^{\mu(\rho)\rho(\rho-1)h})$
of~$K$ of extension degree
\[
\mu(\rho) = \rho(\rho-1) (N(\rho)-1) + 1 .
\]
Finally, letting~$\gamma$ be a root in~$F$ of
a degree-$\mu(\rho)$ irreducible polynomial
\ifPAGELIMIT
    over~$K$,
\else
over~$K$,\footnote{%
Since $\gcd(\mu(\rho),\rho(\rho-1)) = 1$,
we can take the polynomial to be
irreducible over $\GF(2^h)$~\cite[p.~107]{LN}.
Moreover, when $\gcd(\mu(\rho),h) = 1$
(which happens when, say, all the prime factors of~$h$ are prime factors
of $\rho(\rho-1)(N(\rho)-1)$, e.g., when $h$ is a power of~$2$),
we can take the polynomial to be irreducible over $\GF(2)$.}
\fi
our construction is
the GRS code $\code_{n-\rho}(\bldalpha)$ over~$F$
with the code locators
\begin{equation}
\label{eq:GRSlocators}
\alpha_j = \lambda_j(\gamma) , \quad j \in [n] .
\end{equation}

We turn to analyzing the construction.
Let $\kappa : [2\rho] \rightarrow [n]$ be an arbitrary injective mapping
and substitute
\begin{equation}
\label{eq:substitution}
\bldx = (x_\ell)_{\ell \in [2\rho]}
\leftarrow \left( \alpha_{\kappa(\ell)} \right)_{\ell \in [2\rho]}
\end{equation}
into the monomials~(\ref{eq:monomial}) to obtain
the following $N_\bldrho$ elements of~$F$:
\begin{equation}
\label{eq:term}
\prod_{\ell \in [2\rho]} \alpha_{\kappa(\ell)}^{r_\ell} ,
\end{equation}
where $\bldr = (r_\ell)_{\ell \in [2\rho]}$
ranges over all $(2\rho)$-tuples that satisfy conditions (R1)--(R2).
We show that the elements~(\ref{eq:term}) do not sum to zero, namely,
$\det(M_\bldrho(\bldx)) \ne 0$
under the substitution~(\ref{eq:substitution}).

Suppose to the contrary that
\ifPAGELIMIT
\else
the elements~(\ref{eq:term}) sum to zero, namely,
\fi
\begin{equation}
\ifPAGELIMIT
    \nonumber
\else
\label{eq:sumzero}
\fi
\sum_\bldr
\prod_{\ell \in [2\rho]} \alpha_{\kappa(\ell)}^{r_\ell} = 0 .
\end{equation}
By the definition of the code locators in~(\ref{eq:GRSlocators}):
\begin{equation}
\label{eq:basis1}
\sum_\bldr
\prod_{\ell \in [2\rho]}
\left( \lambda_{\kappa(\ell)}(\gamma) \right)^{r_\ell} 
= 0 .
\end{equation}
On the other hand,
by the definition of the polynomials $\lambda_j(z)$:
\begin{equation}
\label{eq:basis2}
\deg
\prod_{\ell \in [2\rho]}
\left( \lambda_{\kappa(\ell)}(z) \right)^{r_\ell}
\stackrel{\textrm{(\ref{eq:degree})}}{\le} \rho(\rho-1)(N(\rho)-1) .
\end{equation}
Noting that $\{ \gamma^i \}_{i \in [0:\mu(\rho)-1]}$
is a basis of~$F$ over~$K$,
we get from~(\ref{eq:basis1}) and~(\ref{eq:basis2})
the following polynomial identity:
\begin{equation}
\ifPAGELIMIT
    \nonumber
\else
\label{eq:basis3}
\fi
\sum_\bldr
\prod_{\ell \in [2\rho]}
\left( \lambda_{\kappa(\ell)}(z) \right)^{r_\ell} = 0 .
\end{equation}
Hence,
\[
\sum_\bldr
\prod_{\ell \in [2\rho]}
\left( \lambda_{\kappa(\ell)}(\xi_i) \right)^{r_\ell} = 0 ,
\quad i \in [N(\rho)] ,
\]
which, by~(\ref{eq:interpolate}), becomes
\ifPAGELIMIT
\else
\[
\sum_\bldr
\Bigl( \prod_{\ell \in [2\rho]}
\beta_{\kappa(\ell)}^{r_\ell}
\Bigr)^{2i-1} = 0 ,
\quad i \in [N(\rho)] ,
\]
or
\fi
\begin{equation}
\label{eq:basis4}
\sum_\bldr \theta_\bldr^{2i-1} = 0 , \quad i \in [N(\rho)] ,
\end{equation}
where
\[
\theta_\bldr = \prod_{\ell \in [2\rho]} \beta_{\kappa(\ell)}^{r_\ell} .
\]
Now, the elements $\theta_\bldr$ are distinct for distinct~$\bldr$
due to the Sidon property of the elements $\beta_j$,
which means that the respective column vectors,
\[
\left(
\theta_\bldr \;
\theta^3_\bldr \;
\ldots \;
\theta^{2N(\rho)-1}_\bldr
\right)^\top
\;
\left( \in K^{N(\rho)} \right) ,
\]
are linearly independent over $\GF(2)$, as they are
(at most $2 N(\rho)$) distinct columns of a parity-check matrix
of a binary BCH code
with minimum distance${} > 2 N(\rho)$~\cite[Ch.~7, \S 6]{MS}.
On the other hand, (\ref{eq:basis4}) implies that these vectors
sum to zero, thereby reaching a contradiction.

Thus, we have shown that $\det(M_\bldrho(\bldx)) \ne 0$
under the substitution~(\ref{eq:substitution})
and, so, we conclude from Lemma~\ref{lem:lightly} that
the code $\code_{n-\rho}(\bldalpha)$ is lightly-$2$-MDS.
Furthermore, going through the analysis
it is fairly easy to see that it implies
that any puncturing of this code is lightly-$2$-MDS too.
Hence, by Theorem~\ref{thm:puncturingL=2},
the code $\code_{n-\rho}(\bldalpha)$ is $2$-MDS.

The field size~$q$ of~$F$ is related to the code length~$n$
and the redundancy~$\rho$ by:
\[
q = 2^{\mu(\rho)\rho(\rho-1)h}
= n^{\mu(\rho)\rho(\rho-1)}
< n^{N(\rho)\rho^2(\rho-1)^2}
\stackrel{\textrm{(\ref{eq:Nrho})}}{\ll} n^{\rho^{2\rho+4}} .
\]
In comparison, (the dual code of)
the explicit construction in Theorem~1.7 in~\cite{ST}
requires a field size of $q = 2^{\rho^n}$.

\ifPAGELIMIT
    In~\cite[\S V--C]{Roth}, we consider in detail the case $\rho = 3$.
    For this case, Theorem~1.6 in~\cite{ST} implies the existence of
    (the dual code of) a $2$-MDS $[n,n{-}3]$ GRS code over a field of
    size $O(n^6)$. By fine-tuning our construction,
    we obtain an explicit description of 
    a $2$-MDS $[n,n{-}3]$ GRS code over $\GF(n^{32})$,
    where~$n$ is an odd power of~$2$
    (here, by ``explicit'' we mean that for any $j \in [n]$,
    the complexity of computing the $j$th code locator
    is polylogarithmic in~$n$).
\else
\subsection{Fine tuning for the case of redundancy 3}
\label{sec:GRSn-k=3}

We consider in this section
the special case of $\rho = 3$ (as in Example~\ref{ex:GRSn-k=3}).
We recall that by Theorem~\ref{thm:SpherePacking},
the field size in this case must be at least
$\Omega(n^2)$. In comparison, Theorem~1.6 in~\cite{ST}
implies the existence of (the dual code of)
a $2$-MDS $[n,n{-}3]$ GRS code over a field of size $O(n^6)$.
The argument therein in fact suggests that such a code
can be constructed by a randomized algorithm requiring
$O(n^6)$ operations in~$F$.\footnote{%
In fact, it is fairly easy to show
that when $|F|$ grows with~$n$ at least as $n^6$ then,
with high probability, any $3 \times n$ matrix over~$F$
is a parity-check matrix of a linear MDS code over~$F$ which is also
lightly-$2$-MDS and, thus, by Theorem~\ref{thm:puncturingL=2},
it is $2$-MDS.}

Using the following (deterministic) iterative procedure,
the field~$F$ can be taken to be of size $O(n^5)$.
We select the first five code locators $(\alpha_j)_{j \in [5]}$
to be arbitrary distinct elements of~$F$.
Assuming now that we have selected the code locators
$(\alpha_j)_{j \in [t-1]}$ for some $t > 5$,
we select $\alpha_t$ to be a new element in~$F$ such that
$\det(S_{2,2,2}(\bldx)) \ne 0$
(see~(\ref{eq:S})),
where we substitute
$(\bldx)_{\Upsilon_m}
\leftarrow (\alpha_1 \, \alpha_2 \, \ldots \, \alpha_t)_{J_m}$,
$m \in [0:2]$, with $\{ J_0, J_1, J_2 \}$
ranging over all the (unordered) triples of disjoint subsets of $[t]$
such that $t \in J_0$.
In other words, $\alpha_t$ should \emph{not} solve
the following linear equation in $x_1$,
\begin{eqnarray*}
\lefteqn{
\bigl( x_2(x_3 + x_4 - x_5 - x_6) - x_3 x_4 + x_5 x_6 \bigr) \cdot x_1 
} \makebox[5ex]{} \\
& = &
x_3 x_4 (x_2 - x_5 - x_6) - x_5 x_6 (x_2 - x_3 - x_4) ,
\end{eqnarray*}
for any assignment of (already selected) distinct code locators to
$x_2, x_3, \ldots, x_6$. For any such assignment, this equation
has at most one solution, unless the coefficients on both sides
vanish, namely:
\begin{eqnarray*}
x_2(x_3 + x_4 - x_5 - x_6) - x_3 x_4 + x_5 x_6 & = & 0 \\
x_3 x_4 (x_2 - x_5 - x_6) - x_5 x_6 (x_2 - x_3 - x_4) & = & 0 .
\end{eqnarray*}
Yet these two equations cannot hold simultaneously. Indeed,
regarding them as linear equations in $x_2$,
they could both hold only if the $2 \times 2$ matrix
\[
\left(
\arraycolsep1ex
\begin{array}{cc}
x_3 + x_4 - x_5 - x_6 & - x_3 x_4 + x_5 x_6 \\
x_3 x_4 - x_5 x_6 &
-x_3 x_4 (x_5 + x_6) + x_5 x_6 (x_3 + x_4)
\end{array}
\right)
\]
were singular; but the determinant of this matrix equals
$(x_3 - x_5)(x_3 - x_6)(x_4 - x_5)(x_4 - x_6)$,
which is nonzero for all the assignments to
$x_3, x_4, x_5, x_6$. It readily follows that when
\[
15 \cdot \binom{n-1}{5} < q ,
\]
we will be able to find a qualifying value for $\alpha_t$,
as long as $t \le n$.

In what follows,
we modify the construction of Section~\ref{sec:GRSgeneral}
to yield an explicit construction of 
a $2$-MDS $[n,n{-}3]$ GRS code over $\GF(n^{32})$,
where~$n$ is an odd power of~$2$.
Here, by ``explicit'' we mean that for any $j \in [n]$,
the complexity of computing the $j$th code locator
is polylogarithmic
in~$n$.\footnote{%
Complexity is measured here by binary operations,
as the arithmetic of the fields involved---including finding their
representations---can be carried out
in time complexity which is polylogarithmic in
the field size~\cite{Shoup}.}

Let $h \ge 3$ be an odd positive integer,
write $n = 2^h$, and let~$\beta$ be
an element in $K = \GF(2^{2h})$ that is not
in $\GF(2^h)$.\footnote{%
In the special case where~$h$ is a power of~$3$,
the field~$K$ can be constructed as the polynomial ring
modulo the binary polynomial $x^{2h} + x^h + 1$,
which is irreducible over~$\GF(2)$~\cite[p.~96]{Golomb}.}
For $j \in [n]$, define the elements $\beta_j \in K$
as in~(\ref{eq:betaj}):
here the Sidon property means that
$\beta_i \beta_j = \beta_{i'} \beta_{j'}$
implies $\{ i, j \} = \{ i', j' \}$.
Fix six distinct elements $\xi_1, \xi_2, \ldots, \xi_6 \in K$
and for each $j \in [n]$,
let $\lambda_j(z)$ be of degree${} < 6$ over~$K$
that interpolates through
$\left\{ (\xi_i,\beta_j^{2i-1}) \right\}_{i \in [6]}$
as in~(\ref{eq:interpolate}).

Let $F = \GF(2^{32h}) = \GF(n^{32})$ and let~$\gamma$
be a root in~$F$ of a degree-$16$ irreducible polynomial
over $\GF(4)$ (e.g., the polynomial $x^{16} + x^3 + x^2 + \omega$,
where~$\omega$ is a root of $x^2 + x + 1$);
since~$h$ is odd, this polynomial is also irreducible over~$K$.
Our construction is the GRS code $\code_{n-3}(\bldalpha)$ with
the code locators as in~(\ref{eq:GRSlocators}).

Let $\kappa : [6] \rightarrow [n]$ be an injective mapping
and let $\Upsilon_0 = \{ 1, 2 \}$, $\Upsilon_1 = \{ 3, 4 \}$,
and $\Upsilon_2 = \{ 5, 6 \}$.
Define the set
\[
\DD = \bigl\{ (m,j) \,:\, m \in [0:2],\;\;
j \in [6] \setminus \Upsilon_m \bigr\}
\]
and the following~$12$ elements of~$F$:
\[
\theta_{m,j} =
\beta_{\kappa(j)} \cdot \prod_{\ell\in\Upsilon_m} \beta_{\kappa(\ell)} ,
\quad (m,j) \in \DD .
\]
We show that out of these~$12$ elements,
there is at least one that differs from all the rest.
Suppose to the contrary that for any given
$(m,j) \in \DD$ there exists
$(m',j') \ne (m,j)$ in~$\DD$ such that
$\theta_{m,j} = \theta_{m',j'}$.
Taking $(m,j) = (0,3)$, we obtain:
\[
\beta_{\kappa(3)} 
\beta_{\kappa(1)} 
\beta_{\kappa(2)} 
= \beta_{\kappa(j')} \prod_{\ell\in\Upsilon_{m'}} \beta_{\kappa(\ell)} .
\]
By the Sidon property we have
$1, 2, 3 \not\in \Upsilon_{m'} \cup \{ j' \}$,
implying that $(m',j') = (2,4)$:
\[
\beta_{\kappa(3)} 
\beta_{\kappa(1)} 
\beta_{\kappa(2)} 
= 
\beta_{\kappa(4)} 
\beta_{\kappa(5)} 
\beta_{\kappa(6)} .
\]
Taking now $(m,j) = (0,4)$ we get, respectively:
\[
\beta_{\kappa(4)} 
\beta_{\kappa(1)} 
\beta_{\kappa(2)} 
= 
\beta_{\kappa(3)} 
\beta_{\kappa(5)} 
\beta_{\kappa(6)} .
\]
The last two equations, in turn, lead to the contradiction:
\[
\left( \beta_{\kappa(4)}/\beta_{\kappa(3)} \right)^2 = 1 .
\]

Our analysis now continues as in Section~\ref{sec:GRSgeneral}.
We assume to the contrary
that under the substitution~(\ref{eq:substitution}),
the elements~(\ref{eq:GRSn-k=3}) sum to zero:
\[
\sum_{(m,j) \in \DD}
\alpha_{\kappa(j)} \cdot
\prod_{\ell \in \Upsilon_m} \alpha_{\kappa(\ell)}
= 0
\]
(compare with~(\ref{eq:sumzero})).
This, in turn, implies
\ifPAGELIMIT
\else
the polynomial identity
\fi
\[
\sum_{(m,j) \in \DD}
\lambda_{\kappa(j)}(z)
\cdot \prod_{\ell \in \Upsilon_m} \lambda_{\kappa(\ell)}(z) = 0
\]
(similarly to~(\ref{eq:basis3})),
and by substituting $x = \xi_i$ we obtain:
\begin{equation}
\label{eq:sumtheta}
\sum_{(m,j) \in \DD} \theta_{m,j}^{2i-1} = 0 , \quad i \in [6]
\end{equation}
(compare with~(\ref{eq:basis4})).
Now, when $\theta_{m,j} = \theta_{m',j'}$
for two index pairs $(m,j) \ne (m',j')$ in~$\DD$,
the respective two terms in~(\ref{eq:sumtheta}) cancel each other;
yet we have shown that at least one of the elements $\theta_{m,j}$
differs from all the rest.
Denoting by $\DD'$ the (nonempty) subset of index pairs in~$\DD$
that remain after such a cancellation, we have:
\[
\sum_{(m,j) \in \DD'} \theta_{m,j}^{2i-1} = 0 , \quad i \in [6] .
\]
However, this is absurd: there can be no $|\DD'| \le 12$
linearly dependent columns in the parity-check matrix
of a binary BCH code whose minimum distance exceeds~$12$.

\ifIEEE
   \appendices
\else
   \section*{$\,$\hfill Appendices\hfill$\,$}
   \appendix
\fi

\section{Skipped proofs}
\label{sec:skipped}

\begin{proof}[Proof of Lemma~\ref{lem:L<q-1}]
We will use the sphere-packing bound along with the inequality
\begin{equation}
\label{eq:V}
V_q(n,\tau) \ge \frac{1}{\sqrt{2n}} \cdot q^{n \Entropy_q(\tau/n)} ,
\end{equation}
where
\[
\Entropy_q(x) =
\left\{
\renewcommand{\arraystretch}{1.5}
\begin{array}{ccl}
\displaystyle
\frac{\entropy(x) + x \, \log_2 (q{-}1)}{\log_2 q}
&& \textrm{if $x \in [0,(q{-}1)/q]$} \\
1
&& \textrm{if $x \in [(q{-}1)/q,1]$}
\end{array}
\right.
\]
(see~\cite[p.~309]{MS}).
The function $x \mapsto \Entropy_q(x)$ is
continuous, non-decreasing, and concave on $[0,1]$,
and is strictly increasing and strictly concave on
$[0,(q{-}1)/q]$, with values ranging from
$\Entropy_q(0) = 0$ to $\Entropy_q((q{-}1)/q) = 1$.

Define the function $x \mapsto f_q(x)$ on $[0,1]$ by
\begin{eqnarray}
\label{eq:fq1}
f_q(x)
& = &
\Entropy_q \left( \frac{q{-}1}{q} \, x \right) - x \\
\label{eq:fq2}
& = &
\frac{1}{\log_2 q}
\left(
\entropy \left( \frac{q{-}1}{q} \, x \right)
- x \cdot \entropy(1/q)
\right) ;
\end{eqnarray}
this function will be of interest since
\begin{equation}
\label{eq:etaq}
\eta_q(\varepsilon)
\stackrel{\textrm{(\ref{eq:eta})}}{=}
(\log_2 q) \cdot f_q(1{-}\varepsilon) .
\end{equation}
We have $f_q(0) = f_q(1) = 0$ which, when combined
with (strict) concavity and continuity, implies that
the minimum of $f_q$
on the interval $[\varepsilon,1{-}\varepsilon]$
is attained at one of the boundaries:
\[
\min_{x \in [\varepsilon,1{-}\varepsilon]} f_q(x) 
= \min \left\{ f_q(\varepsilon), f_q(1{-}\varepsilon) \right\} > 0 .
\]
In fact, for any $\varepsilon \in [0,1/2]$
we always have $f_q(\varepsilon) \ge f_q(1{-}\varepsilon)$,
making $f_q(1{-}\varepsilon)$ the minimum.
To verify this,
we consider the following function 
$\varepsilon \mapsto \varphi_q(\varepsilon)$
on $[0,1/2]$:
\[
\varphi_q(\varepsilon) 
= \entropy \left( \frac{q{-}1}{q} \, (1{-}\varepsilon) \right)
- \entropy \left( \frac{q{-}1}{q} \, \varepsilon \right) .
\]
Differentiating twice with respect to~$\varepsilon$ yields
\[
\frac{d^2 \varphi_q(\varepsilon)}{d \varepsilon^2}
= \frac{q{-}1}{\ln 2}
\cdot
\frac{1 - 2\varepsilon}{\varepsilon (1-\varepsilon)
\left( 1 + (q{-}1) \varepsilon \right)
\left( q - (q{-}1) \varepsilon \right)} ,
\]
which is positive for every $\varepsilon \in (0,1/2)$.
Hence~$\varphi_q$
is convex on $[0,1/2]$ and, since
$\varphi_q(0) = \entropy(1/q)$
and $\varphi_q(1/2) = 0$, we get that it is bounded from above
by the line
$\varepsilon \mapsto (1 - 2 \varepsilon) \cdot \entropy(1/q)$:
\[
\varphi_q(\varepsilon)
\le
(1 - 2 \varepsilon) \cdot \entropy(1/q) .
\]
Therefore,
\[
f_q(1{-}\varepsilon) -
f_q(\varepsilon)
\stackrel{\textrm{(\ref{eq:fq2})}}{=}
\frac{1}{\log_2 q}
\bigl( \varphi_q(\varepsilon) -
(1 - 2 \varepsilon) \cdot \entropy(1/q) \bigr) \le 0 .
\]

Now, let~$\code$ be $(\tau,L)$-list decodable
where~$L$ and~$\tau$
satisfy~(\ref{eq:L<q-1L}) and~(\ref{eq:L<q-1tau}),
and suppose in addition that $L \ge q-1$.
By Theorem~\ref{thm:SpherePacking}, the inequality~(\ref{eq:V}),
and the monotonicity of $\Entropy_q(\cdot)$ we must have
\begin{eqnarray}
1 - R
& \ge &
\Entropy_q \left( \frac{L}{L+1} (1 - R \right)
- \frac{\log_q \left( L \sqrt{2n} \right)}{n} \nonumber \\
\label{eq:1-R}
& \!\!\!\!\stackrel{L \ge q-1}{\ge}\!\!\!\! &
\Entropy_q \left( \frac{q-1}{q} (1 - R \right)
- \frac{\log_q \left( L \sqrt{2n} \right)}{n} .
\quad
\end{eqnarray}
Thus, for $R \in [\varepsilon,1{-}\varepsilon]$,
\begin{eqnarray*}
f_q(1{-}\varepsilon)
& = &
\min_{x \in [\varepsilon,1{-}\varepsilon]} f_q(x) 
\; \le \; f_q(1-R) \\
& \!\stackrel{\textrm{(\ref{eq:fq1})+(\ref{eq:1-R})}}{\le}\! &
\frac{\log_q \left( L \sqrt{2n} \right)}{n}
\;\; \stackrel{\textrm{(\ref{eq:L<q-1L})}}{<} \;\;
\frac{\eta_q(\varepsilon)}{\log_2 q} ,
\end{eqnarray*}
thereby contradicting~(\ref{eq:etaq}).
Hence, we must have $L < q-1$.

Finally, we show that $\eta_q(\varepsilon)$ is bounded from below
by $\eta_2(\varepsilon)$ (which is positive and independent of~$q$).
We consider the bivariate function
$(\varepsilon,q) \mapsto \eta_q(\varepsilon)$
as if~$q$ were real and differentiate with respect to~$q$ to obtain
\[
\frac{\partial \eta_q(\varepsilon)}{\partial q}
= \frac{1{-}\varepsilon}{q^2}
\cdot \log_2 \left( \frac{q}{1{-}\varepsilon} - (q{-}1) \right) .
\]
This derivative is positive for any $\varepsilon \in (0,1)$ and, so,
$\eta_q(\varepsilon) \ge \eta_2(\varepsilon)$ for any field size~$q$.
\end{proof}

\begin{proof}[Proof of Lemma~\ref{lem:nesting}]
The lemma is immediate when $\ell \le t$, so we assume
hereafter in the proof that $\ell < \binom{w}{t} / \binom{w-s}{t}$.
We construct the set~$X$ using the following iterative procedure.
We start with $X \leftarrow \emptyset$,
$Y \leftarrow [w]$,
and $\LL \leftarrow \{ J_m \}_{m \in [\ell]}$,
and for $i \leftarrow 1, 2, \ldots, t$,
we select to $X$ an element $y \in Y$ for which the size of
\begin{equation}
\label{eq:containing}
\bigl\{ J_m \in \LL \,:\, y \in J_m \bigr\}
\end{equation}
is the largest (namely, there is no element in~$Y$ that is contained
in more subsets $J_m \in \LL$ than~$y$).
We then remove from $\LL$ the subsets $J_m$ that contain~$y$
and remove~$y$ from~$Y$.
We will show that after no more than~$t$ iterations,
the set~$\LL$ becomes empty.

For $i \in [0:t]$, let $\ell_i$ denote the size of $\LL$
right after the $i$th iteration, with $\ell_0 = \ell$.
Right after that iteration we have
\[
\sum_{J_m \in \LL} |J_m| \ge \ell_i \cdot s ,
\]
which means that during the $(i+1)$st iteration,
the size of~(\ref{eq:containing}) for a maximizing~$y$
is bounded from below by:
\[
\Bigl\lceil \frac{1}{w-i} \sum_{J_m \in \LL} |J_m| \Bigr\rceil
\ge \left\lceil \frac{\ell_i \cdot s}{w-i} \right\rceil .
\]
This leads to the inequality
\[
\ell_{i+1} \le \ell_i
- \left\lceil \frac{\ell_i \cdot s}{w-i} \right\rceil
= \left\lfloor \ell_i \cdot \frac{w-i-s}{w-i} \right\rfloor
\]
and, ignoring the integer truncation, we get the upper bound:
\[
\ell_t \le \ell_0 \cdot \prod_{i=0}^{t-1} \frac{w-i-s}{w-i}
= \ell \cdot \binom{w-s}{t} \biggm/ \binom{w}{t} .
\]
But $\ell < \binom{w}{t} / \binom{w-s}{t}$ and, so $\ell_t < 1$,
which means that $\ell_t = 0$.
\end{proof}

\begin{proof}[Proof of Lemma~\ref{lem:vartheta}]
The case $k = 1$ is easily verified, so
we assume hereafter that $k > 1$.

Let $w \in [n{-}k{-}2]$; then
\[
n-w \ge k+2 \ge k + 1 + \frac{1}{k-1} = \frac{k^2}{k-1}
\]
and, so,
\[
n-w \le k(n{-}k{-}w) .
\]
The latter inequality, in turn, is equivalent to
\[
(n{-}k{-}w)^2 \le (n{-}k{-}w{-}1)(n-w)
\]
which, in turn, is equivalent to
\[
(n{-}k{-}w) \binom{n-w-1}{k} \le (n{-}k{-}w{-}1) \binom{n-w}{k} .
\]
Making the inequality strict by adding~$1$
to the right-hand side yields
\[
(n{-}k{-}w) \binom{n-w-1}{k} < (n{-}k{-}w{-}1) \binom{n-w}{k} + 1 ,
\]
which can be equivalently written as
\begin{eqnarray*}
\lefteqn{
(n{-}k{-}w) 
\left( \binom{n-w-1}{k} - (n{-}k{-}w) \right)
} \makebox[5ex]{} \\
& < &
(n{-}k{-}w{-}1) \left( \binom{n-w}{k} - (n{-}k{-}w{+}1) \right) .
\end{eqnarray*}
Finally, dividing both sides by $(n{-}k{-}w)(n{-}k{-}w{+}1)$
yields
\[
\vartheta_{w+1} < \vartheta_w .
\]
\end{proof}

\begin{proof}[Proof of Lemma~\ref{lem:L=q=2}]
The ``if'' part follows from
Examples~\ref{ex:LMDSn-k=2} and~\ref{ex:LMDSk=1}.
Turning to the ``only if'' part, let~$\code$ be
a linear $[n,k]$ code over~$F$ where $2 \le k \le n-2$
and suppose, without loss of generality, that~$\code$
has a $k \times n$ generator matrix of the form
$G = \left( I \,|\, A \right)$.
We show that~$\code$ cannot be $2$-MDS by distinguishing
between several cases, according to the number of zero entries
in~$A$. Denote by $\bldg_i$ the $i$th row of~$G$.

\emph{Case 1: $A$ contains at least two zero entries
(say, within the first two rows).}
Take
$\blde_0 = \bldzero$,
$\blde_1 = \bldg_1$, and
$\blde_2 = \bldg_2$.
These three vectors are in the same (trivial) coset~$\code$ and
$\sum_{m \in [0:2]} \weight(\blde_m) \le 2(n-k)$.

\emph{Case 2: $A$ contains one zero entry
(say, the last entry in the first row).}
Take
$\blde_0 = \left( \bldzero_k \,|\, \bldone_{n-k-1} \,|\, 0 \right)$
(with~$k$ leading zeros),
$\blde_1 = \blde_0 + \bldg_1$,
and
$\blde_2 = \blde_0 + \bldg_2$.
These three vectors are in the same coset of~$\code$ and
$\sum_{m \in [0:2]} \weight(\blde_m) = n-k+2 \le 2(n-k)$.

\emph{Case 3: $A$ contains no zero entries (i.e., it is all-$1$s).}
Take
$\blde_0 = \left( \bldzero_k \,|\, \bldone_{n-k} \right)$,
$\blde_1 = \blde_0 + \bldg_1$,
and
$\blde_2 = \blde_0 + \bldg_2$.
Again, these three vectors are in the same coset of~$\code$ and
$\sum_{m \in [0:2]} \weight(\blde_m) = n-k+2 \le 2(n-k)$.
\end{proof}

\begin{proof}[Proof of Theorem~\ref{thm:almostall2-MDS}]
We first recall that for
a uniformly distributed random $(n-k) \times n$ matrix
$\randomH$ over~$F$ we have\footnote{%
We use boldface letters for random matrices and normal font
for their realizations.}
(see~\cite[pp.~444--445]{MS}):
\begin{eqnarray}
\lefteqn{
\Prob \bigl\{ \rank(\randomH) < n-k \bigr\}
= 1 - \prod_{i=0}^{n-k-1} \left( 1 - q^{i-n} \right)
} \makebox[15ex]{} \nonumber \\
\label{eq:singular}
& \le &
\sum_{i=0}^{n-k-1} q^{i-n}
< \frac{1}{q^k(q-1)} .
\end{eqnarray}

In what follows, we consider uniformly distributed
random linear $[n,k]$ codes over~$F$; yet, we will find
it more convenient to replace this probability space by
that of uniformly distributed random $(n-k) \times n$ matrices over~$F$,
which will stand for the parity-check matrices of the codes.
These two probability spaces would be the same if we conditioned
the matrices to be of full rank $n-k$; however, since we seek
only an upper bound on the probability of a code being non-$2$-MDS,
the effect of this conditioning amounts to a constant factor
(which approaches~$1$ as~$q$ increases). Specifically,
for any $(n-k) \times n$ matrix~$H$ over~$F$:
\begin{eqnarray*}
\lefteqn{
\Prob \bigl\{ \randomH = H  \,|\, \rank(\randomH) = n-k \bigr\}
} \makebox[10ex]{} \\
& \le &
\frac{\Prob \left\{ \randomH = H \right \}}%
              {\Prob \left\{ \rank(\randomH) = n-k \right\}} \\
& \stackrel{\mathrm{(\ref{eq:singular})}}{<} &
\frac{1}{1 - q^{-k}/(q-1)} \cdot \Prob \left\{ \randomH = H \right\} .
\end{eqnarray*}
For a random $(n-k) \times n$ matrix $\randomH$,
we let $\code(\randomH) = \ker(\randomH)$ be the code over~$F$
which is defined by the parity-check matrix~$\randomH$.

Denote by $\Event_\MDS$ the event
that every set of $n-k$ columns in a random $(n-k) \times n$
matrix~$\randomH$ is linearly independent
(namely, that~$\randomH$ is a parity-check matrix of an MDS code).
A union bound yields the following upper bound on
the probability of the complement event:
\begin{equation}
\label{eq:notMDS}
\Prob \bigl\{ \overline{\Event_\MDS} \bigr\}
\stackrel{\mathrm{(\ref{eq:singular})}}{<}
\binom{n}{n-k} \cdot \frac{q^{-k}}{q-1} \le \frac{2^n}{q} .
\end{equation}
Thus, by Examples~\ref{ex:LMDSn-k=2}--\ref{ex:LMDSk=2,L=2},
the theorem holds when $\min \{ k, n-k \} \le 2$.
We assume hereafter in the proof that $3 \le k \le n-3$.

Fix a subset $J \subseteq [n]$ of size $|J| = w$
in the (nonempty) range~(\ref{eq:wrange})
and write $J' = [n] \setminus J$ and $r = n-k-w$.
Also, fix three nonzero row vectors
$\blde_0, \blde_1, \blde_2 \in F^{n-w}$, with entries indexed
by~$J'$, such that their supports are disjoint,
$\sum_{m \in [0:2]} \weight(\blde_m) \le 2r$,
and the first nonzero entry in $\blde_0$ equals~$1$.

Denote by $\Event_J$ the event that
$\rank \left( (\randomH)_J \right) = w$;
note that $\Event_\MDS \subseteq \Event_J$.
Also, let $\randomH^* = (\randomP \randomH)_{J'}$,
where~$\randomP$ is an $r \times (n-k)$ matrix
whose rows form a basis of the left kernel\footnote{%
The matrix $\randomP$ will be used only under conditioning
on $\Event_J$, in which case it indeed has $r$ rows.}
of $(\randomH)_J$;
we recall from the proof of Theorem~\ref{thm:puncturingL=2}
that $\randomH^*$ is the parity-check matrix
of the code $(\code(\randomH))_{J'}$,
which is obtained by puncturing $\code(\randomH)$
on the coordinate set~$J$.

Define
\begin{eqnarray*}
\lefteqn{
\PP(\blde_0,\blde_1,\blde_2;J)
} \makebox[2ex]{} \\
& = &
\Prob \bigl\{
\left(\randomH^* \blde_0^\top
= \randomH^* \blde_1^\top
= \randomH^* \blde_2^\top\right) \cap \Event_\MDS
\bigr\} ,
\end{eqnarray*}
which is the probability that
$\code(\randomH)$ is MDS
yet the three vectors $\blde_0$, $\blde_1$, and $\blde_2$
are in the same coset of the punctured code
$(\code(\randomH))_{J'}$.
We have:
\begin{eqnarray}
\lefteqn{
\PP(\blde_0,\blde_1,\blde_2;J)
} \makebox[0ex]{} \nonumber \\
& \le &
\Prob \left\{
\randomH^* \blde_0^\top
= \randomH^* \blde_1^\top
= \randomH^* \blde_2^\top \,|\, \Event_J \right\}
\cdot \Prob \left\{ \Event_J \right\} \nonumber \\
\label{eq:samecosetagain}
& = &
q^{-2r} \cdot \Prob \left\{ \Event_J \right\}
\le q^{-2r} ,
\end{eqnarray}
where the equality follows from the following two facts:
\begin{list}{}{\settowidth{\labelwidth}{\textrm{(a)}}}
\item[(a)]
the random $r \times (n-w)$ matrix~$\randomH^*$ is
uniformly distributed (even under the conditioning on $\Event_J$,
since none of the elements in~$J$ indexes
any column of $\randomH^*$), and---
\item[(b)]
the vectors $\blde_m$ have disjoint supports, and, so,
the respective syndromes $\randomH^* \blde_m^\top$
are statistically independent.
\end{list}

Let now $\Event_\light$ be the event that every puncturing
of $\code(\randomH)$ on any~$w$ coordinates
in the range~(\ref{eq:wrange}) results in a lightly-$2$-MDS code.
By a union bound we get:
\begin{eqnarray*}
\lefteqn{
\Prob \left\{ 
\overline{\Event_\light} \cap \Event_\MDS
\right\}
} \makebox[0ex]{} \\
& \le &
\sum_J
\sum_{(\blde_0,\blde_1,\blde_2)}
\PP(\blde_0,\blde_1,\blde_2;J) \\
& = &
\sum_{J}
\sum_{(J_0,J_1,J_2)}
\sum_{\genfrac{}{}{0pt}{}{(\blde_0,\blde_1,\blde_2)\,:}%
                                              {\Support(\blde_m) = J_m}}
\PP(\blde_0,\blde_1,\blde_2;J) ,
\end{eqnarray*}
where~$J$ ranges over all subsets of $[n]$
of size~$w$ in the range~(\ref{eq:wrange})
and $(J_0,J_1,J_2)$ ranges over all triples of nonempty disjoint
subsets of $J'$ such that $\sum_{m \in [0:2]} |J_m| \le 2r$.
Hence,
\begin{eqnarray}
\lefteqn{
\Prob \left\{ \overline{\Event_\light} \cap \Event_\MDS \right\}
} \makebox[5ex]{} \nonumber \\
& \stackrel{\mathrm{(\ref{eq:samecosetagain})}}{\le} &
\frac{q^{-2r}}{q-1}
\sum_J \sum_{(J_0,J_1,J_2)}
\prod_{m \in [0:2]} (q-1)^{|J_m|} \nonumber \\
& \le &
\frac{1}{q-1} \left( \frac{q-1}{q} \right)^{2r}
\sum_J \sum_{(J_0,J_1,J_2)} 1
\nonumber \\
\label{eq:Oterm}
& \le &
\frac{5^n}{q-1} \left( \frac{q-1}{q} \right)^{2r}
= O \left( 5^n/q \right) .
\end{eqnarray}
We conclude that
\begin{eqnarray*}
\lefteqn{
\Prob \left\{ \overline{\Event_\light \cap \Event_\MDS} \right\}
} \makebox[0ex]{} \\
& = &
\Prob \left\{ \overline{\Event_\light} \cup \overline{\Event_\MDS}
\right\} \\
& = &
\Prob \left\{ \overline{\Event_\light} \cap \Event_\MDS \right\}
+
\Prob \left\{ \overline{\Event_\MDS} \right\} \\
& \stackrel{\mathrm{(\ref{eq:notMDS})+(\ref{eq:Oterm})}}{=} &
O \left( 5^n/q \right) ,
\end{eqnarray*}
and the sought result follows from Theorem~\ref{thm:puncturingL=2}.
\end{proof}

\begin{remark}
\label{rem:almostall2-MDS}
The upper bound, $O(5^n/q)$, on the fraction of non-$2$-MDS codes
in Theorem~\ref{thm:almostall2-MDS} is not the best possible;
e.g., it can be improved when $k \ne n/2$, yet we omit
the details.\qed
\end{remark}

\section{Properties of the determinant of $M_\bldrho(\bldx)$}
\label{sec:M}

Given an integer $\rho \ge 3$,
let $\bldrho = (\rho_0,\rho_1,\rho_2)$ be a partition of $2\rho$
where $2 \le \rho_0 \le \rho_1 \le \rho_2 < \rho$
and let the subsets~$\Upsilon_0$, $\Upsilon_1$, and~$\Upsilon_2$
be defined as in~(\ref{eq:Upsilon}).
Let $\bldx = (x_\ell)_{\ell \in [2\rho]}$
be a vector of indeterminates
and define the matrix $M_\bldrho(\bldx)$
as in~(\ref{eq:M(x)}).
For each $m \in [0:2]$, write $\bldx_m = (\bldx)_{\Upsilon_m}$
and define the following
degree-$\rho_m$ polynomial in~$z$ over $F[\bldx_m]$:
\[
\sigma_m(z) = \sigma_m(z;\bldx_m)
= \prod_{\ell \in \Upsilon_m} (z - x_\ell)
= \sum_{j=0}^{\rho_m} \sigma_{m,j} z^{\rho_m-j} ,
\]
where $\sigma_{m,j} = \sigma_{m,j}(\bldx_m)$
is a degree-$j$ homogeneous polynomial in $F[\bldx_m]$
(in particular, $\sigma_{m,0} = 1$ and
$\sigma_{m,\rho_m} = \prod_{\ell \in \Upsilon_m} (-x_\ell)$).
We associate with $\sigma_m(z)$
the following $(\rho - \rho_m) \times \rho$ matrix over $F[\bldx_m]$,
\[
\newcommand{\sdots}{\raisebox{-0.5ex}{$\vdots$}}
E_\rho(\bldx_m) =
\left(
\arraycolsep0.48ex
\begin{array}{ccccccc}
\sigma_{m,0} & \sigma_{m,1} & \cdots & \sigma_{m,\rho_m} &0&\cdots&0 \\
0 & \sigma_{m,0} & \sigma_{m,1} & \cdots & \sigma_{m,\rho_m} &0&\vdots\\
\vdots  & \ddots        & \ddots       & \ddots & \cdots & \ddots & 0 \\
0&\cdots&0& \sigma_{m,0} & \sigma_{m,1} & \cdots & \sigma_{m,\rho_m}
\end{array}
\right) ,
\]
and define the $\rho \times \rho$ matrix
$S_\bldrho(\bldx)$ by stacking the matrices
$E_\rho(\bldx_m)$, $m \in [0:2]$:
\begin{equation}
\label{eq:stack}
S_\bldrho(\bldx) =
\renewcommand{\arraystretch}{1.3}
\left(
\begin{array}{c}
E_\rho(\bldx_0) \\ \hline
E_\rho(\bldx_1) \\ \hline
E_\rho(\bldx_2) \end{array}
\right)
\end{equation}
(note that the number of rows in $S_\bldrho(\bldx)$ is indeed
$\sum_{m \in [0:2]} (\rho - \rho_m) = \rho$).
This matrix can be seen as the generalization to three polynomials---%
namely, $\sigma_0(z)$, $\sigma_1(z)$, and $\sigma_2(z)$---%
of the Sylvester matrix of two polynomials;
the determinant of the Sylvester matrix equals the resultant
of the two polynomials~\cite[\S 9.2]{Zippel}.

We have the following theorem,
which allows us to use the matrix $S_\bldrho(\bldx)$
when applying Lemma~\ref{lem:lightly}
to test if a given GRS code is lightly-$2$-MDS.

\begin{theorem}
\label{thm:Sylvester}
Using the notation above,
the following two conditions are equivalent for every vector
$\bldalpha = (\alpha_\ell)_{\ell \in [2\rho]}$
over any extension field of~$F$.
\begin{list}{}{\settowidth{\labelwidth}{\textit{(ii)}}}
\itemsep.5ex
\item[(i)]
$\det(M_\bldrho(\bldalpha)) \ne 0$.
\item[(ii)]
$\det(S_\bldrho(\bldalpha)) \ne 0$,
and for every $m \in [0:2]$, the entries of
$(\bldalpha)_{\Upsilon_m}$ are all distinct.
\end{list}
\end{theorem}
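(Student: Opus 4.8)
The plan is to read off $\det(M_{\bldrho}(\bldalpha))\neq 0$ from the triviality of the left kernel of the square matrix $M_{\bldrho}(\bldalpha)$. Write a length-$2\rho$ row vector as $(\bldr\mid\bldu)$ with $\bldr,\bldu\in (F')^{\rho}$, and identify $\bldr=(r_0,\dots,r_{\rho-1})$ with the polynomial $r(z)=\sum_{i=0}^{\rho-1}r_iz^i$, and similarly $\bldu\leftrightarrow u(z)$. Inspecting the three column blocks of $M_{\bldrho}(\bldalpha)$ (as in~(\ref{eq:M(x)})) shows that $(\bldr\mid\bldu)\,M_{\bldrho}(\bldalpha)=\bldzero$ holds exactly when $r(\alpha_\ell)=0$ for all $\ell\in\Upsilon_1$, $u(\alpha_\ell)=0$ for all $\ell\in\Upsilon_2$, and $(r+u)(\alpha_\ell)=0$ for all $\ell\in\Upsilon_0$. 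First I would dispose of the degenerate situation: if $\alpha_\ell=\alpha_{\ell'}$ for two distinct indices $\ell,\ell'$ belonging to the same $\Upsilon_m$, then $M_{\bldrho}(\bldalpha)$ has two equal columns, so $\det(M_{\bldrho}(\bldalpha))=0$ while condition~(ii) likewise fails; hence (i) and (ii) are both false. From here on one may assume that, for each $m\in[0:2]$, the entries of $(\bldalpha)_{\Upsilon_m}$ are distinct.

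Under this distinctness assumption, a polynomial of degree $<\rho$ that vanishes at every $\alpha_\ell$ with $\ell\in\Upsilon_m$ is precisely a multiple of $\sigma_m(z)=\prod_{\ell\in\Upsilon_m}(z-\alpha_\ell)$. Thus the kernel conditions become $r=\sigma_1 a$, $u=\sigma_2 b$, and $r+u=\sigma_0 c$ for polynomials $a,b,c$ with $\deg a<\rho-\rho_1$, $\deg b<\rho-\rho_2$, $\deg c<\rho-\rho_0$ (the bound on $c$ is automatic from $\deg(\sigma_1 a+\sigma_2 b)<\rho$, and $(\bldr\mid\bldu)=\bldzero$ if and only if $(a,b,c)=(0,0,0)$, since $\sigma_0,\sigma_1,\sigma_2$ are nonzero). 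Consequently, $\det(M_{\bldrho}(\bldalpha))\neq 0$ if and only if the only solution of the syzygy $\sigma_1 a+\sigma_2 b-\sigma_0 c=0$ within those degree bounds is $a=b=c=0$. Counting coefficients, this is a square homogeneous linear system: the number of unknowns is $(\rho-\rho_1)+(\rho-\rho_2)+(\rho-\rho_0)=\rho$, and there are $\rho$ equations, one for each coefficient of $z^0,\dots,z^{\rho-1}$.

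It remains to match the coefficient matrix of this system with $S_{\bldrho}(\bldalpha)$. The columns attached to the coefficients of $a$ form the staggered block whose rows are the coefficient sequences of $\sigma_1,z\sigma_1,\dots$, i.e.\ a copy of $E_{\rho}(\bldx_1)$; similarly the $b$-columns give $E_{\rho}(\bldx_2)$ and the $c$-columns give $E_{\rho}(\bldx_0)$ up to an overall sign $(-1)^{\rho-\rho_0}$. Permuting these three blocks into the order $(\Upsilon_0,\Upsilon_1,\Upsilon_2)$, transposing, and---depending on how one indexes columns by monomials---possibly reversing the monomial ordering, are all operations that scale the determinant by a nonzero constant; after them the matrix is exactly $S_{\bldrho}(\bldalpha)$ from~(\ref{eq:stack}). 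Hence the syzygy system has only the trivial solution if and only if $\det(S_{\bldrho}(\bldalpha))\neq 0$, and, combined with the standing distinctness assumption, this is precisely condition~(ii); in the non-distinct case both (i) and (ii) are false. Alternatively, one can invoke directly that $S_{\bldrho}(\bldalpha)$ is nonsingular exactly when $\sigma_0,\sigma_1,\sigma_2$ admit no nontrivial syzygy within the stated degree bounds---the defining property of a generalized resultant matrix, cf.~\cite[\S 9.2]{Zippel}.

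I expect the one genuine friction point to be this last identification. Because the paper writes $\sigma_m(z)=\sum_j\sigma_{m,j}z^{\rho_m-j}$, the staggered rows of $E_{\rho}(\bldx_m)$ represent either $z^i\sigma_m(z)$ or its reciprocal depending on the monomial indexing of the columns, and the $\sigma_0$-block acquires a sign; keeping this bookkeeping straight is the main thing to get right. It does not affect the conclusion, however, since the whole argument uses only whether the relevant determinants vanish, and every reordering, reversal, or scalar rescaling involved changes a determinant merely by a nonzero factor.
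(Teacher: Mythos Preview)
Your argument is correct and follows essentially the same route as the paper's proof: both read off the left kernel of $M_\bldrho(\bldalpha)$ as a pair of polynomials of degree~$<\rho$ with prescribed vanishing sets, use the within-block distinctness to factor out the $\sigma_m$'s, and then identify the resulting syzygy $\sum_m b_m\sigma_m=0$ (your $\sigma_1 a+\sigma_2 b-\sigma_0 c=0$) with the left kernel of $S_\bldrho(\bldalpha)$. The paper makes the last identification slightly more directly by stipulating the coefficient ordering (free coefficient \emph{last} in each $\bldb_m$), whereas you argue that any discrepancy in ordering or sign only changes the determinant by a nonzero factor; both are fine.
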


\begin{proof}
Let~$\Phi$ be an extension field of~$F$ which contains
the entries of~$\bldalpha$ and,
for $m \in [0:2]$, write $\bldalpha_m = (\bldalpha)_{\Upsilon_m}$.
Clearly, if there are two identical entries in $\bldalpha_m$
for some $m \in [0:2]$ then $\det(M_\bldrho(\bldalpha)) = 0$.
Thus, we assume hereafter in the proof that for each $m \in [0:2]$,
all the entries of $\bldalpha_m$ are distinct.
We denote by $\Phi_\rho[z]$ the set of polynomials
(in the indeterminate~$z$) of degree less than~$\rho$ over~$\Phi$.

The condition $\det(M_\bldrho(\bldalpha)) = 0$
is equivalent to having two row vectors,
$\blda_1, \blda_2 \in \Phi^\rho$, not both zero, such that
\[
\left(\blda_1 \,|\, \blda_2 \right)
M_\bldrho(\bldalpha) = \bldzero .
\]
This condition, in turn, is equivalent to having two polynomials,
$a_1(z), a_2(z) \in \Phi_\rho[z]$, not both zero, such that
\begin{equation}
\label{eq:Sylvester1}
a_0(\alpha_\ell) + a_1(\alpha_\ell) = 0 , \quad \ell \in \Upsilon_0
\end{equation}
and, for $m = 1, 2$:
\begin{equation}
\label{eq:Sylvester2}
a_m(\alpha_\ell) = 0 , \quad \ell \in \Upsilon_m
\end{equation}
(namely, $\blda_m$ is the vector of coefficients of $a_m(z)$, with
the first entry in $\blda_m$ being the free coefficient of $a_m(z)$).
Conditions~(\ref{eq:Sylvester1}) and~(\ref{eq:Sylvester2})
are equivalent to having three polynomials,
$a_0(z), a_1(z), a_2(z) \in \Phi_\rho[z]$, not all zero, such that
\[
\sum_{m \in [0:2]} a_m(z) = 0
\]
and, for $m \in [0:2]$:
\[
a_m(\alpha_\ell) = 0 , \quad \ell \in \Upsilon_m .
\]
The latter equation means
that $\sigma_m(z;\bldalpha_m)$ is a divisor of $a_m(z)$
in $\Phi[z]$, namely, for each $m \in [0:2]$ there exists
$b_m(z) \in \Phi_{\rho-\rho_m}(z)$ such that
\[
a_m(z) = b_m(z) \cdot \sigma_m(z;\bldalpha_m) .
\]
We conclude that the condition $\det(M_\bldrho(\bldalpha)) = 0$
is equivalent to having polynomials
$b_m(z) \in F_{\rho-\rho_m}(z)$, $m \in [0:2]$, not all zero, such that
\[
\sum_{m \in [0:2]} b_m(z) \cdot \sigma_m(z;\bldalpha_m) = 0 .
\]
Denoting by $\bldb_m$ the vector of coefficients
(in $\Phi^{\rho-\rho_m}$) of $b_m(z)$,
with the free coefficient being the \emph{last} entry in $\bldb_m$,
the last equation can be written in vector form as
\[
\sum_{m \in [0:2]} \bldb_m \cdot E_\rho(\bldalpha_m) = \bldzero ,
\]
or as
\[
\left(\bldb_0 \,|\, \bldb_1 \,|\, \bldb_2 \right)
S_\bldrho(\bldalpha) = \bldzero ,
\]
where $\bldb_0, \bldb_1, \bldb_2$ are not all zero.
Yet this can hold if and only if $\det(S_\bldrho(\bldalpha)) = 0$.
\end{proof}

It follows from Theorem~\ref{thm:Sylvester} that
$\bldalpha \; (\in \Phi^{2 \rho})$ is
a root of $\det(M_\bldrho(\bldx)) \; (\in F[\bldx])$
if and only if it is a root of
\[
\Gamma_\bldrho(\bldx) = \det(S_\bldrho(\bldx)) \cdot
\prod_{m \in [0:2]}
\prod_{\genfrac{}{}{0pt}{}{\ell,\ell' \in \Upsilon_m \,:}{\ell > \ell'}}
(x_\ell - x_{\ell'}) .
\]
In particular, $\det(M_\bldrho(\bldx))$, as an element of $F[\bldx]$,
is divisible by $x_\ell - x_{\ell'}$
for every two indexes $\ell \ne \ell'$ in the same subset~$\Upsilon_m$.
Based on numerical results, we conjecture that
\begin{equation}
\label{eq:conjecture}
\det(M_\bldrho(\bldx)) 
= (-1)^{\rho (\rho_1 + 1)} \cdot \Gamma_\bldrho(\bldx) .
\end{equation}

One evidence that supports this conjecture is that
both $\det(M_\bldrho(\bldx))$ and $\Gamma_\bldrho(\bldx)$
have the same total degree.
Specifically, we pointed out in Section~\ref{sec:GRSnotation}
that the Leibniz expansion of $\det(M_\bldrho(\bldx))$
results in a sum of monomials~(\ref{eq:monomial})
that satisfy conditions~(R1) and~(R2) and, therefore, each
has total degree $\rho(\rho-1)$.
As for the degree of $\Gamma_\bldrho(\bldx)$,
we have the following result.

\begin{lemma}
\label{lem:degSylvester}
$\det(S_\bldrho(\bldx))$
is a homogeneous polynomial in $F[\bldx]$ of total degree
\[
\rho(\rho-1) - \sum_{m \in [0:2]} \binom{\rho_m}{2}
= \rho^2 - \frac{1}{2} \sum_{m \in [0:2]} \rho_m^2 .
\]
\end{lemma}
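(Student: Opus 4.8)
The plan is to read the degree off directly from the banded structure of each block $E_\rho(\bldx_m)$, so that homogeneity of $\det(S_\bldrho(\bldx))$ comes for free and the degree computation reduces to pure bookkeeping.

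First I would fix notation for the entries. Writing $\bldx_m = (\bldx)_{\Upsilon_m}$, recall that $\sigma_{m,j} = \sigma_{m,j}(\bldx_m)$ is homogeneous of degree $j$ in $\bldx_m$, where we adopt the convention that $\sigma_{m,j} = 0$ when $j < 0$ or $j > \rho_m$, the zero polynomial being regarded as homogeneous of every degree. The block $E_\rho(\bldx_m)$ has $\rho - \rho_m$ rows and $\rho$ columns, and its entry in local row $a \in [\rho - \rho_m]$ and column $c \in [\rho]$ is $\sigma_{m,c-a}$. Since $\sum_{m \in [0:2]} (\rho - \rho_m) = 3\rho - 2\rho = \rho$, the stacked matrix $S_\bldrho(\bldx)$ in~(\ref{eq:stack}) is $\rho \times \rho$; I would index its rows by pairs $(m,a)$ with $m \in [0:2]$ and $a \in [\rho - \rho_m]$, so that the entry of $S_\bldrho(\bldx)$ in row $(m,a)$ and column $c$ is $\sigma_{m,c-a}$, a polynomial that is homogeneous of degree $c - a$.

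Next I would invoke the elementary fact that if a square matrix $A = (A_{ij})$ over $F[\bldx]$ has the property that $A_{ij}$ is homogeneous of degree $r_i + c_j$ for some fixed integer sequences $(r_i)$ and $(c_j)$, then $\det A$ is homogeneous of total degree $\sum_i r_i + \sum_j c_j$: indeed, each term $\pm \prod_i A_{i,\pi(i)}$ in the Leibniz expansion is either zero or homogeneous of degree $\sum_i (r_i + c_{\pi(i)}) = \sum_i r_i + \sum_j c_j$, independently of the permutation~$\pi$. Applying this with $c_c = c$ for $c \in [\rho]$ and $r_{(m,a)} = -a$ gives that $\det(S_\bldrho(\bldx))$ is homogeneous of degree
\[
\sum_{c=1}^{\rho} c \;-\; \sum_{m \in [0:2]} \sum_{a=1}^{\rho - \rho_m} a
\;=\; \binom{\rho+1}{2} - \sum_{m \in [0:2]} \binom{\rho - \rho_m + 1}{2} .
\]

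Finally, the remaining step is a routine identity, which I would carry out using $\sum_{m \in [0:2]} (\rho - \rho_m) = \rho$ and $\sum_{m \in [0:2]} (\rho - \rho_m)^2 = 3\rho^2 - 4\rho^2 + \sum_{m \in [0:2]} \rho_m^2 = \sum_{m \in [0:2]} \rho_m^2 - \rho^2$; these give $\sum_{m \in [0:2]} \binom{\rho - \rho_m + 1}{2} = \tfrac{1}{2}\bigl( \sum_{m \in [0:2]} \rho_m^2 - \rho^2 + \rho \bigr)$, whence the degree above equals $\tfrac12(\rho^2 + \rho) - \tfrac12\bigl( \sum_{m \in [0:2]} \rho_m^2 - \rho^2 + \rho \bigr) = \rho^2 - \tfrac12 \sum_{m \in [0:2]} \rho_m^2$, and this in turn equals $\rho(\rho-1) - \sum_{m \in [0:2]} \binom{\rho_m}{2}$ by a one-line check using $\sum_{m \in [0:2]} \rho_m = 2\rho$. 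I do not expect a genuine obstacle here; the only point that needs a moment's care is the convention under which the (possibly vanishing) banded entries count as homogeneous of the prescribed degree. If one insists that being homogeneous of degree $d$ exclude the zero polynomial, it suffices to add that $\det(S_\bldrho(\bldx)) \not\equiv 0$, which follows from Theorem~\ref{thm:Sylvester} together with the existence, established in Section~\ref{sec:GRS}, of a code-locator vector $\bldalpha$ for which $M_\bldrho(\bldalpha)$ is nonsingular.
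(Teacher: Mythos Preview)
Your argument is correct and is essentially the same as the paper's: both compute the degree of each Leibniz term as $\sum_c c - \sum_{(m,a)} a = \binom{\rho+1}{2} - \sum_{m} \binom{\rho-\rho_m+1}{2}$ and then simplify. Your framing via the general ``$A_{ij}$ homogeneous of degree $r_i+c_j$'' observation is a mild abstraction of the paper's direct term-by-term computation, and your remark on non-vanishing is a nice addition that the paper leaves implicit.
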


\begin{proof}
A typical term in the Leibniz expansion of
$\det(S_\bldrho(\bldx))$
takes---up to a sign---the form
\begin{equation}
\label{eq:typicalterm}
\prod_{m \in [0:2]}
\prod_{i \in [\rho-\rho_m]}
\sigma_{m,j_m(i)-i} ,
\end{equation}
where the list
\[
\left(
(j_0(i))_{i \in [\rho-\rho_0]}
\,\Bigm|\,
(j_1(i))_{i \in [\rho-\rho_1]}
\,\Bigm|\,
(j_2(i))_{i \in [\rho-\rho_2]}
\right)
\]
forms a permutation on $[\rho]$.
The term~(\ref{eq:typicalterm}),
when expressed as a multivariate polynomial in
the entries of~$\bldx$, has total degree
\begin{eqnarray*}
\lefteqn{
\sum_{m \in [0:2]}
\sum_{i \in [\rho-\rho_m]}
(j_m(i) - i)
} \makebox[0ex]{} \\
& = &
\Bigl( \sum_{m \in [0:2]}
\sum_{i \in [\rho-\rho_m]}
j_m(i) \Bigr)
-
\Bigl( \sum_{m \in [0:2]}
\sum_{i \in [\rho-\rho_m]}
i \Bigr) \\
& = &
\Bigl( \sum_{j \in [\rho]} j \Bigr)
-
\Bigl( \sum_{m \in [0:2]}
\sum_{i \in [\rho-\rho_m]}
i \Bigr) \\
& = &
\binom{\rho+1}{2} - \sum_{m \in [0:2]} \binom{\rho-\rho_m+1}{2} \\
& = &
\rho^2 - \frac{1}{2} \sum_{m \in [0:2]} \rho_m^2 .
\end{eqnarray*}
\end{proof}

It follows from the lemma that the equality~(\ref{eq:conjecture})
holds (up to a multiplying scalar) if $\det(S_\bldrho(\bldx))$ is
absolutely irreducible (namely, it is irreducible over
any extension field of~$F$)~\cite[p.~4]{Shafarevich};
however, this is yet to be shown (or to be disproved).\footnote{%
In the case of Example~\ref{ex:GRSn-k=3} it is
absolutely irreducible.
Note that this is in contrast with the case of the resultant of
two polynomials, which is highly reducible:
it factors into linear terms over
the splitting field of the polynomials~\cite[p.~142]{Zippel}.
Similar behavior of determinants
is seen also in other problems (see~\cite{AS} or~\cite{AT}).}

The definitions of $M_\bldrho(\bldx)$
and $S_\bldrho(\bldx)$ can be generalized
in a straightforward manner to any $L \ge 2$.
Given a partition
$\bldrho = (\rho_0,\rho_1,\ldots,\rho_L)$ of $L \rho$ such that
$L \le \rho_0 \le \rho_1 \le \cdots \le \rho_L < \rho$,
the matrix $M_\bldrho(\bldx)$ will then have order
$(L \rho) \times (L \rho)$
and the matrix $S_\bldrho(\bldx)$,
which is obtained by stacking $L+1$ matrices
(instead of three) in~(\ref{eq:stack}),
will have order $\rho \times \rho$.
Theorem~\ref{thm:Sylvester} generalizes accordingly
and, therefore, by Lemma~\ref{lem:lightly},
we can use the matrix $S_\bldrho(\bldx)$
to test if a given GRS code is lightly-$L$-MDS.
The total degree of $\det(S_\bldrho(\bldx))$ is
\[
L \cdot \binom{\rho}{2} - \sum_{m \in [0:L]} \binom{\rho_m}{2}
= \frac{L}{2} \cdot \rho^2 - \frac{1}{2} \sum_{m \in [0:L]} \rho_m^2
\]
and, based on numerical evidence, we conjecture that
\[
\det(M_\bldrho(\bldx)) 
=
(-1)^s \cdot
\det(S_\bldrho(\bldx)) 
\cdot
\prod_{m \in [0:L]}
\prod_{\genfrac{}{}{0pt}{}{\ell,\ell' \in \Upsilon_m \,:}{\ell > \ell'}}
(x_\ell - x_{\ell'}) ,
\]
where
\[
s = \rho \cdot
\left( \sum_{\mathrm{odd} \; m \in [0:L]} (\rho - \rho_m) \right) .
\]

For a certain structure of assignments to the vector~$\bldx$,
the matrix $S_\bldrho(\bldx)$ also appears in~\cite[\S\S III--IV]{RV},
as part of the analysis of the burst-error
list decodability of (cyclic) Reed--Solomon codes.

\section*{Acknowledgment}

I would like to thank the authors of~\cite{GST}
for making me aware of their work.
\fi

\end{document}